\newcommand{\lyxdot}{.}
\numberwithin{equation}{section}
\numberwithin{figure}{section}
\theoremstyle{plain}
\newtheorem{thm}{\protect\theoremname}[section]
  \theoremstyle{plain}
  \newtheorem{lem}[thm]{\protect\lemmaname}
  \providecommand{\lemmaname}{Lemma}
\providecommand{\theoremname}{Theorem}
\begin{document}

\title{A Guide to the Bott Index and Localizer Index}

\author{Terry A. Loring}

\address{Department of Mathematics and Statistics, University of New Mexico,
Albuquerque, New Mexico 87131, USA}

\email{loring@math.unm.edu}

\keywords{Chern insulator, numerical linear algebra, Bott index, $K$-theory.}
\begin{abstract}
The Bott index is inherently global. The pseudospectal index is inherently
local, and so now the preferred name is the localizer index. We look
at these on a rather standard model for a Chern insulator, with an
emphasis how to program these effectively. We also discuss how to
tune the localizer index so it behaves like a global index.
\end{abstract}

\maketitle
\tableofcontents{}

\section{Introduction}

The most elegant real-space topological invariant to describe a Chern
insulator is computed on an infinite model, the index of a Fredholm
operator:
\begin{equation}
\mathrm{ind}\left(P\frac{X+iY}{\left|X+iY\right|}P+I-P\right).\label{eq:infinite-area-index}
\end{equation}
Here $X$ and $Y$ represent position operators and $P$ is the Fermi
projection derived via functional calculus from the Hamiltonian $H$.
This does not require translation invariance, just a gap in the spectrum
of $H$ containing the Fermi level. By the work of Bellissard, van
Elst, and Schulz-Baldes \cite{BellissardEtAl_NCG_QuantHall} we know
this equals the Chern number in the case of translation invariant
systems.

For numerical studies we must work with finite matrices. This can
be done in several ways. Some infinite area formulas can be approximated
by finite area calculations. For example, Prodan explains in \cite[\S 6]{ProdanTopInsNCG}
how a finite area approximation leads to a number that, while not
an integer, converges rapidly to the noncommutative Chern number.
Kitaev \cite[\S C.3]{kitaev2006anyons} defined a generalized notion
of the Chern number with a formula to be computed over an infinite
area. Several authors have noted that in many instances a finite-area
approximation converges well to Kitaev's generalized Chern number
\cite{bianco2011mapping,MitchellNashAmorphous}. This method has the
advantage of giving a local index. Thus two lattice models can be
melded on a linear interface and a local (non-integral) real-space
marker can be defined that seems to sensibly mark topological and
trivial areas \cite[Figure 4.d]{MitchellNashAmorphous}.

An alternate approach is to define, via $K$-theory, an integer associated
to a finite model, and attempt to prove that this integer coincides
with the infinite-area index in (\ref{eq:infinite-area-index}). The
$C^{*}$-algebra in this case is typically very elementary, just the
$n$-by-$n$ complex matrices $\boldsymbol{M}_{n}(\mathbb{C})$. In
proving things about such an index, one tends to need complicated
$C^{*}$-algebras defined by generators and relations \cite{Blackadar-shape-theory,EilersLoringContingenciesStableRelations}
and tools like $E$-theory \cite{ConnesHigsonAsymptoticMorphisms}.
Keeping the definitions simple should, in general, lead to fast algorithms.

The essence of this approach is to look to the old theory of what
$K$-theory can tell us about almost commuting matrices \cite{Loring-K-thryAsymCommMatrices,EilersLoringContingenciesStableRelations}.
For example of what was known by the 1990s, consider two unitary matrices
that almost commute. These are sometimes small perturbations of commuting
unitary matrices, sometimes not. It can be shown \cite[Theorem 6.14]{ELP-pushBusby}
that being close or far from commuting unitary matrices can be inferred
from the value of a $K$-theoretic invariant. There are ways to compute
this invariant that do not really look like $K$-theory \cite{ExelLoringInvariats},
and other formulas that are straight-forward computations involving
formulas for idempotents. Indeed, there are many, many formulas we
might choose \cite{EilersLoringContingenciesStableRelations}. 

Finite models force upon us a choice of boundary conditions. Actually,
first one selects a shape, with square being the \emph{de facto }choice.
A frequent choice in physics is to use periodic boundary conditions,
which implicitly turns the square into a torus. This is great mathematically,
being an enabler of the Fourier transform. However, this choice eliminates
edge modes, arguably the most attractive feature of a Chern insulator.
The other choice is working with open boundaries, where on a lattice
model the hopping terms in the infinite-area Hamiltonian are dropped
if they involve sites outside the finite patch. Mathematically, this
is just the obvious compression of the infinite-area Hamiltonian.
This leads to edge modes in the finite model, and is more realistic
than a torus model. This is actually an annoyance in some instances,
as the edge modes can make it harder to see bulk phenomenon in a small
system \cite{loring2018bulk}.

\section{The Clifford spectrum}

The Clifford spectrum gives a nice way to see the difference between
the Bott index and the localizer index. In both cases there is a topological
space associated to a finite system, but the spaces are very different.
All sorts of spaces can emerge as the Clifford spectrum of a finite
collection of Hermitian matrices. This is well established in string
theory, for example in \cite{berenstein2012matrix,sykora2016fuzzy}.
Kisil \cite{kisil1996mobius} observed earlier that the joint spectrum
of a few Hermitian matrices could be a surface.

For a periodic system, square with sides of length $L$, one cannot
easily work with the position operators $X$ and $Y$ as these will
have commutator norm with the Hamiltonian $H$ on the order of $L$.
Thus one considers ``periodic observables''
\[
U=e^{\frac{2\pi i}{L}X},\quad V=e^{\frac{2\pi i}{L}Y}
\]
which really correspond to four observables (Hermtian matrices)
\[
M_{1}=\cos\left(\frac{2\pi i}{L}X\right),\quad M_{2}=\sin\left(\frac{2\pi i}{L}X\right),\quad M_{3}=\cos\left(\frac{2\pi i}{L}Y\right),\quad M_{5}=\sin\left(\frac{2\pi i}{L}Y\right).
\]
One has then a fifth matrix 
\[
M_{5}=P
\]
where $P$ is the spectral projection for $H$ associated to energy
levels below the Fermi gap.

The Clifford spectrum of Hermitian matrices $M_{1},\dots.M_{d}$ (all
of the same size) is a subset $\Lambda(M_{1},\dots,M_{d})$ of $\mathbb{R}^{d}$,
defined as follows. One chooses nontrivial $\Gamma_{1},\dots,\Gamma_{d}$
that form a Clifford representation, here meaning 
\[
\Gamma_{j}^{\dagger}=\Gamma_{j},\quad\Gamma_{j}^{2}=I
\]
for all $j$ and 
\[
\Gamma_{j}\Gamma_{k}=-\Gamma_{k}\Gamma_{j}
\]
 whenever $j\neq k$. Then one forms the \emph{localizer} 
\[
L_{\boldsymbol{\text{\ensuremath{\lambda}}}}(M_{1},\dots,M_{d})=\sum\left(M_{j}-\lambda_{j}\right)\otimes\Gamma_{j}
\]
(called sometimes the localized Dirac operator \cite{sykora2016fuzzy}).
Then
\[
\boldsymbol{\lambda}\in\Lambda(M_{1},\dots,M_{d})\iff L_{\boldsymbol{\text{\ensuremath{\lambda}}}}(M_{1},\dots,M_{d})\text{ is singular}
\]
 defines the Clifford spectrum of $\left(M_{1},\dots.M_{d}\right)$
, and this is independent of the choice of $\Gamma_{j}$.

For reasonable $H$ (one needs some locality, boundedness and a reasonable
gap) when $L$ is large enough, we have $U$, $V$ and $P$ that satisfy
the following:\vspace*{\medskipamount}
\\
\hspace*{0.15\columnwidth}%
\fbox{\begin{minipage}[c]{0.7\columnwidth}%
\[
U^{\dagger}U\approx I,\quad UU^{\dagger}\approx I,\quad V^{\dagger}V\approx I,\quad VV^{\dagger}\approx I,
\]
\begin{equation}
UV=VU,\quad UP\approx PU,\quad VP\approx PV,\label{eq:periodic_relations}
\end{equation}
\[
P^{2}=P,\quad P^{\dagger}=P.
\]
\end{minipage}}\vspace*{\medskipamount}
\\
All this can be made ever so rigorous, as in \cite{LoringCstarRelations}.
What is critical is that if these relations were exact, they would
become equations in $\mathbb{R}^{5}$ that define a copy of the space
\[
\mathbb{T}\sqcup\mathbb{T},
\]
meaning two disjoint copies of the torus. 

What we have then is a fuzzy version of two copies of the torus, and
the Clifford spectrum will be a subset of $\mathbb{R}^{5}$ that is
close to the union of two tori. It is the topology of $\mathbb{T}\sqcup\mathbb{T}$
that is used in defining the Bott index. Perhaps, when the Bott index
is nonzero, and in a clean system on a regular lattice, the Clifford
spectrum of $M_{1},\dots,M_{5}$ is homeomorphic to $\mathbb{T}\sqcup\mathbb{T}$.
Proving this is out of reach, for now.

The localizer index is based on different topology, and it starts
with open boundaries. In this case, $H$ will almost commute with
$X_{0}=\tfrac{2}{L}X$ and $Y_{0}=\tfrac{2}{L}Y$. Notice \emph{we
cannot expect a gap in the spectrum of $H$}. One can spectrally flatten
the infinite-area Hamiltonian and compress that, but the result will
still be ungapped due to edge modes.

Let us assume the system is centered at $0$. The Clifford spectrum
of $(X_{0},Y_{0},H)$ will be some compact subset of $\mathbb{R}^{3}$.
The key relation here is
\[
L_{\boldsymbol{0}}(X_{0},Y_{0},P)\geq c,
\]
where $c>0$ is roughly that spectral gap in the infinite area Hamiltonian.
See \cite{L_S-B_finite_vol} for details on what we know about $c$.
We also have relations\vspace*{\medskipamount}
\\
\hspace*{0.15\columnwidth}%
\fbox{\begin{minipage}[c]{0.7\columnwidth}%
\[
-1\leq X_{0}\leq1,\quad-1\leq Y_{0}\leq1,
\]
\begin{equation}
X_{0}Y_{0}=Y_{0}X_{0},\quad X_{0}H\approx HX_{0},\quad Y_{0}H\approx HY_{0},\label{eq:open_relations}
\end{equation}
\[
-C\leq H\leq C
\]
\end{minipage}}\vspace*{\medskipamount}
\\
where $C$ is the norm of the infinite-area Hamiltonian. We have here
a fuzzy version of 
\[
S(C,c)=\left([-1,1]\times[-1,1]\times[-C,C]\right)\setminus B(c)
\]
where $B(c)$ is the open ball of radius $c$ at the origin. The pseudospectrum
$\Lambda(X_{0},Y_{0},H)$ is expected to be some manner of a surface
within $S(C,c)$. While $S(C,c)$ is homotopic to a sphere, the space
$\Lambda(X_{0},Y_{0},H)$ will be more complicated, and is very difficult
to compute. Still, it will have a $K$-theory class coming ultimately
from the $K$-theory of the sphere which will lead to the localizer
index.

If one uses the spectrally flattened infinite-area Hamiltonian and
compresses that to the finite square system, one has then a Hamiltonian
$H_{1}$ that has ``spectrum localized near $(0,0)$ equal to a fuzzy
$\{0,1\}$'' in the sense put forward in \cite{loring2018bulk}. A
naive expectation is that the Clifford spectrum of $(X_{0},Y_{0},H_{1})$
will be roughly two copies of the square. In the case of trivial insulators,
this is what one should to find. It is the edge states, when the system
is topological, that work to glue together the top and bottom square
to create more of a cube. This is all heurisitic, until we find betters
ways to compute the Clifford spectrum. In the case of the Bott index,
the periodic boundary conditions turn the top and bottom square each
into a torus. In the case of the localizer index, when the system
is topologically non-trivial, the edge states add to the Clifford
spectrum, forming walls that join the top and bottom squares into
something like a sphere.

\begin{figure}
\includegraphics[viewport=25bp 25bp 510bp 380bp,clip,scale=0.7]{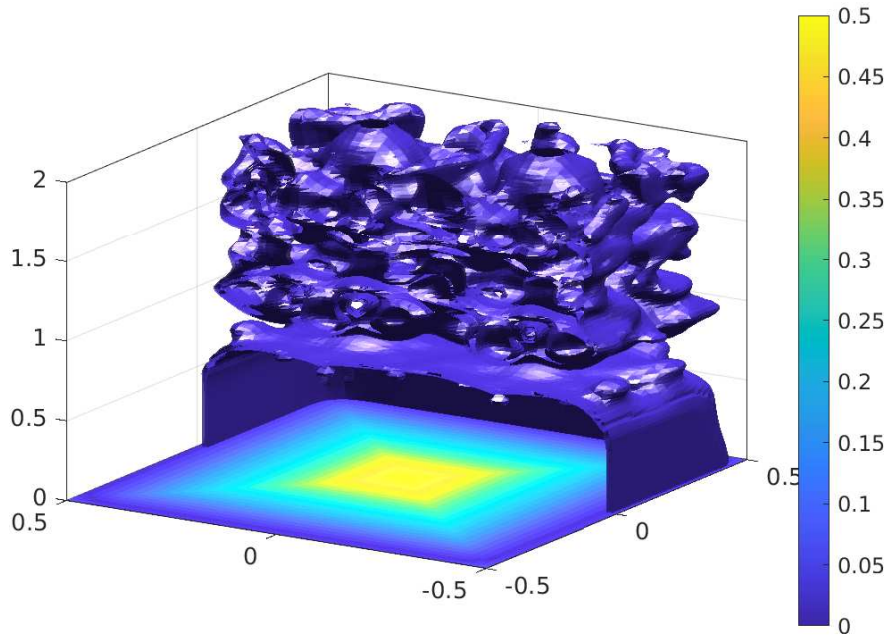}

\caption{Some of the pseuodospectrum of a round Chern insulator. Along the
x-y plane the gap $\left\Vert L_{\boldsymbol{\text{\ensuremath{\lambda}}}}(X_{0},Y_{0},H)^{-1}\right\Vert ^{-1}$
is indicated by color. Also shown are portions of some contour surfaces
for a few gap sizes very close to zero. \label{fig:Pseudospectrum_qc}}
\end{figure}

We can, with lots of computer time, compute the Clifford pseudospectrum
of $(X_{0},Y_{0},H)$. This is essentially a blurred out version of
the Clifford spectrum, and formally is the scalar-valued map on $3$-space
\[
\boldsymbol{\lambda}\mapsto\left\Vert L_{\boldsymbol{\text{\ensuremath{\lambda}}}}(X_{0},Y_{0},H)^{-1}\right\Vert ^{-1}.
\]
Where this is zero is the Clifford spectrum. We can more easily compute
where this is small, and get an estimate on where lives the Clifford
spectrum. 

Now that we are working with just three observables, we can specify
a choice for the $\Gamma_{j}$, specifically $\Gamma_{1}=\sigma_{x}$,
$\Gamma_{2}=\sigma_{y}$, $\Gamma_{3}=\sigma_{z}$. Then
\begin{equation}
L_{\boldsymbol{\text{\ensuremath{\lambda}}}}(M_{1},M_{2},M_{3})=\left(M_{1}-\lambda_{1}\right)\otimes\sigma_{x}+\left(M_{2}-\lambda_{2}\right)\otimes\sigma_{y}+\left(M_{3}-\lambda_{3}\right)\otimes\sigma_{z}\label{eq:Localizer_3Matrices}
\end{equation}
so 
\[
L_{\boldsymbol{\text{\ensuremath{\lambda}}}}(X_{0},Y_{0},H)=\left[\begin{array}{cc}
H-\lambda_{3} & \left(X_{0}-\lambda_{1}\right)-i\left(Y_{0}-\lambda_{2}\right)\\
\left(X_{0}-\lambda_{1}\right)+i\left(Y_{0}-\lambda_{2}\right) & -H+\lambda_{3}
\end{array}\right].
\]

Ideally, the reader would see this computed for the Hamiltonian considered
in the rest of the paper. As the author has consumed already considerable
resources from the Center for Advanced Research Computing at the university
of New Mexico, it seems prudent to offer the reader a different example,
which was computed earlier. Here the Hilbert space is built from a
square sample of a quasilattice, and the Hamiltonian is a variation
on a ``$p_{x}+ip_{y}$'' tight binding model, as in \cite{fulga2016aperiodic,loring2018bulk}.
A portion of the pseodospectrum is indicated in Figure~\ref{fig:Pseudospectrum_qc}.
The Hamiltonian is the compressed original Hamiltonian, not a compression
of the spectrally flattened Hamiltonian. The pseuodospectrum extends
up to about $\lambda_{3}=6$ and down to about $\lambda_{3}=-6$.

The localizer can have a very large gap even when the Hamiltonian
does not. This happens because, by design, the localizer can focus
on ``bulk states'' while ignoring edge states. This can happen even
with strong disorder, as is illustrated by the disorder-averaged study
in Section~\ref{sec:Localizer-as-global} and by the examination
of the localizer spectrum in \cite{V_S_S-B_HalfSignature}.

\begin{figure}
\includegraphics[viewport=25bp 25bp 510bp 380bp,clip,scale=0.7]{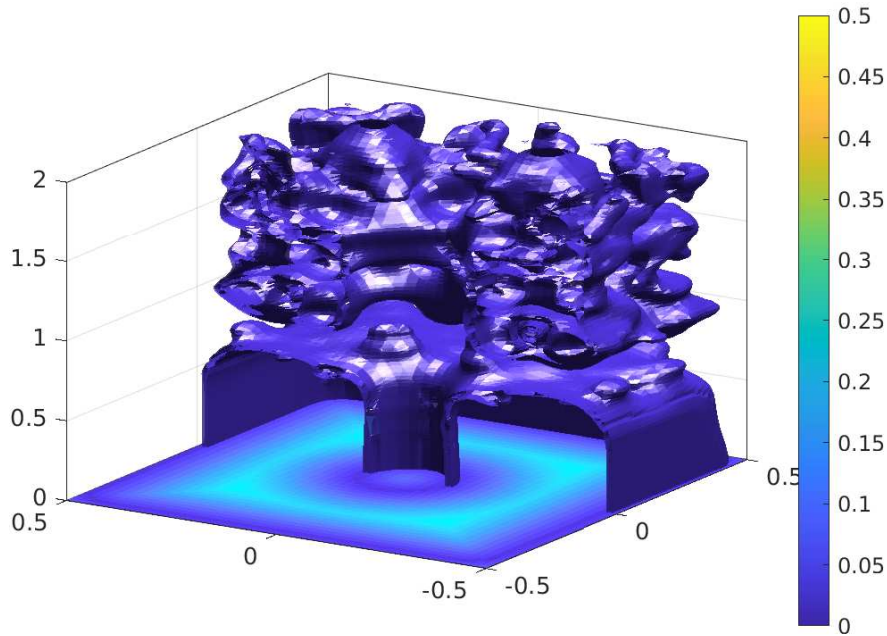}

\caption{Some of the pseuodospectrum of a round Chern insulator with a hole
in it. \label{fig:Pseudospectrum_qc-hole}}
\end{figure}

Unlike the Bott index, the localizer index is flexible in what geometry
is used. For example, one can use a square with a disk cut out of
the middle. In this case, there are additional edge modes and the
result is more like a torus than a sphere. See Figure~\ref{fig:Pseudospectrum_qc-hole}.

\section{Formulas for the Bott index}

The first step in finding an invariant for matrices with the relations
as in Equation~\ref{eq:periodic_relations} is to eliminate the second
torus (the one at high energy). This is the step of considering band-compressed
angular position operators
\[
PUP:P(\mathbb{C}^{n})\rightarrow P(\mathbb{C}^{n})
\]
and
\[
PVP:P(\mathbb{C}^{n})\rightarrow P(\mathbb{C}^{n}).
\]
A more precise notation is $\left.(PUP)\right|_{P\mathbb{C}{}^{n}}$
and $\left.(PVP)\right|_{P\mathbb{C}{}^{n}}$. These are operators,
not matrices. To obtain matrices we must select a basis for the range
$P(\mathbb{C}^{n})$ of $P$. Assuming one obtains the Fermi projector
$P$ by doing a full diagonalization of the Hamiltonian $H\in\boldsymbol{M}_{n}(\mathbb{C})$
such a basis is then available at no further cost.

This is a sticking point. Taking a full diagonalization of $H$ forces
us to work with dense matrices. As we are dealing with approximate
relations, we could work with an approximation $P_{0}$ to $P$ that
is sparse. This is a standard procedure in quantum chemistry \cite{benzi2013decay}.
However, there are other parts to the Bott index formula where a sparse
approach is not at all clear. Also, we need sparse approximate spectral
flattening given only a mobility gap, which will be harder than the
case of a true gap. For now, the Bott index is a dense matrix formula,
which is its major limitation.

Suppose then that you have used an eigensolver on $H$ to find for
$\mathbb{C}^{n}$ a basis $\boldsymbol{q}_{1},\dots,\boldsymbol{q}_{n}$
of eigenvalues, with $\boldsymbol{q}_{1},\dots,\boldsymbol{q}_{m}$
in the subspace below the Fermi level and $\boldsymbol{q}_{m+1},\dots,\boldsymbol{q}_{n}$
above. (In a disordered system, there may be no apparent gap, but
one still is selecting a Fermi level.) Then $\boldsymbol{q}_{1},\dots,\boldsymbol{q}_{m}$
is a basis for $P(\mathbb{C}^{n})$. The matrices, with respect to
this basis, for the operators $\left.(PUP)\right|_{P\mathbb{C}{}^{n}}$
and $\left.(PVP)\right|_{P\mathbb{C}{}^{n}}$ are
\begin{equation}
U_{1}=W^{\dagger}UW,\quad V_{1}=W^{\dagger}VW,\label{eq:smaller_matrices}
\end{equation}
where
\[
W=\left[\begin{array}{ccc}
\brokenvert &  & \brokenvert\\
\brokenvert &  & \brokenvert\\
\boldsymbol{q}_{1} & \cdots & \boldsymbol{q}_{m}\\
\brokenvert &  & \brokenvert\\
\brokenvert &  & \brokenvert
\end{array}\right].
\]
Notice $W$ is a partial isometry. 

The smaller matrices $U_{1}$ and $V_{1}$ satisfy the more familiar
relations\vspace*{\medskipamount}
\\
\hspace*{0.15\columnwidth}%
\fbox{\begin{minipage}[c]{0.7\columnwidth}%
\[
U_{1}^{\dagger}U_{1}\approx I,\quad U_{1}U_{1}^{\dagger}\approx I,\quad V_{1}^{\dagger}V_{1}\approx I,\quad V_{1}V_{1}^{\dagger}\approx I,
\]
\begin{equation}
U_{1}V_{1}\approx V_{1}U_{1}\label{eq:periodic_smaller}
\end{equation}
\end{minipage}}\vspace*{\medskipamount}
\\
There is a theory on how to create a $K$-theoretic index for such
approximate relations \cite{EilersLoringContingenciesStableRelations}.
Such an index is never unique. All we can say is that for very small
commutators two such indices will agree. We can't usually say how
small the commutator must be for two such invariants to agree. Some
of these choices for invariants will make more sense physically than
others. Some will be faster to compute than others.

The general method here starts with a formula for a projector-valued
function on the torus, and 2-by-2 matrices suffice. For example 
\begin{equation}
(w,z)\mapsto\left[\begin{array}{cc}
f(z) & g(z)+ih(z)w\\
g(z)-ih(z)w & 1-f(z)
\end{array}\right]\label{eq:map_to_projection}
\end{equation}
for $|v|=|w|=1$. In fact, this need not take values only in projectors:
approximate projectors ($p^{2}\approx p)$ will be fine. Thus we assume
$fg\approx0$ and $f^{2}+g^{2}+h^{2}\approx f$. The formulas used
must make sense when applied to noncommuting, nonunitary matrices.
For example, one can take $f$, $g$ and $h$ to be Laurant polynomials
in $z$ (so trig polynomials in the argument of $z$). Then the $K$-theoretic
index is the number of eigenvalues above $\tfrac{1}{2}$, minus the
number below, of the matrix 
\[
\left[\begin{array}{cc}
f(V_{1}) & g(V_{1})+ih(V_{1})U_{1}\\
g(V_{1})-iU_{1}^{\dagger}h(V_{1}) & 1-f(V_{1})
\end{array}\right].
\]
This is a simplified version of what was called the trig method in
\cite{LorHastHgTe}.

There are many variation possible on this. Moreover one can work directly
with $U$, $V$ and a sparse approximation $P$ and use the theory
of \cite{EilersLoringContingenciesStableRelations} for the space
$\mathbb{T}\sqcup\mathbb{T}$. This might lead to faster algorithm
than is a available to the Bott index, in theory. 

An issue with all invariants using this method is that they do not
treat all parts of a square lattice equally. This is because the function
in (\ref{eq:map_to_projection}) is a continuous function from the
torus to the Bloch sphere. Every such map has at least one point on
the sphere where some nonzero area of the torus is smashed together
sent to a region of zero area on the sphere.

In the special case of the approximate relations in Equation~\ref{eq:periodic_smaller},
there is an elegant alternative to produce indices proven equivalent
(equal for small commutators) to an index as above. For the case of
almost commuting unitary matrices, this was proven a while back \cite{ExelLoringInvariats}.
The formula was modified to work for almost unitary matrices that
almost commuting more recently \cite{LorHastHgTe}. It also treats
$U_{1}$ and $V_{1}$ symmetrically, and treats all areas in the lattice
equally (if the lattice is treated as on a torus). This alternate
formula is what physicists now call the Bott index.

The Bott index uses the smaller matrices from Equation~\ref{eq:smaller_matrices}
and is
\begin{equation}
\mathrm{Bott}(U,V,P)=\Re\left(\frac{1}{2\pi i}\text{Tr}\left(\log\left(U_{1}V_{1}U_{1}^{\dagger}V_{1}^{\dagger}\right)\right)\right)\label{eq:Bott_Index_Formula}
\end{equation}
where $\Re$ indicates real part. It is guaranteed, in theory, to
makes sense, and be in integer, when $(\infty,0]$ is not in the spectrum
of $U_{1}V_{1}U_{1}^{\dagger}V_{1}^{\dagger}$. In practice eigenvalues
of $U_{1}V_{1}U_{1}^{\dagger}V_{1}^{\dagger}$ very near $(\infty,0]$
also cause trouble. This formula uses the matrix logarithm, and \emph{you
don't want to compute a matrix logarithm} as that takes a long time.
\begin{lem}
If $X$ is a matrix with no spectrum lying in the set $(-\infty,0)$,
then
\[
\Re\left(\frac{1}{2\pi i}\text{Tr}\left(\log\left(X\right)\right)\right)=\sum\Re\left(\frac{1}{2\pi i}\log\left(\frac{\lambda}{|\lambda|}\right)\right)
\]
where $\lambda_{1},\dots,\lambda_{n}$ are the eigenvalues of $X$
listed according to multiplicity. In all instances the usual branch
of log is assumed.
\end{lem}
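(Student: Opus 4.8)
The plan is to reduce the matrix identity to an eigenvalue-by-eigenvalue scalar identity by means of the holomorphic functional calculus. Since $\log(X)$ appears, $X$ is implicitly invertible, so together with the hypothesis $\sigma(X)\cap(-\infty,0)=\emptyset$ the spectrum of $X$ avoids the branch cut $(-\infty,0]$, and the principal logarithm is holomorphic on an open neighborhood of $\sigma(X)$; thus $\log(X)$ is the ordinary holomorphic functional calculus. I would then invoke the standard fact that for any function $f$ holomorphic near $\sigma(X)$ one has $\mathrm{Tr}(f(X))=\sum_{j}f(\lambda_{j})$, where $\lambda_{1},\dots,\lambda_{n}$ are the eigenvalues of $X$ repeated according to algebraic multiplicity. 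This follows from the Jordan canonical form: writing $X=SJS^{-1}$ with $J$ in Jordan form, $\mathrm{Tr}(f(X))=\mathrm{Tr}(f(J))$ since trace is similarity-invariant, $f$ applied to a Jordan block with eigenvalue $\lambda$ is upper triangular with $f(\lambda)$ in every diagonal slot, and the trace of an upper-triangular matrix is the sum of its diagonal. Applying this with $f=\log$ gives $\mathrm{Tr}(\log(X))=\sum_{j}\log(\lambda_{j})$, and by linearity of $\Re$ the left-hand side of the lemma equals $\sum_{j}\Re\!\left(\frac{1}{2\pi i}\log(\lambda_{j})\right)$.

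It then suffices to prove, for each eigenvalue $\lambda\notin(-\infty,0]$, the scalar identity
\[
\Re\!\left(\frac{1}{2\pi i}\log\lambda\right)=\Re\!\left(\frac{1}{2\pi i}\log\frac{\lambda}{|\lambda|}\right).
\]
Writing $\lambda=|\lambda|e^{i\theta}$ with $\theta\in(-\pi,\pi)$, which is legitimate precisely because $\lambda$ is not a non-positive real, we have $\log\lambda=\ln|\lambda|+i\theta$ and $\log(\lambda/|\lambda|)=i\theta$, so the two expressions inside $\Re$ differ by $\frac{1}{2\pi i}\ln|\lambda|=-\frac{i}{2\pi}\ln|\lambda|$, which is purely imaginary and therefore drops out under $\Re$. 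Equivalently, using $\Re\!\left(\frac{1}{2\pi i}w\right)=\frac{1}{2\pi}\Im(w)$, both sides equal $\frac{\theta}{2\pi}$. Summing over $j$ completes the argument.

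The only step that is not a one-line computation is the trace formula $\mathrm{Tr}(f(X))=\sum_{j}f(\lambda_{j})$, and there the point to be careful about is that the multiplicities must be the algebraic ones, matching the lemma's phrase ``listed according to multiplicity,'' and that the Jordan-block computation goes through even when $X$ is not diagonalizable; both are classical, so I do not expect a genuine obstacle. It may also be worth remarking that the hypothesis on the spectrum cannot be dispensed with: if $\sigma(X)$ met $(-\infty,0]$ the principal $\log$ would fail to be holomorphic near the spectrum and $\mathrm{Tr}(\log(X))$ would not be well defined.
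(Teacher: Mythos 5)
Your argument is correct and is essentially the same as the paper's, which simply cites the spectral mapping theorem for analytic functions as the key step; you have spelled out that step via the Jordan form and completed the scalar computation showing $\ln|\lambda|$ contributes only an imaginary term to $\frac{1}{2\pi i}\log\lambda$ and thus drops out under $\Re$. No gaps.
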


\begin{proof}
This follows easily from the spectral mapping theorem for analytic
functions.
\end{proof}
Notice we have no need for any eigenvectors of $DCD^{\dagger}C^{\dagger}$,
just the list of eigenvalues. After multiplying matrices to form the
commutator, you the calculate all the eigenvalues, take the scalar
logarithm of each, and add those. The heart of the algorithm is as
follows, when implemented in Matlab. 
\begin{quotation}
\noindent \texttt{U = W\textquotesingle {*}exp\_x{*}W;}

\noindent \texttt{V = W\textquotesingle {*}exp\_y{*}W; }

\noindent \texttt{T = eig(commutator); \% Just a list of eigenvalues}

\noindent \texttt{index = -sum(imag(log(T)))/(2{*}pi); }

\noindent \texttt{index = round(index);}
\end{quotation}
Here \texttt{W} is the partial matrix of eigenvectors and \texttt{exp\_x}
and \texttt{exp\_y} are the normalize exponentiation position matrices
$U$ and $V$. Also and \texttt{U} and \texttt{V} are what have been
denoted $U_{1}$ and $V_{1}$. See the file \texttt{oldBottNoGap.m}
in the supplementary files \cite{supplement}.

It could be that someone has computed the projector $P$ onto the
subspace of states below the gap, but not a orthonormal basis for
that space. This will happen if you use the methods of quantum chemistry
\cite{benzi2013decay} to approximately compute the Fermi projector
$P$. In that case, one can use the almost commuting matrices 
\begin{equation}
A=PUP+(I-P)\text{ and }B=PVP+(I-P)\label{eq:larger_matrices}
\end{equation}
and it can be shown that
\[
\mathrm{Bott}(U,V,P)=\Re\left(\frac{1}{2\pi i}\text{Tr}\left(\log\left(BAB^{\dagger}A^{\dagger}\right)\right)\right).
\]
Since $A$ and $B$ are larger than $U_{1}$ and $V_{1}$ then this
may end up as the slower method, but perhaps someone knows how to
compute the trace of the log of a sparse matrix quickly. Without a
method to deal with that trace of a matrix log, one is advised to
work only with $U_{1}$ and $V_{1}$.

\section{Extending an old study \label{sec:Extending-old_study}}

\begin{figure}
\includegraphics[viewport=20bp 0bp 410bp 275bp,clip,scale=0.6]{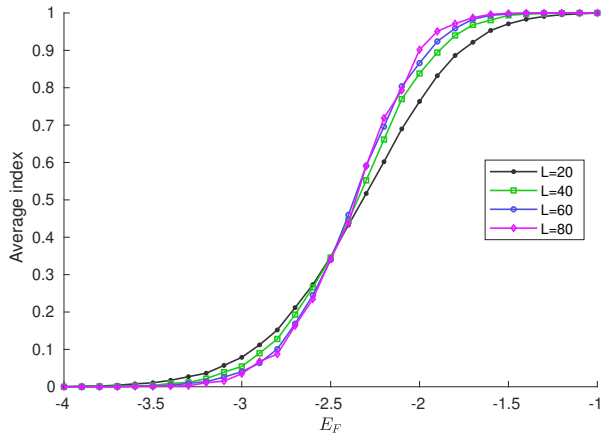}

\caption{Bott index averaged over disorder as a function of the Fermi level:
$L=20$ (12,664 samples); $L=40$ (7,681 samples); $L=60$ (3,013
samples); $L=80$ (1,121 samples).\label{fig:Bott-index-averaged}}

\end{figure}

For the purposes of evaluating the algorithm of the Bott index, we
replicate and extend the numerical study in \cite{LorHastHgTe} (see
also \cite{LoringPseudospectra}). The reason to select a known model
Hamiltonian and disorder-induced transition is to keep the focus here
on the methods for computing the Bott index.

We start with a regular square lattice, with periodic boundary conditions,
with two basis elements at each site. Upon this we consider a real-space
tight binding model, associating a Hamiltonian term to each site of
the tiling and a hopping matrix to each link between closest neighboring
sites. The on-site element corresponding to site $j$ is 
\[
H_{j}=(M-4B+W_{j})I_{2}
\]
where $M$ and $B$ are constants, and $W_{j}$ varies by site, each
$W_{j}$ independently taken from a uniform distribution in $[-4,4]$.
The full hopping term between sites $j$ and $k$, for $j$ to the
left of $k$ is
\[
H_{jk}=B\sigma_{z}+A\sigma_{x}
\]
for $j$ below $k$ is
\[
H_{jk}=B\sigma_{z}+A\sigma_{y}
\]
The constants we will use are
\[
A=1,\ B=-1,\ C=0,\ M=-2
\]
which is as was done in producing \cite[Fig.~1]{LorHastHgTe}. 

In other parts of \cite{LorHastHgTe} $M=-1$ was sometimes used.
The meaning here of the constant $A$ is off by $\tfrac{1}{2}$ how
how it is used in \cite{jiang2009numerical}. We continue to use the
convention from \cite{LorHastHgTe} as listed above, and in any case
the reader is encouraged to look at the file \texttt{ChernSystem.m
}in the supplementary files \cite{supplement} to see how the Hamiltonian
and lattice are constructed.

\begin{figure}
\includegraphics[viewport=15bp 5bp 270bp 210bp,clip,scale=0.8]{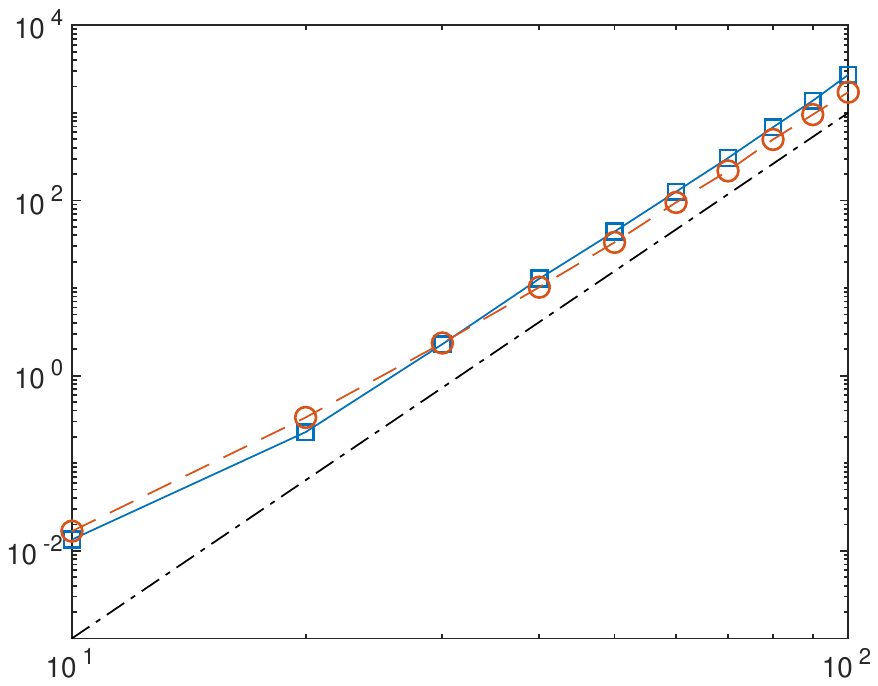}
\includegraphics[viewport=15bp 5bp 270bp 210bp,clip,scale=0.8]{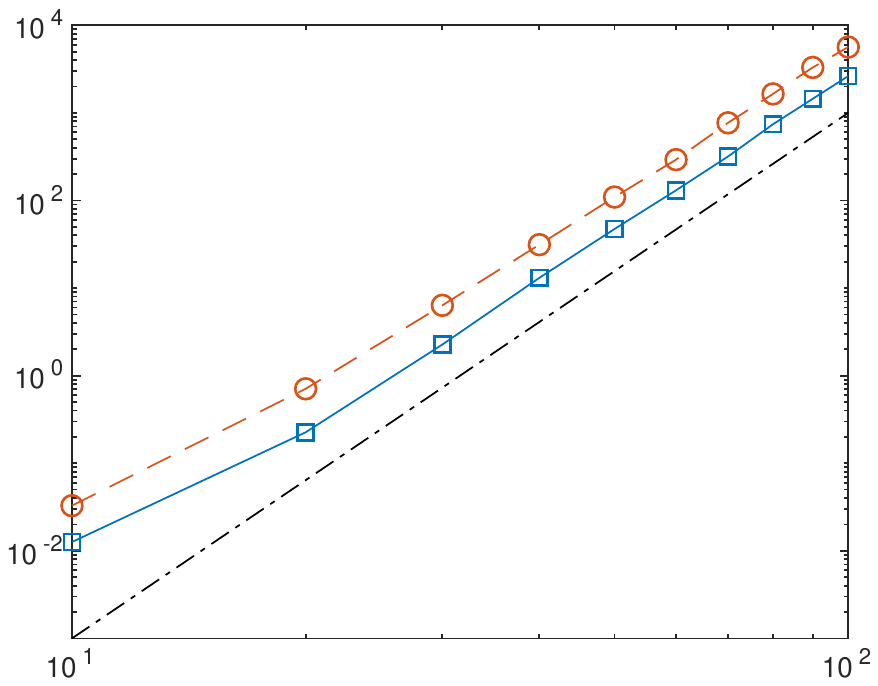}

\caption{For $L=10$ to $L=100$ the average time. Left panel is the preferred
method using the matrices as in equation~\ref{eq:periodic_smaller},
right panel the method using the larger matrices described in Equation~\ref{eq:larger_matrices}.
The solid line with squares indicates the time in seconds needed for
a full diagonalization of the Hamiltonian. The dashed line indicates
the time in seconds for the rest of the calculation, with the energy
cutoff a bit below the Fermi level. The testing was performed on a
8-core computer with each core rated at 2.67GHz. For reference, the
dotted line plots the curve $y=\left(10^{-9}\right)L^{6}$. These
plots are based on computing one Bott index for each of 10 disordered
Hamiltonians for each value of $L$. \label{fig:Time_Bott_index}}
\end{figure}

As in \cite{LorHastHgTe} we vary the Fermi level for many Hamiltonians
with different disorder terms. The index averaged over disorder that
resulted is shown in Figure~\ref{fig:Bott-index-averaged}. The improvments
in computing hardware over the past twelve years gave a moderate improvement,
as now it is feasable to work up to $L=80$ instead of only $L=60$,
as in \cite[Fig.~1]{LorHastHgTe}. Also, more samples were computed,
so the resulting plots are clearer.

As we are computing all eigenvalues of some dense matrices, the time
needed to compute the Bott index is expected to grow a the cube of
system size, so $\mathcal{O}(L^{6})$. Multiple values of the Bott
index, for multiple Fermi levels, can be computed from the same diagonalization
of the Hamiltonian, so in timing this algorithm it makes sense to
time the Hamiltonian diagonalization separately from the time to compute
a single Bott index given that diagonalization. As one can see from
the left panel of Figure~\ref{fig:Time_Bott_index}, both tasks do
take time that grows in what looks like $\mathcal{O}(L^{6})$. 

\begin{figure}
\includegraphics[viewport=10bp 0bp 400bp 303bp,clip,scale=0.6]{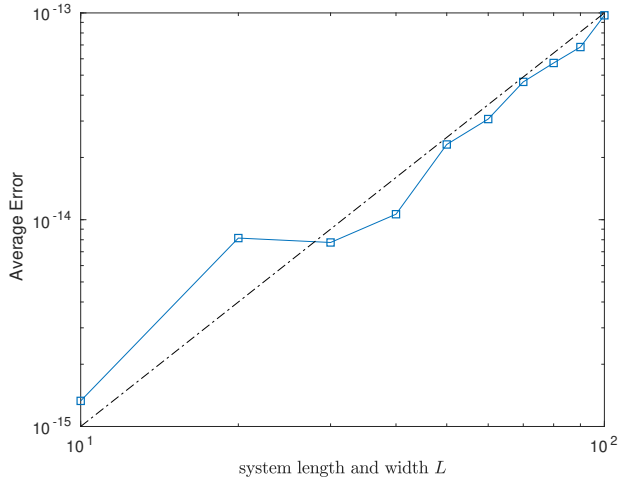}

\caption{For $L=10$ to to $L=100$ the average error, defined as the distance
to the closest integer. The solid line is the average error while,
for reference, the dotted line plots the curve $y=\left(0.4\times10^{-17}\right)L^{2}$.
This plot is based on 32 samples for each value of $L$. \label{fig:Error_Bott_index}}
\end{figure}

As a comparison, the timing is reported (right panel of Figure~\ref{fig:Time_Bott_index})
as well for the algorithm that uses the larger matrices of Equation~\ref{eq:larger_matrices}.
There is accumulated numerical error that keeps the result from being
exactly an integer. This is generally ignored; one just rounds to
the nearest integer. In Figure~\ref{fig:Error_Bott_index} is shown
the distance to the nearest integer. This seems like it will be negligible
even for systems so large that computing a single Bott index will
be impractical.

It is worth pointing out that we do not, it seems, have a theory that
tells us what the correct answer is here for the disordered averaged
Bott index. However, the Bott index is known to correlate, in the
gapped case, to the Kubo conductance \cite[Lemma~5.6]{HastLorTheoryPractice}.
It has been used and evaluated in many settings, such as \cite{AlessioRigolFloquetChern,bandres2016topological,toniolo2018time}.
What Figure~\ref{fig:Error_Bott_index} is showing is that the computed
Bott index come out very close to an integer, the error coming from
the cumulative errors in calculating eigenvalues numerically.

The plot of the time needed makes it seem that computing a Bott index
for a 100-by-100 system is reasonable. If computing an average over
disorder, and at multiple Fermi levels, this is not the case, as the
total time needed is excessive. On the other hand, since the number
of samples needed decreases with system size, the growth in that case
is more like $\mathcal{O}(L^{5})$. That is still a problem, so if
one wants to push this study further, one needs to be doing sparse
matrix computations.

\section{The localizer index \label{sec:Localizer_local_version}}

The localizer index works only with open boundaries. Unlike the Bott
index, it can easily be modified for different symmetry classes and
physical dimensions \cite{LoringPseudospectra}. Here we focus on
2D system in class AI. Several papers \cite{fulga2016aperiodic} referred
to this index as the pseudospectrum index, but that is misleading
as the index is defined where the pseudospectrum isn't.

We have three observables $(X,Y,H)$ with $X$ and $Y$ position and
$H$ the Hamiltonian. We could change units in $X$ and $Y$ and would
still have three observables, now $(\kappa X,\kappa Y,H)$ for some
nonzero, finite scaling constant $\kappa$. If $\boldsymbol{\lambda}$
is not in $\Lambda(\kappa X,\kappa Y,H)$ then the \emph{localizer
index at} $\boldsymbol{\lambda}$ \emph{for} $(\kappa X,\kappa Y,H)$
is
\[
\frac{1}{2}\mathrm{Sig}\left(L_{\boldsymbol{\lambda}}(\kappa X,\kappa Y,H)\right),
\]
where $L_{\boldsymbol{\lambda}}$ is the localizer of Equation~\ref{eq:Localizer_3Matrices}
and $\mathrm{Sig}(M)$ refers to the number of positive eigenvalues
minus the number of negative eigenvalues of the nonsingular Hermitian
matrix $M$.

In \cite{LoringPseudospectra} the localizer was referred to as the
Bott element since it does give a representative of the Bott element
when applied to standard position coordinates on the sphere. Now that
the Bott index is rather firmly entrenced in physics, it seems prudent
to call it the localizer. This reflects the fact that one can use
the localizer to differentiate edge from bulk states \cite{loring2018bulk}.

It is important to not compute all the positive or all the negative
eigenvalues of the localizer when computing the index. That would
defeat the advantage we gain from the fact that $L_{\boldsymbol{\lambda}}(\kappa X,\kappa Y,H)$
should be a sparse matrix. One might need to zero-out small hopping
terms between distant sites to ensure $H$ is sparse. In general $\kappa X$
are $\kappa Y$ are diagonal, and so very sparse.

If $M$ is nonsingular and Hermitian, one computes an LDLT decomposition
of $M$ and uses Sylevester's law of inertial to compute the signature
of $M$. This means, in practice, on finds a factorization
\[
M=S^{-1}PLDL^{\dagger}P^{\dagger}S^{-1}
\]
where $S$ is diagonal with positive values on the diagonal, $P$
is a permutation matrix, $L$ is lower triangular, and $D$ is block-diagonal,
with both 1-by-1 and 2-by-2 blocks. Since
\[
M=(S^{-1}PL)D(S^{-1}PL)^{\dagger}
\]
we have 
\[
\mathrm{Sig}(M)=\mathrm{Sig}(D)
\]
and the signature of $D$ is the sum of the signatures of the blocks
of $D$. 

\begin{figure}
\includegraphics[viewport=5bp 0bp 270bp 205bp,clip,scale=0.8]{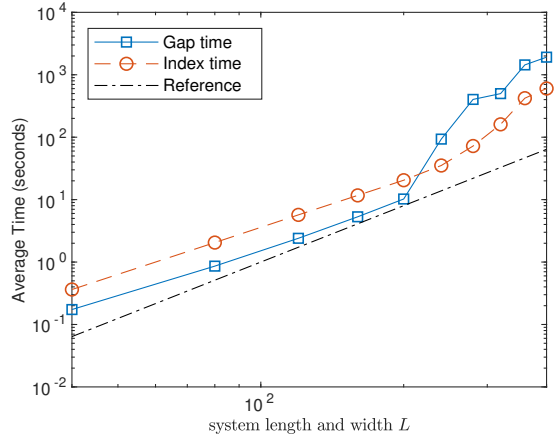}

\caption{For $L=40$ to $L=400$ the average time for the localizer method,
time to compute the gap and to compute the index plotted separately.
The testing was performed on a 8-core computer with each core rated
at 2.67GHz. For reference, the dotted line plots the curve $y=\left(5\times10^{-9}\right)L^{4}$.
These plots are based one Fermi level of of 10 disordered Hamiltonians
for each value of $L$. \label{fig:Timing_localizer}}
\end{figure}

Unfortunately, we require LDLT for a sparse complex Hermitian matrices,
while Matlab \cite{Matlab_LDL} and MUMPS \cite{MUMPS_guide} only
allow for sparse real symmetric matrices. A workaround is based on
the usual embedding
\[
a+ib\mapsto\left[\begin{array}{cc}
a & b\\
-b & a
\end{array}\right]
\]
of the reals into complex matrices. As such, one computer the signature
of
\[
\left[\begin{array}{cc}
\Re(M) & \Im(M)\\
-\Im(M) & \Re(M)
\end{array}\right]
\]
and divides by two. See \texttt{signature.m }in the supplementary
files \cite{supplement}.

\begin{figure}
\noindent {\footnotesize{}\makebox[10cm][l]{\raisebox{-0.08cm}[0.0cm][0.1cm]{$\kappa=5$, $E=0$\hspace*{4.4cm}$\kappa=2$, $E=0$}}}\\
\includegraphics[viewport=65bp 25bp 490bp 400bp,clip,scale=0.38]{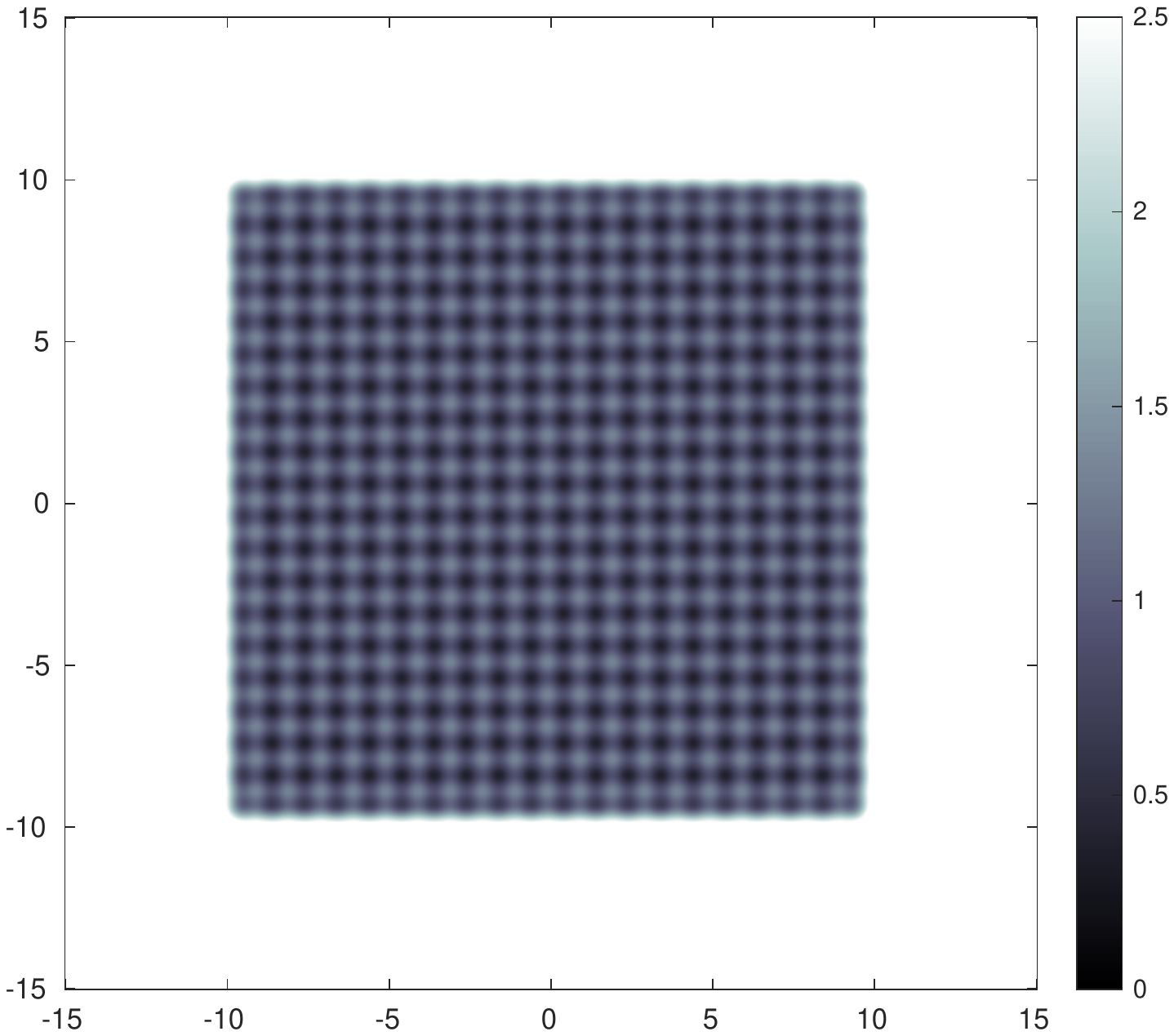}\quad{}\includegraphics[viewport=65bp 25bp 490bp 400bp,clip,scale=0.38]{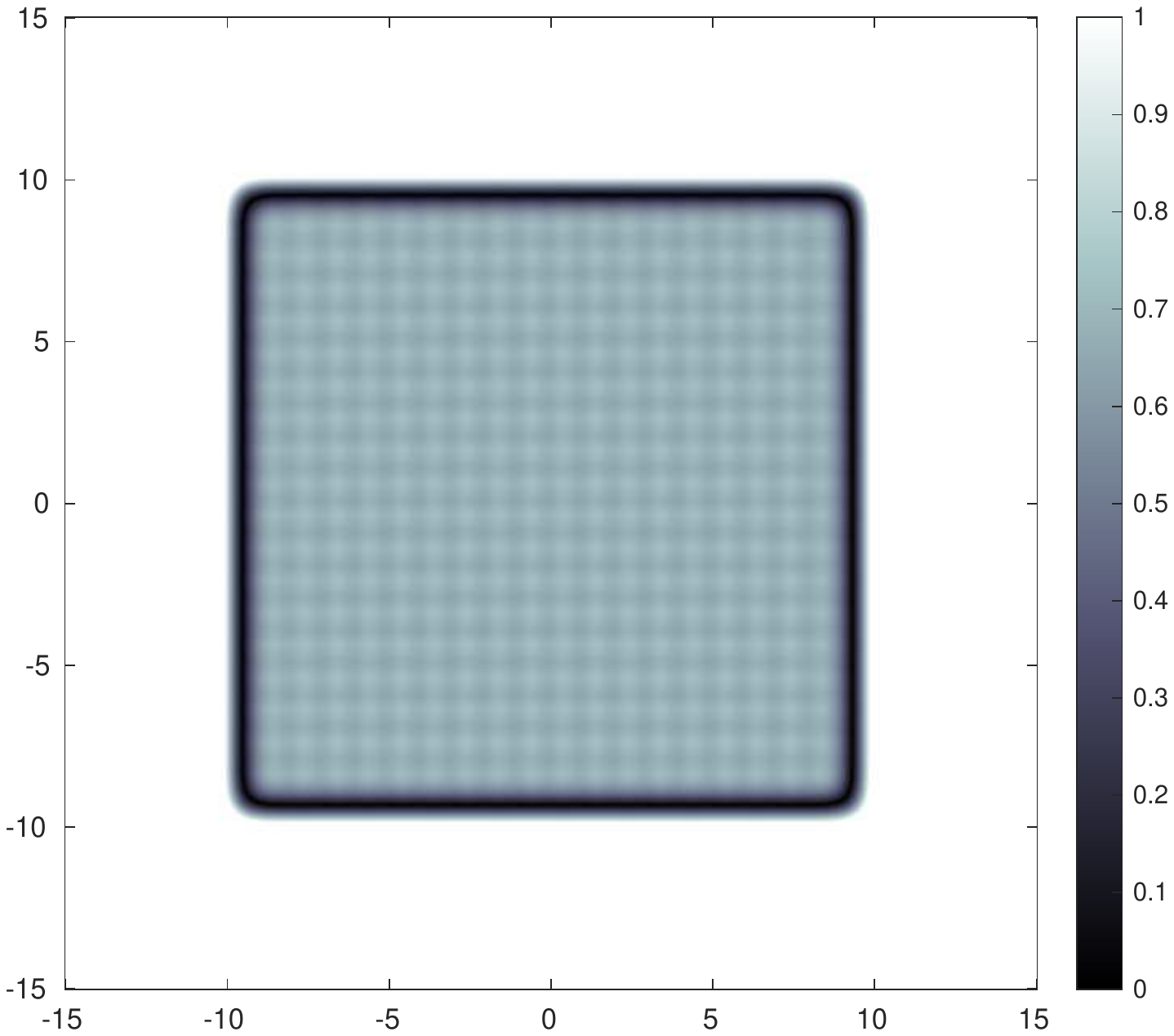}\vspace*{-0.3cm}
\\
{\footnotesize{}\makebox[10cm][l]{\raisebox{-0.08cm}[0.0cm][0.1cm]{$\kappa=1$, $E=0$\hspace*{4.4cm}$\kappa=0.5$, $E=0$}}}\\
\includegraphics[viewport=65bp 25bp 490bp 400bp,clip,scale=0.38]{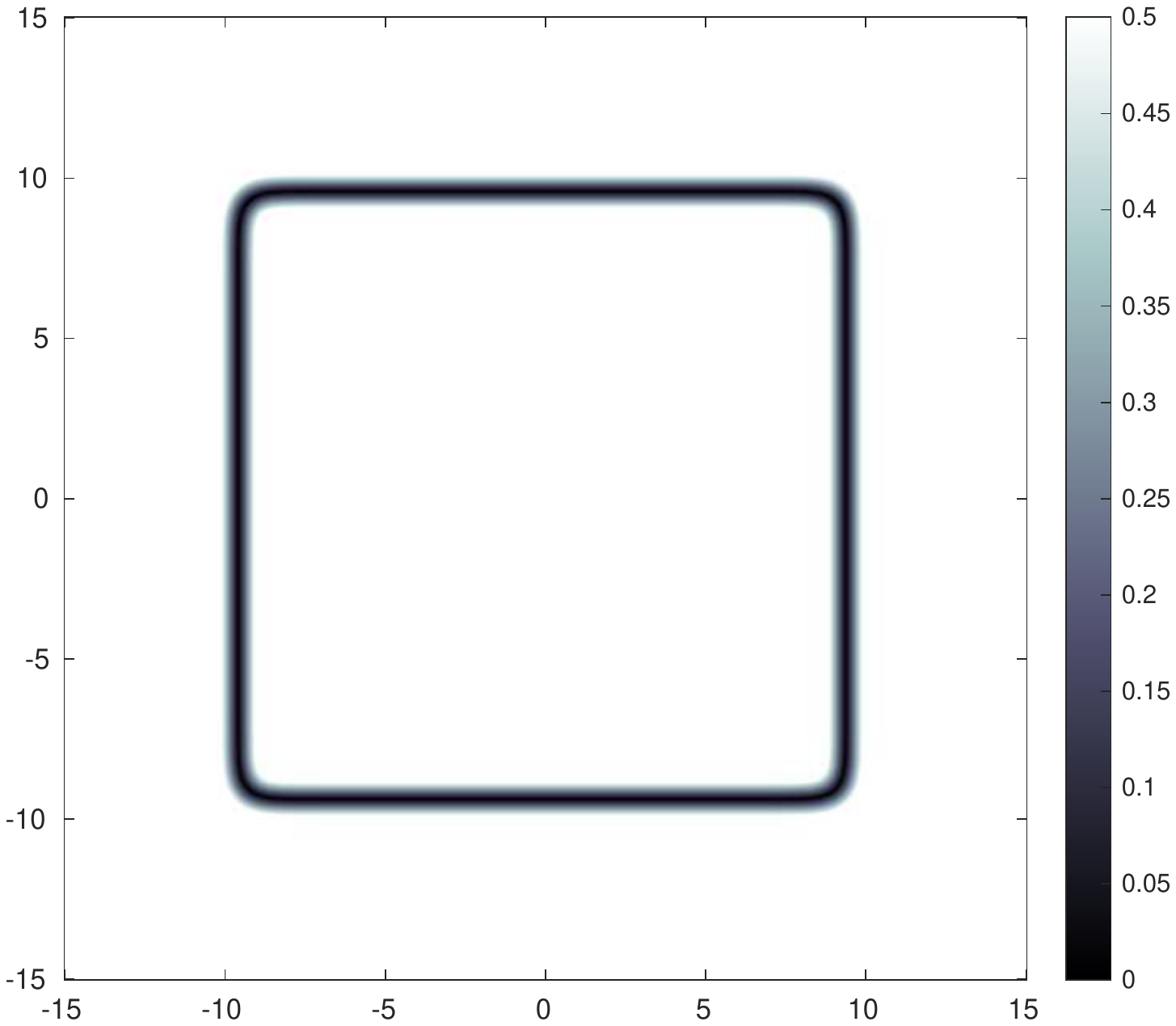}\quad{}\includegraphics[viewport=65bp 25bp 490bp 400bp,clip,scale=0.38]{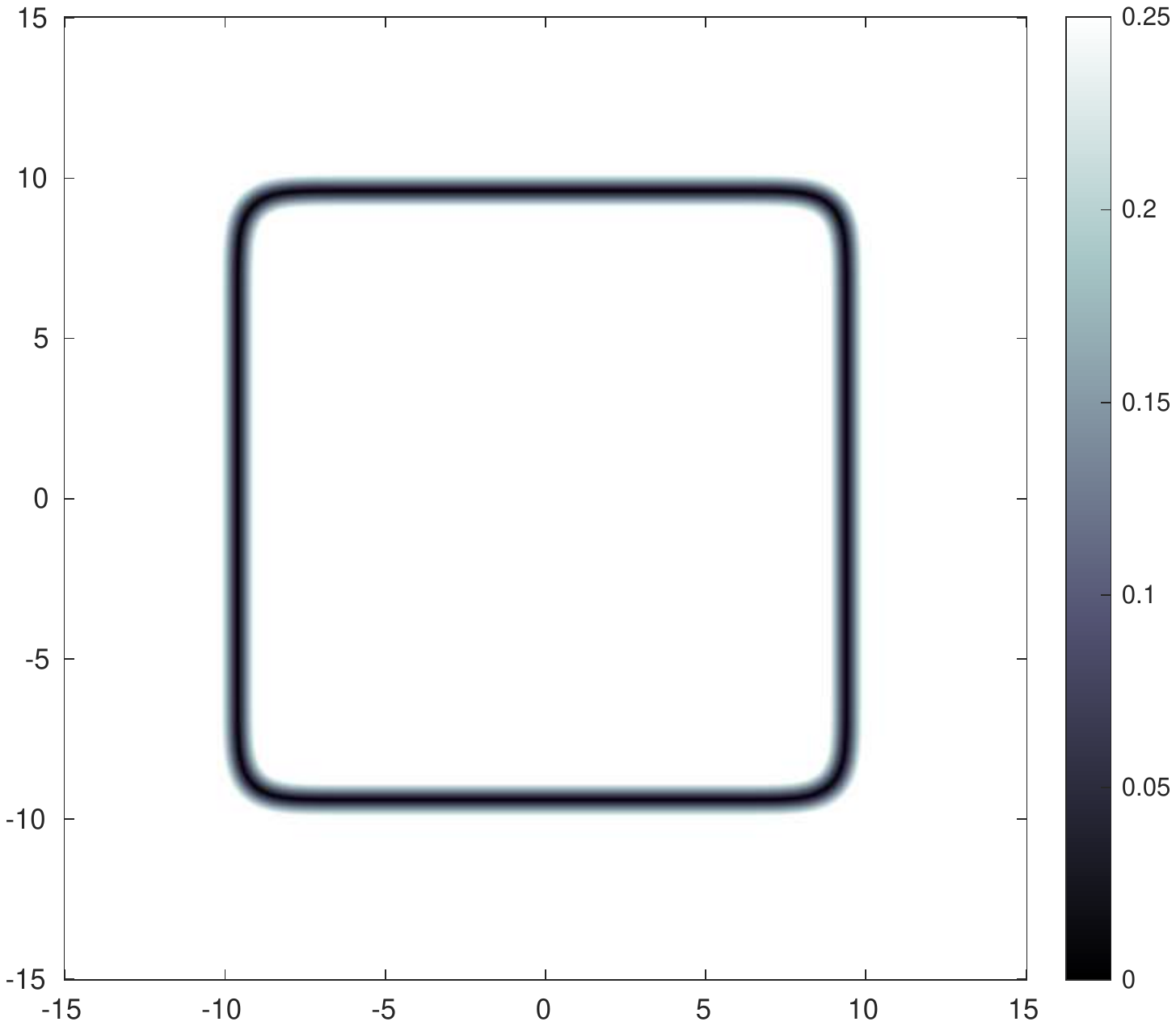}\vspace*{-0.3cm}
\\
{\footnotesize{}\makebox[10cm][l]{\raisebox{-0.08cm}[0.0cm][0.1cm]{$\kappa=0.2$, $E=0$\hspace*{4.4cm}$\kappa=0.1$, $E=0$}}}\\
\includegraphics[viewport=65bp 25bp 490bp 400bp,clip,scale=0.38]{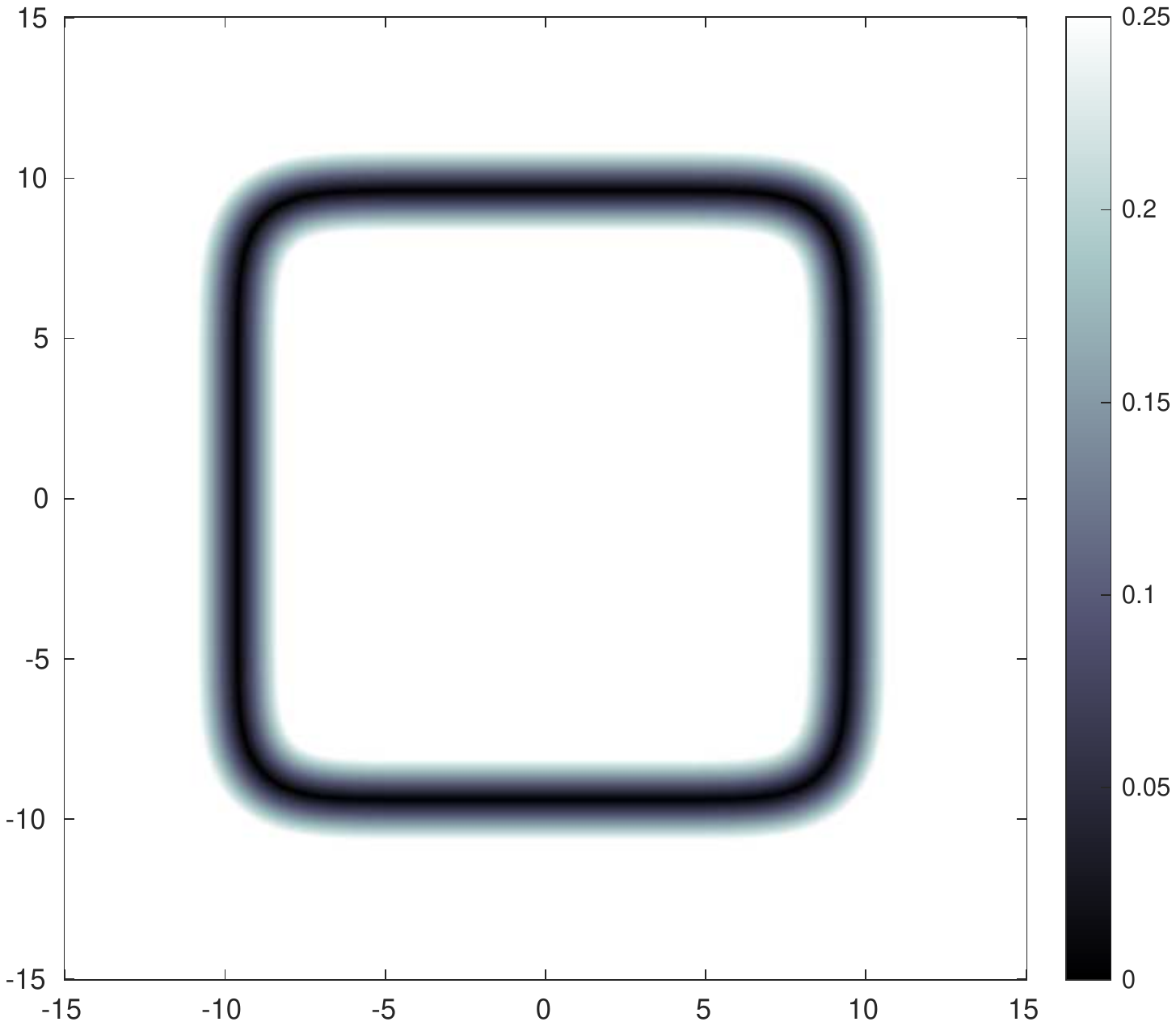}\quad{}\includegraphics[viewport=65bp 25bp 490bp 400bp,clip,scale=0.38]{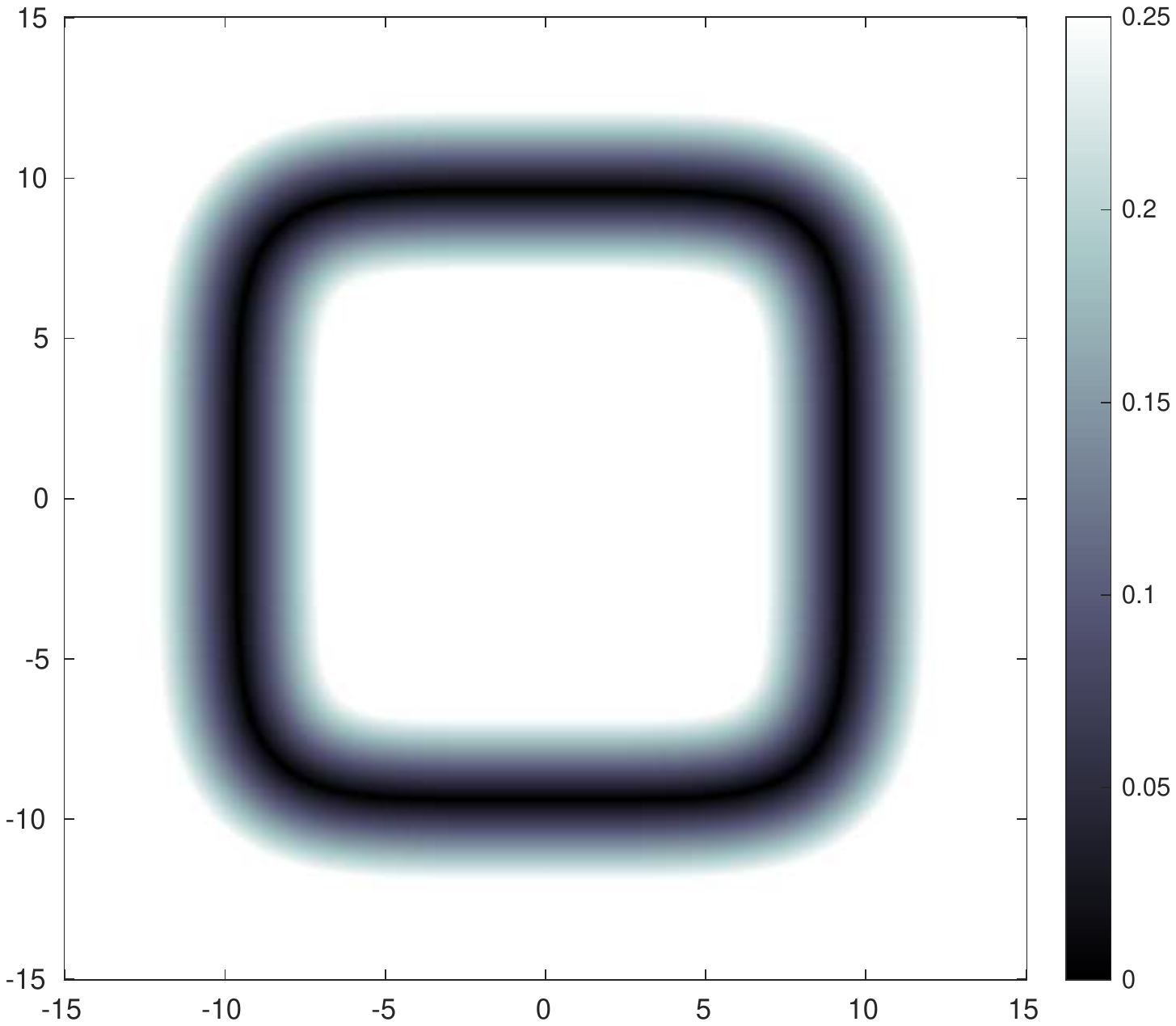}\\
{\footnotesize{}\makebox[10cm][l]{\raisebox{-0.08cm}[0.0cm][0.1cm]{$\kappa=0.05$, $E=0$\hspace*{4.4cm}$\kappa=0.0.02$, $E=0$}}}\\
\includegraphics[viewport=65bp 25bp 490bp 400bp,clip,scale=0.38]{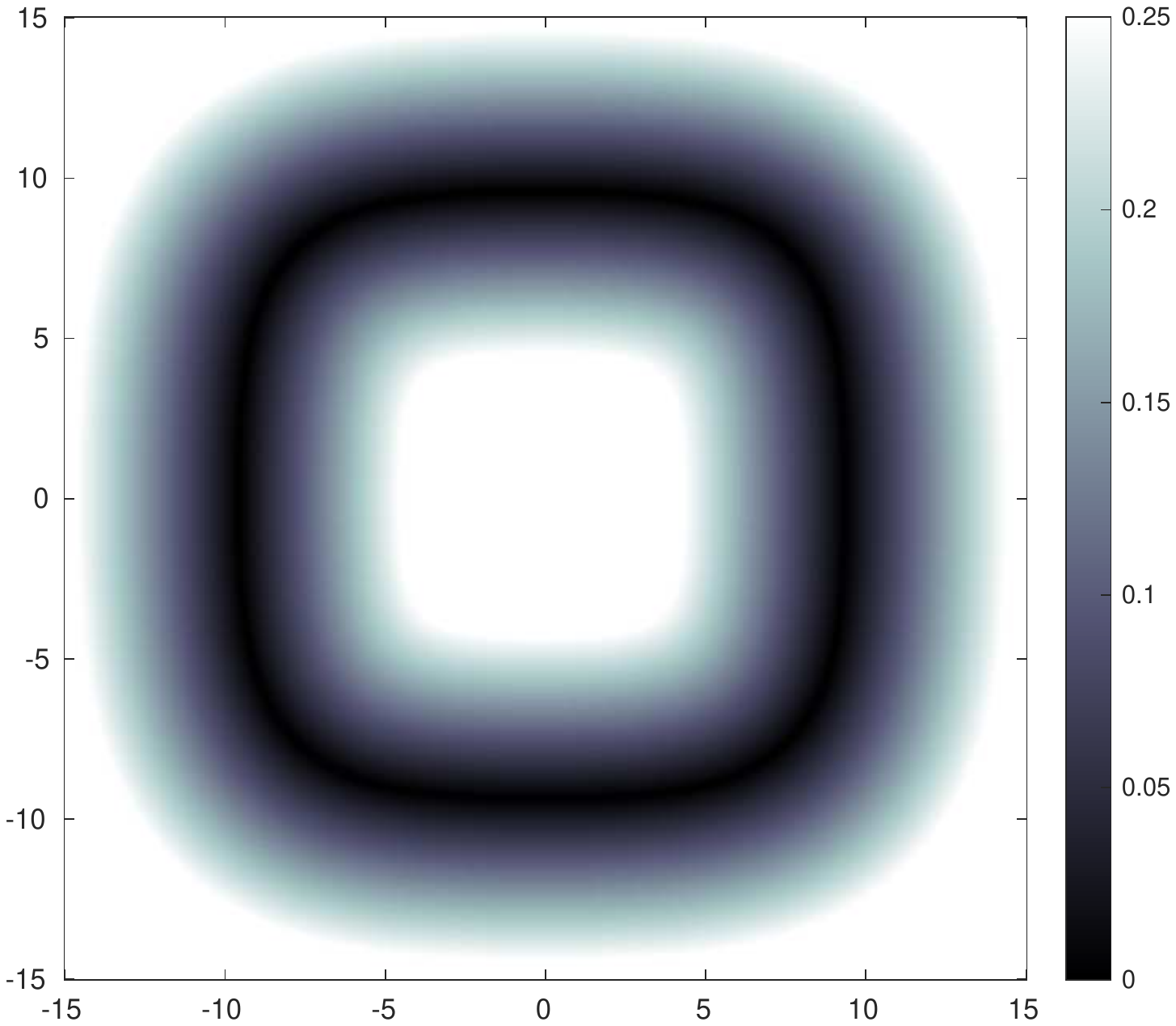}\quad{}\includegraphics[viewport=65bp 25bp 490bp 400bp,clip,scale=0.38]{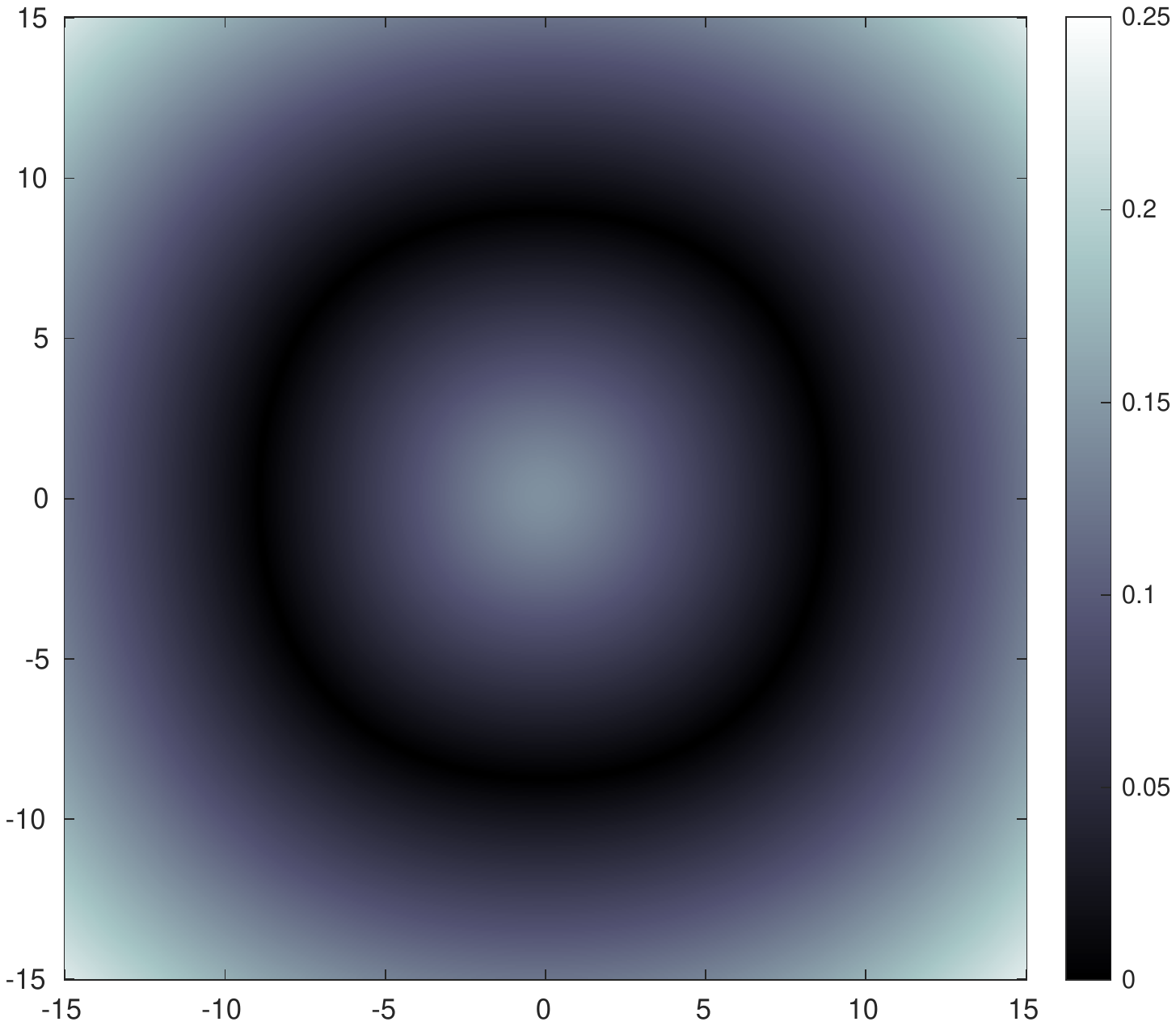}\caption{A slice of the pseudospectrum at $\lambda_{3}=E_{F}=0$. No disorder,
various $\kappa$. \label{fig:slice_zero_clean}}
\end{figure}
\begin{figure}
\noindent {\footnotesize{}\makebox[10cm][l]{\raisebox{-0.08cm}[0.0cm][0.1cm]{$\kappa=5$, $E=0$\hspace*{4.4cm}$\kappa=2$, $E=0$}}}\\
\includegraphics[viewport=65bp 25bp 490bp 400bp,clip,scale=0.38]{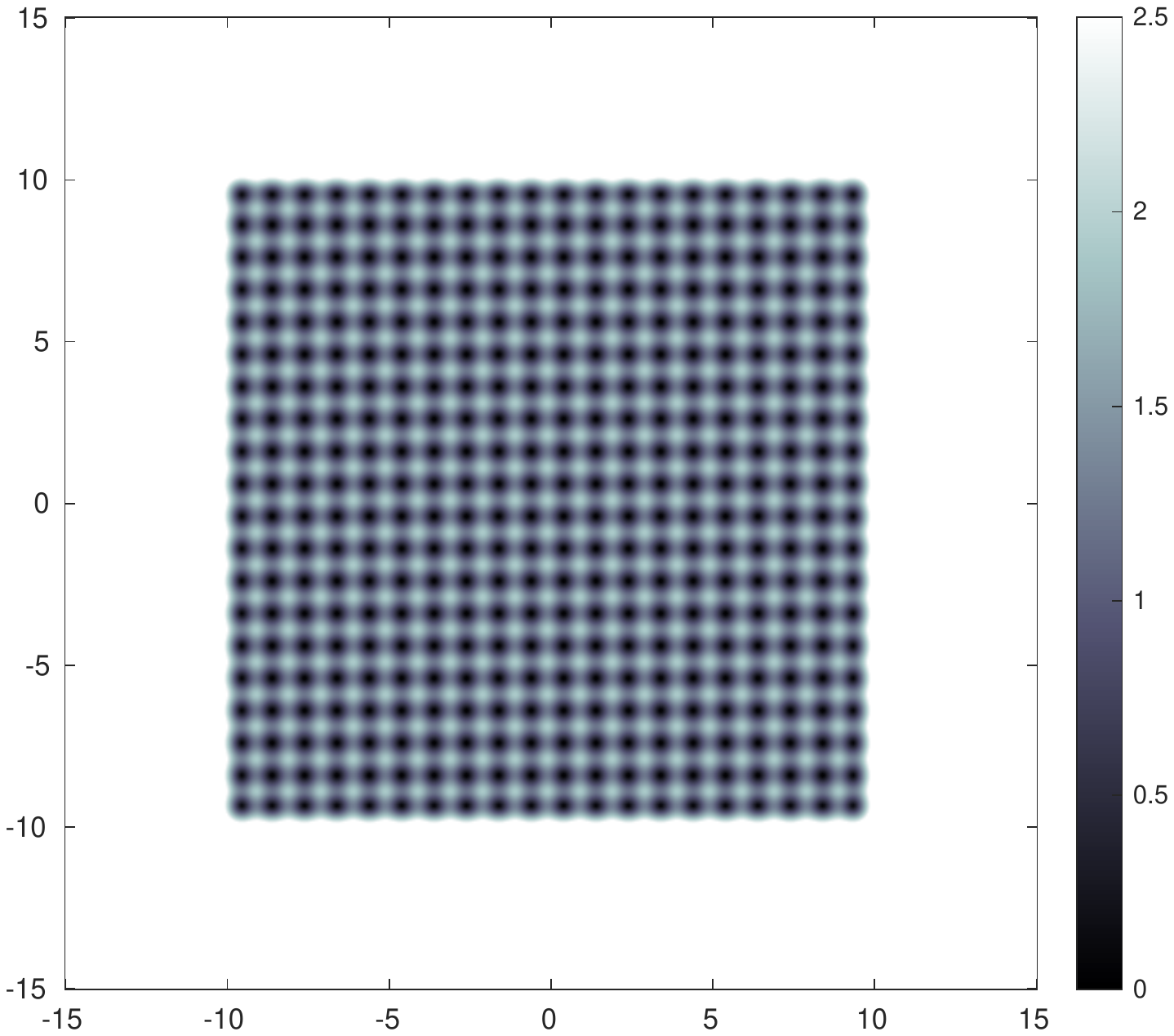}\quad{}\includegraphics[viewport=65bp 25bp 490bp 400bp,clip,scale=0.38]{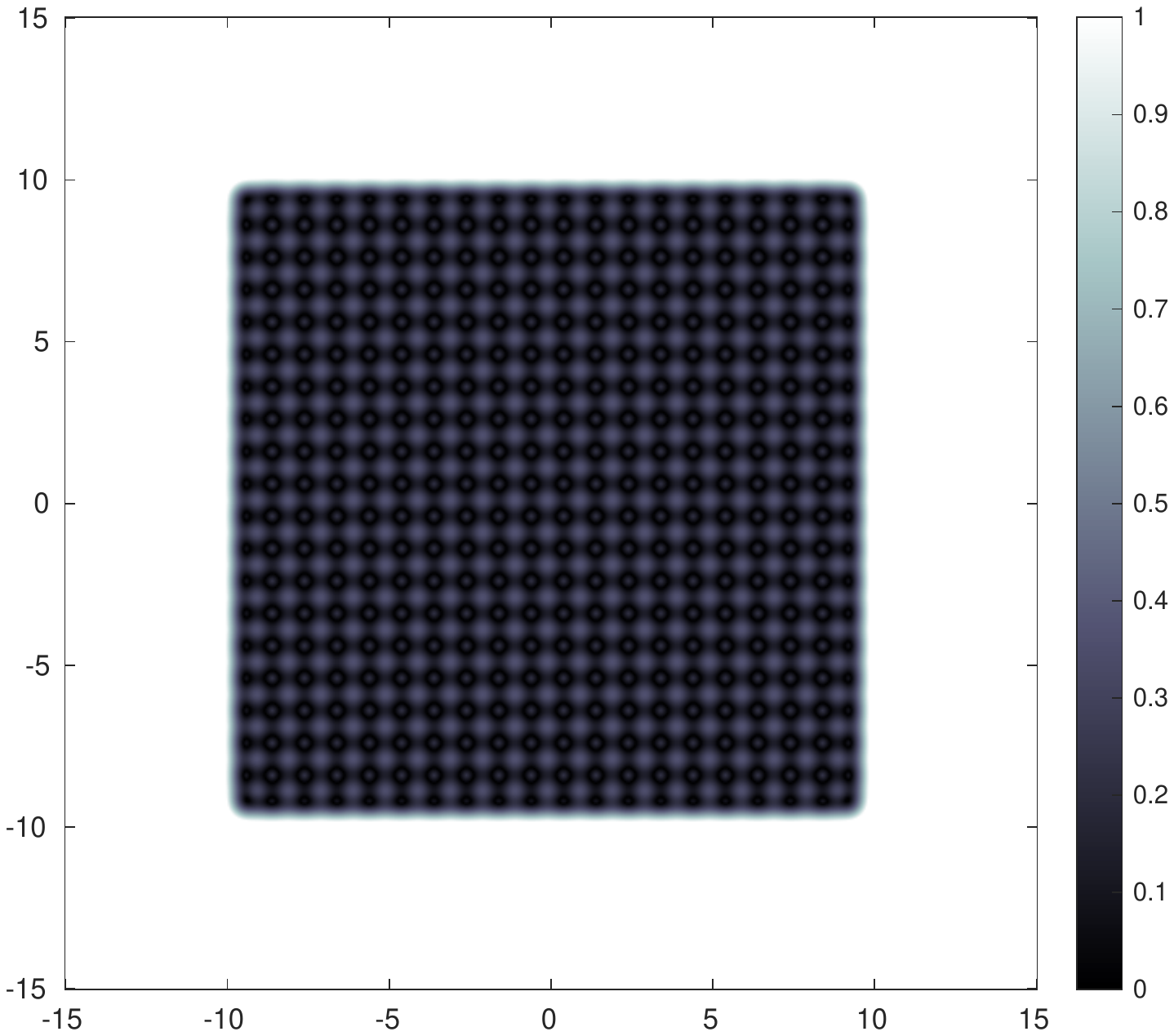}\vspace*{-0.3cm}
\\
{\footnotesize{}\makebox[10cm][l]{\raisebox{-0.08cm}[0.0cm][0.1cm]{$\kappa=1$, $E=0$\hspace*{4.4cm}$\kappa=0.5$, $E=0$}}}\\
\includegraphics[viewport=65bp 25bp 490bp 400bp,clip,scale=0.38]{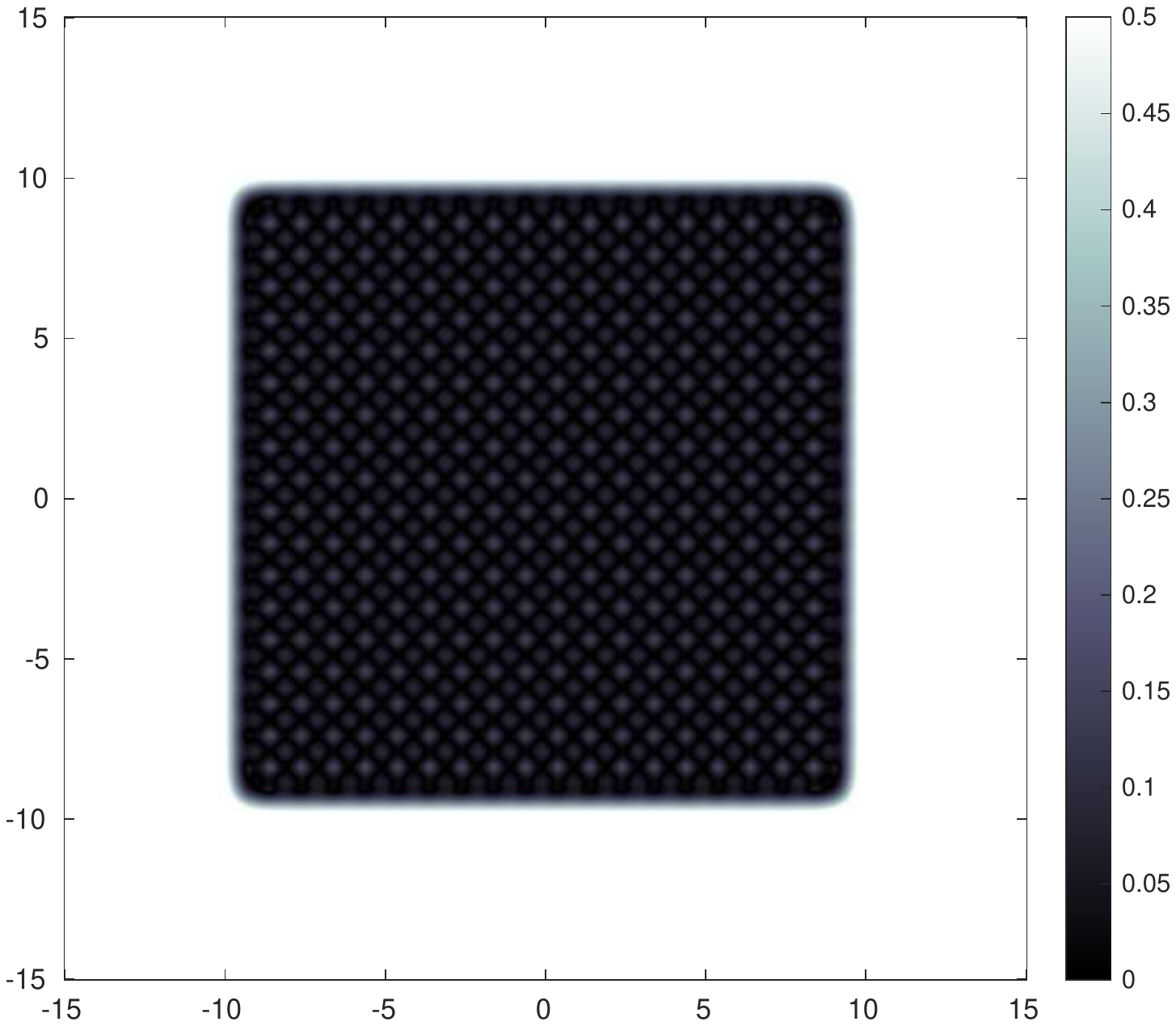}\quad{}\includegraphics[viewport=65bp 25bp 490bp 400bp,clip,scale=0.38]{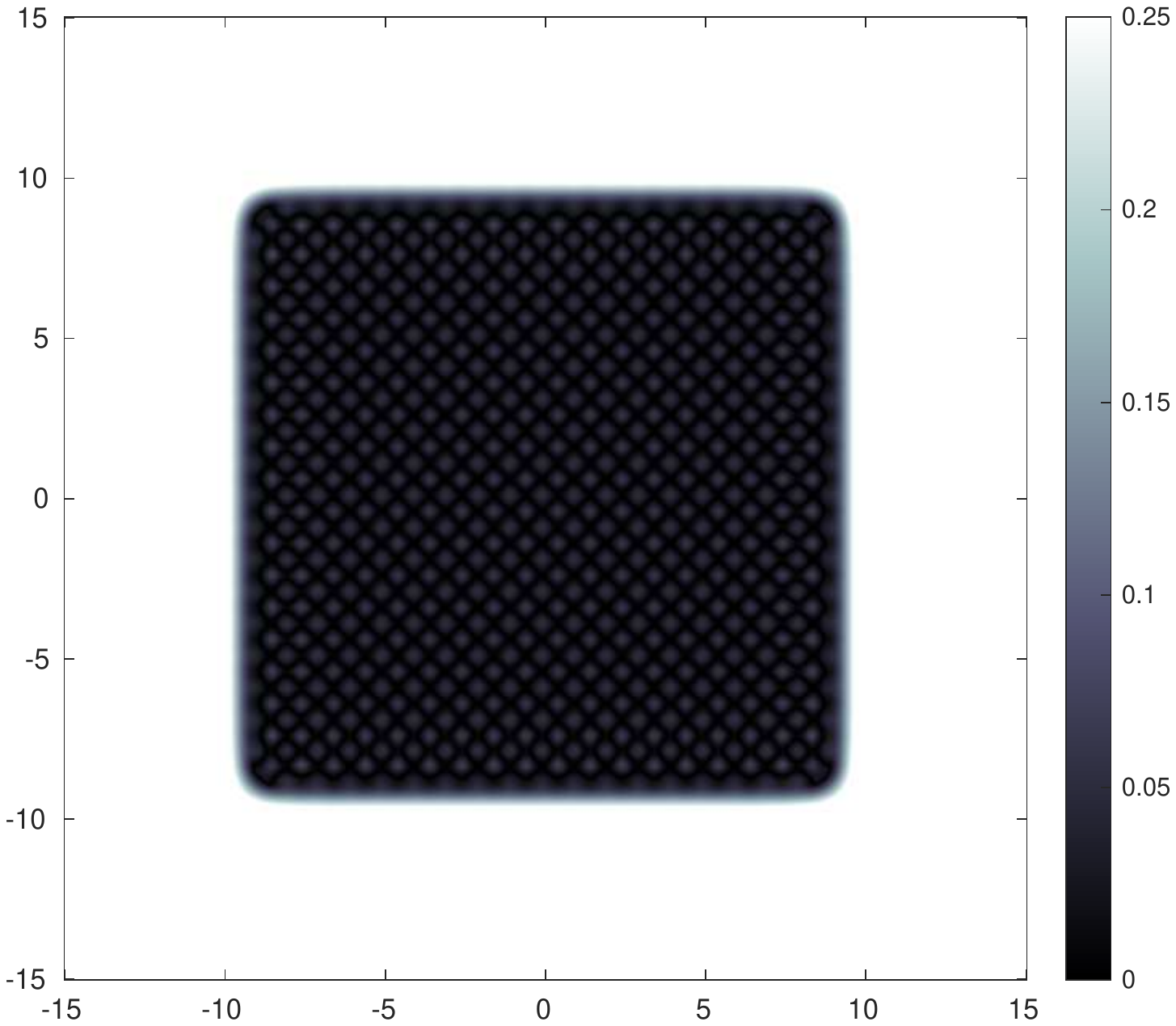}\vspace*{-0.3cm}
\\
{\footnotesize{}\makebox[10cm][l]{\raisebox{-0.08cm}[0.0cm][0.1cm]{$\kappa=0.2$, $E=0$\hspace*{4.4cm}$\kappa=0.1$, $E=0$}}}\\
\includegraphics[viewport=65bp 25bp 490bp 400bp,clip,scale=0.38]{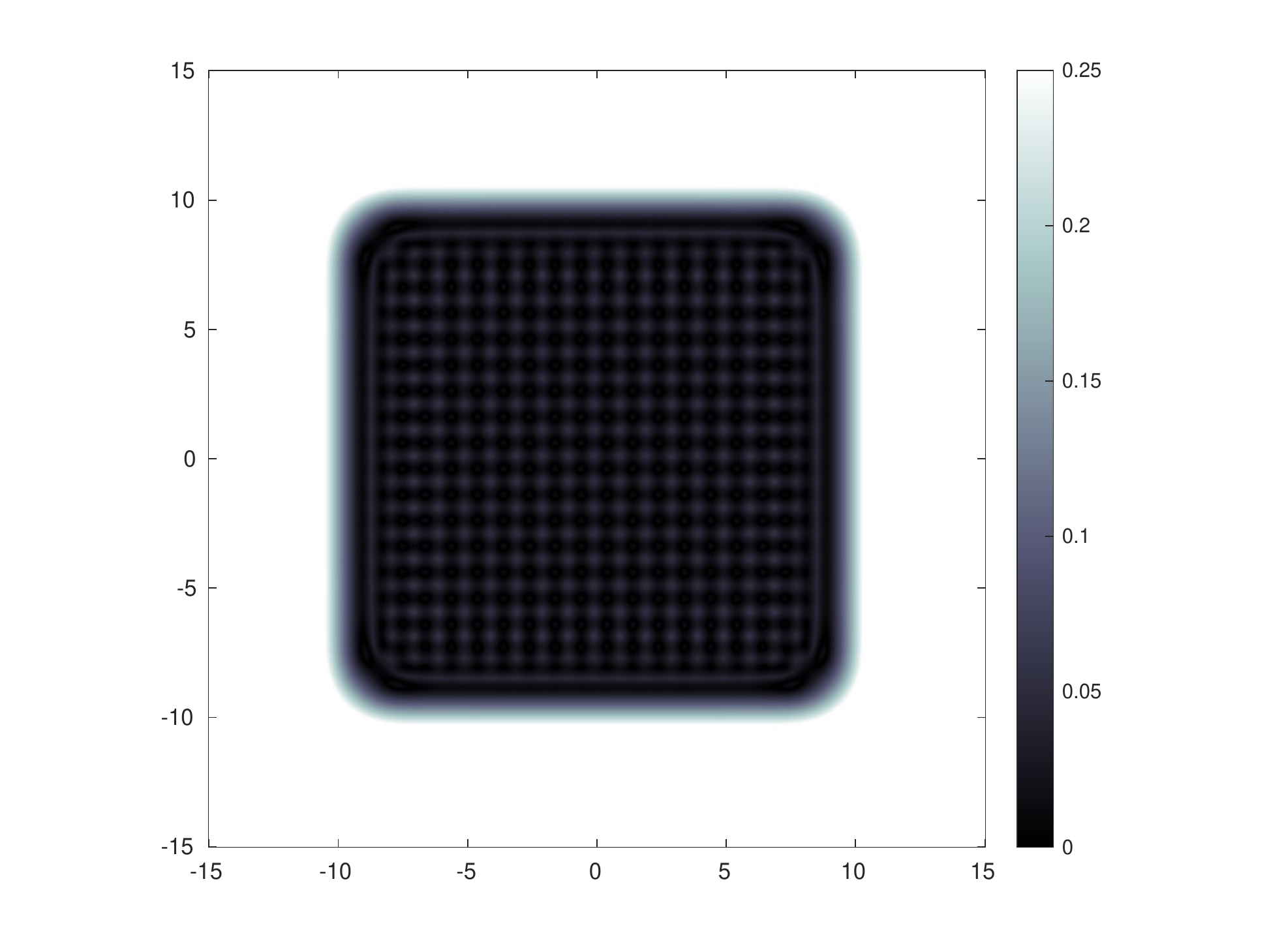}\quad{}\includegraphics[viewport=65bp 25bp 490bp 400bp,clip,scale=0.38]{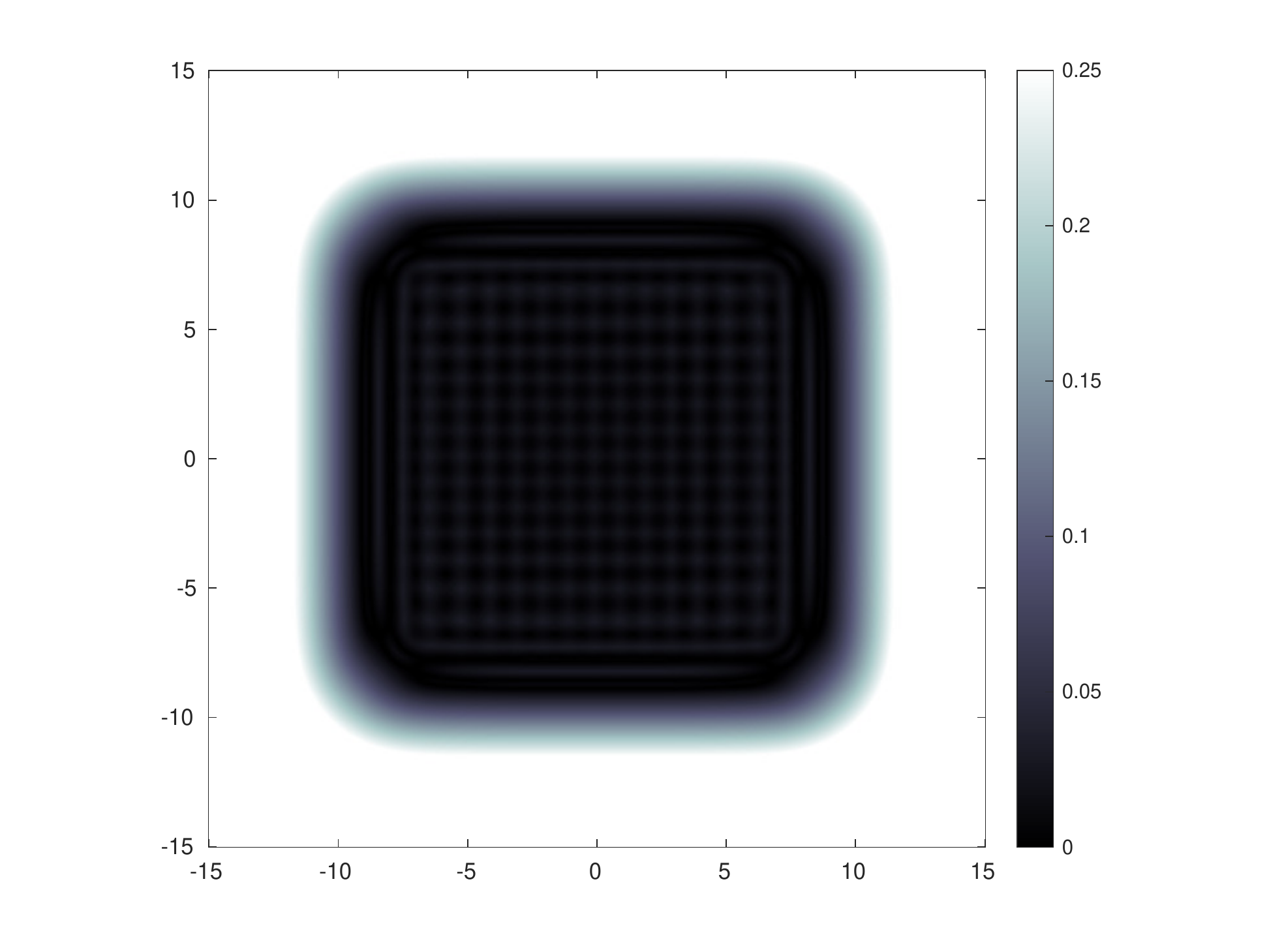}\vspace*{-0.3cm}
\\
{\footnotesize{}\makebox[10cm][l]{\raisebox{-0.08cm}[0.0cm][0.1cm]{$\kappa=0.05$, $E=0$\hspace*{4.4cm}$\kappa=0.0.02$, $E=0$}}}\\
\includegraphics[viewport=65bp 25bp 490bp 400bp,clip,scale=0.38]{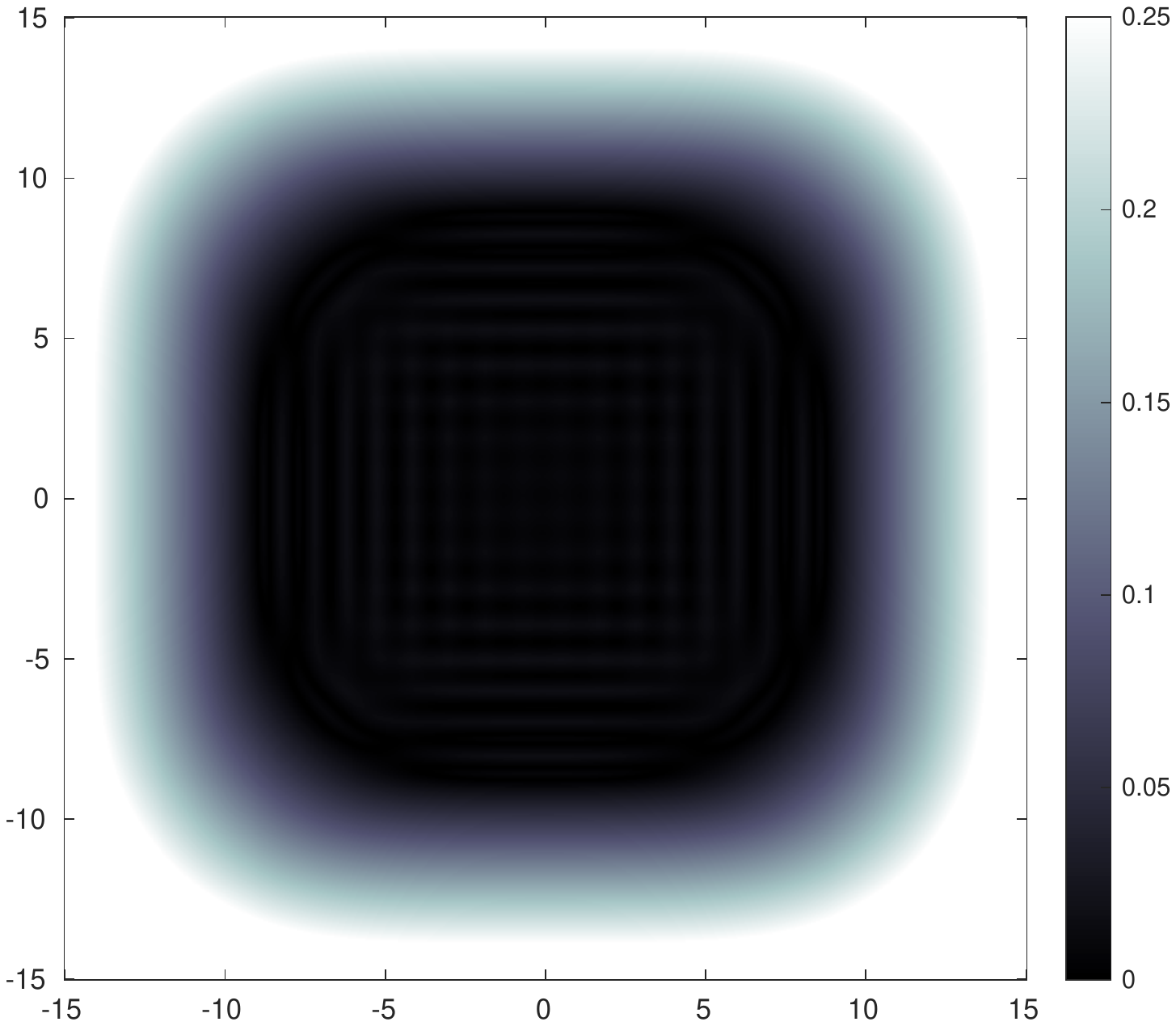}\quad{}\includegraphics[viewport=65bp 25bp 490bp 400bp,clip,scale=0.38]{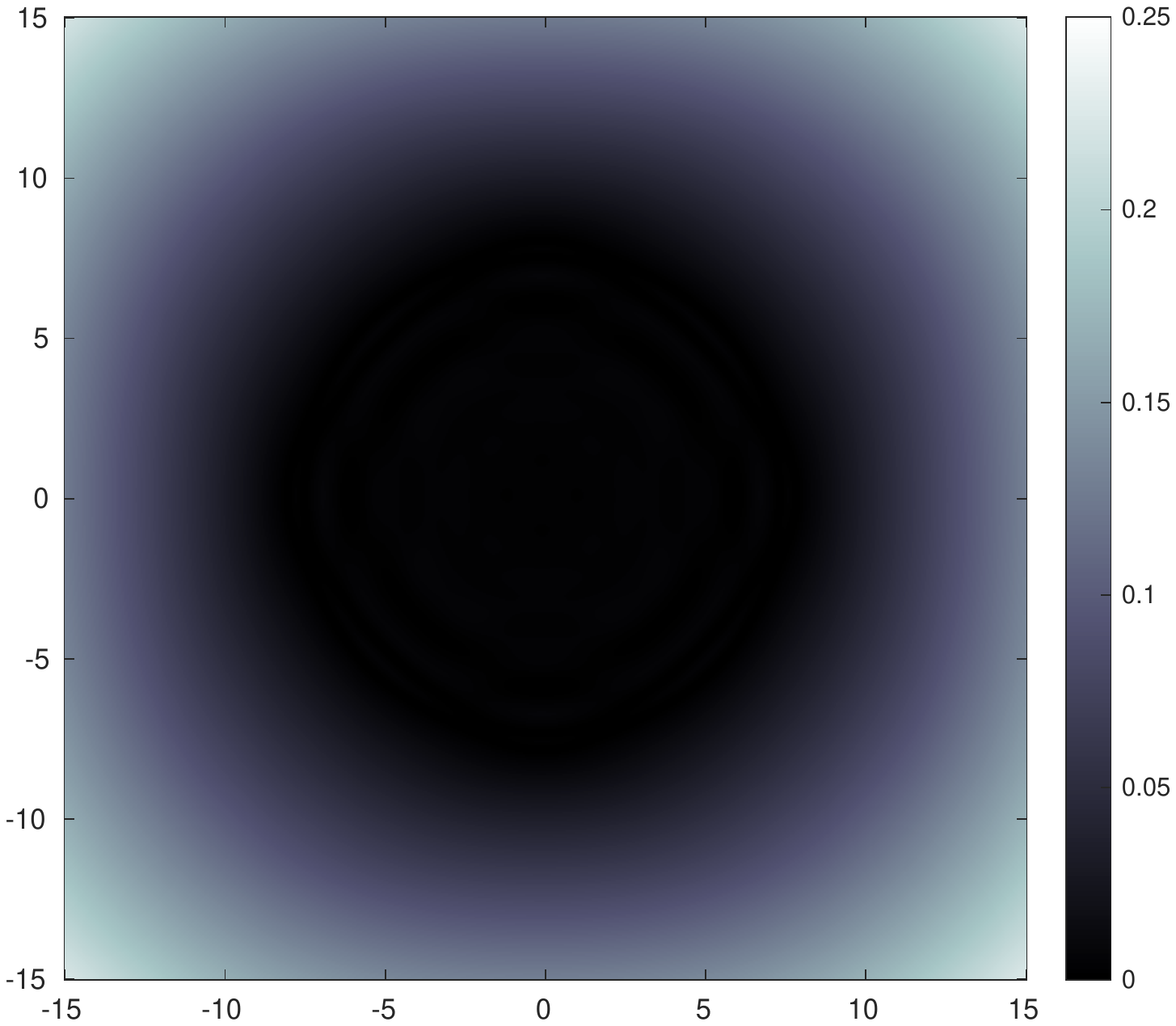}\caption{A slice of the pseudospectrum at $\lambda_{3}=E_{F}=-2.4$. No disorder,
various $\kappa$. \label{fig:slice_2.4_clean}}
\end{figure}

\begin{figure}
\noindent {\footnotesize{}\makebox[10cm][l]{\raisebox{-0.08cm}[0.0cm][0.1cm]{$\kappa=5$, $E=0$\hspace*{4.4cm}$\kappa=2$, $E=0$}}}\\
\includegraphics[viewport=65bp 25bp 490bp 400bp,clip,scale=0.38]{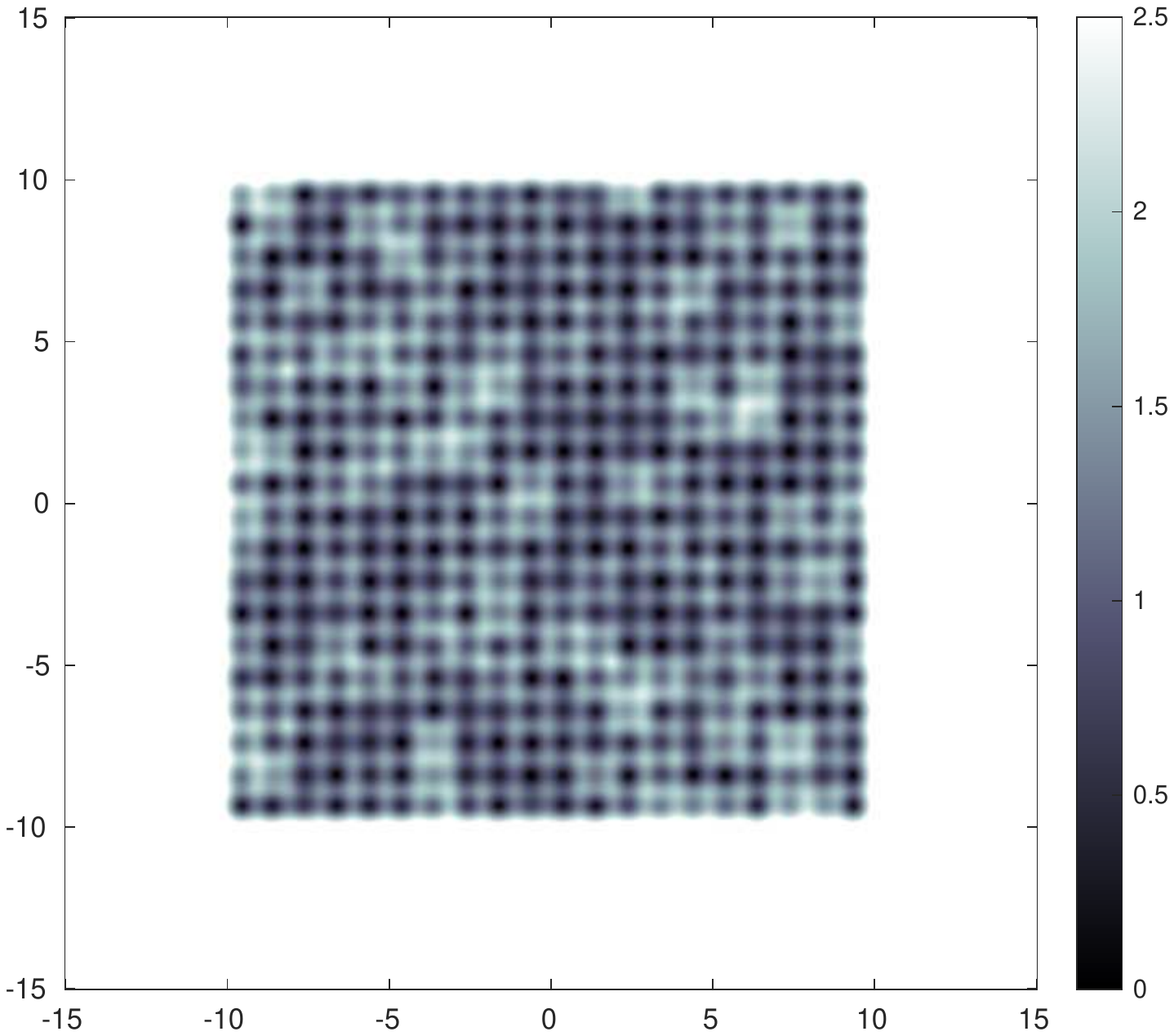}\quad{}\includegraphics[viewport=65bp 25bp 490bp 400bp,clip,scale=0.38]{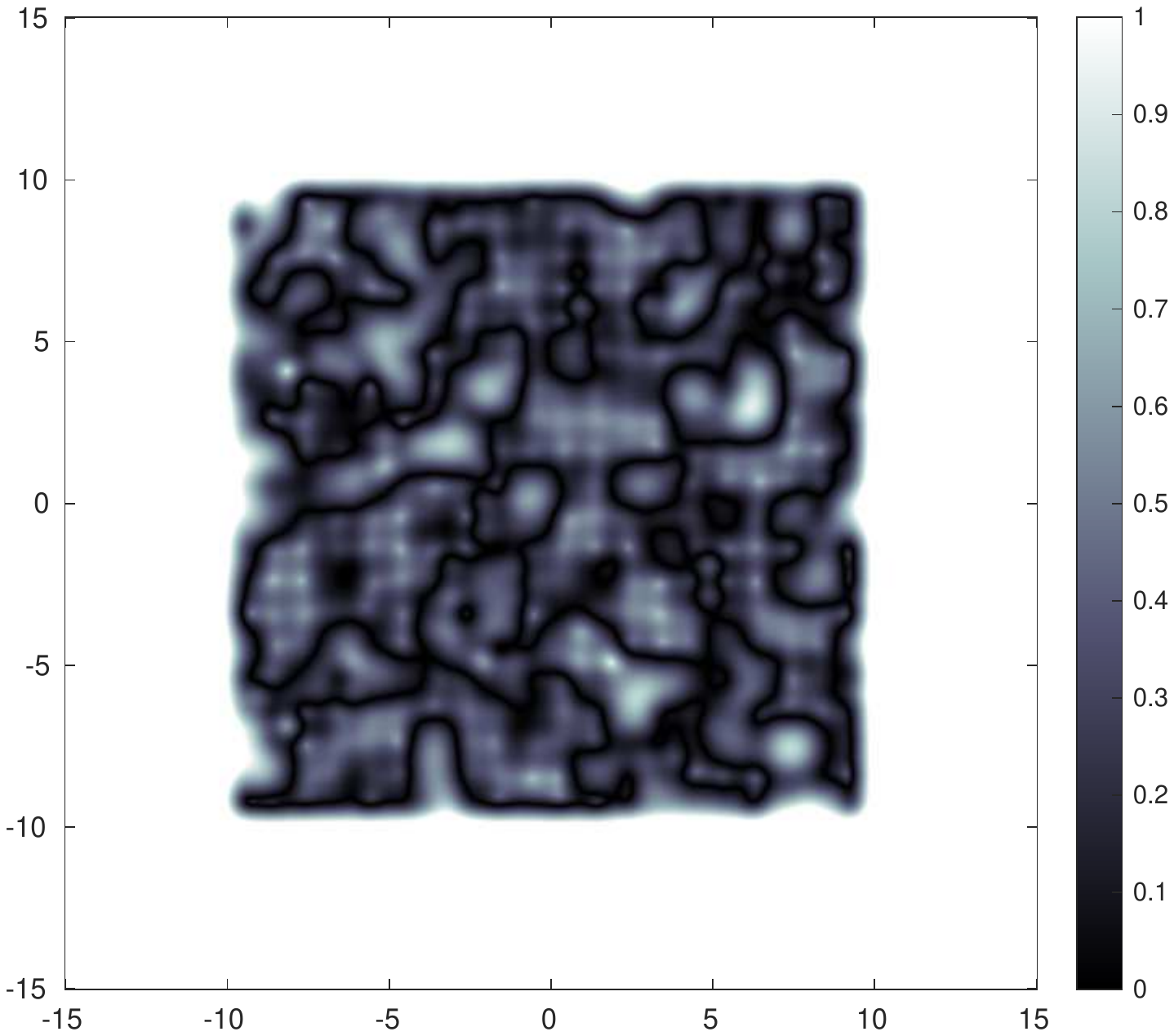}\vspace*{-0.3cm}
\\
{\footnotesize{}\makebox[10cm][l]{\raisebox{-0.08cm}[0.0cm][0.1cm]{$\kappa=1$, $E=0$\hspace*{4.4cm}$\kappa=0.5$, $E=0$}}}\\
\includegraphics[viewport=65bp 25bp 490bp 400bp,clip,scale=0.38]{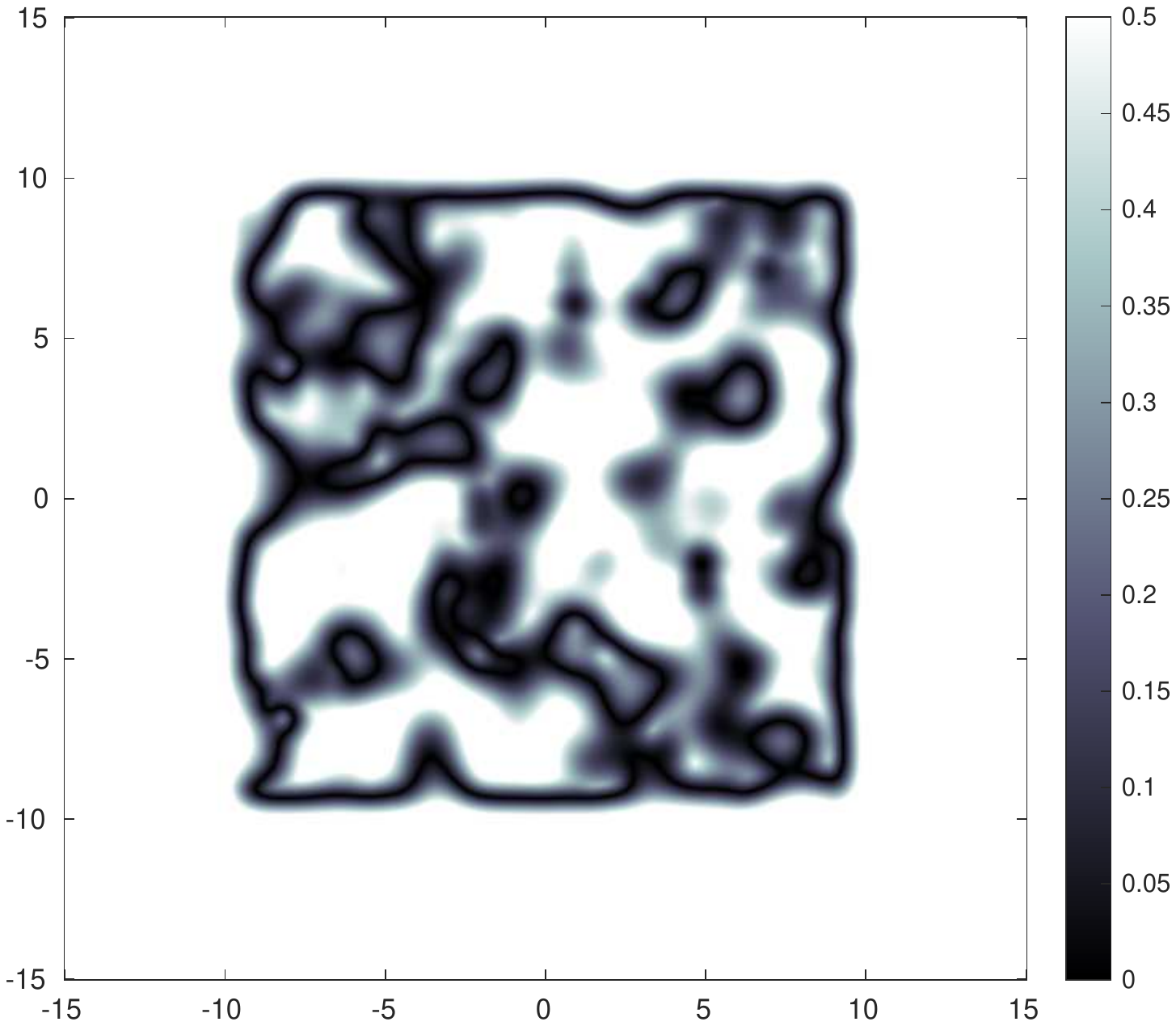}\quad{}\includegraphics[viewport=65bp 25bp 490bp 400bp,clip,scale=0.38]{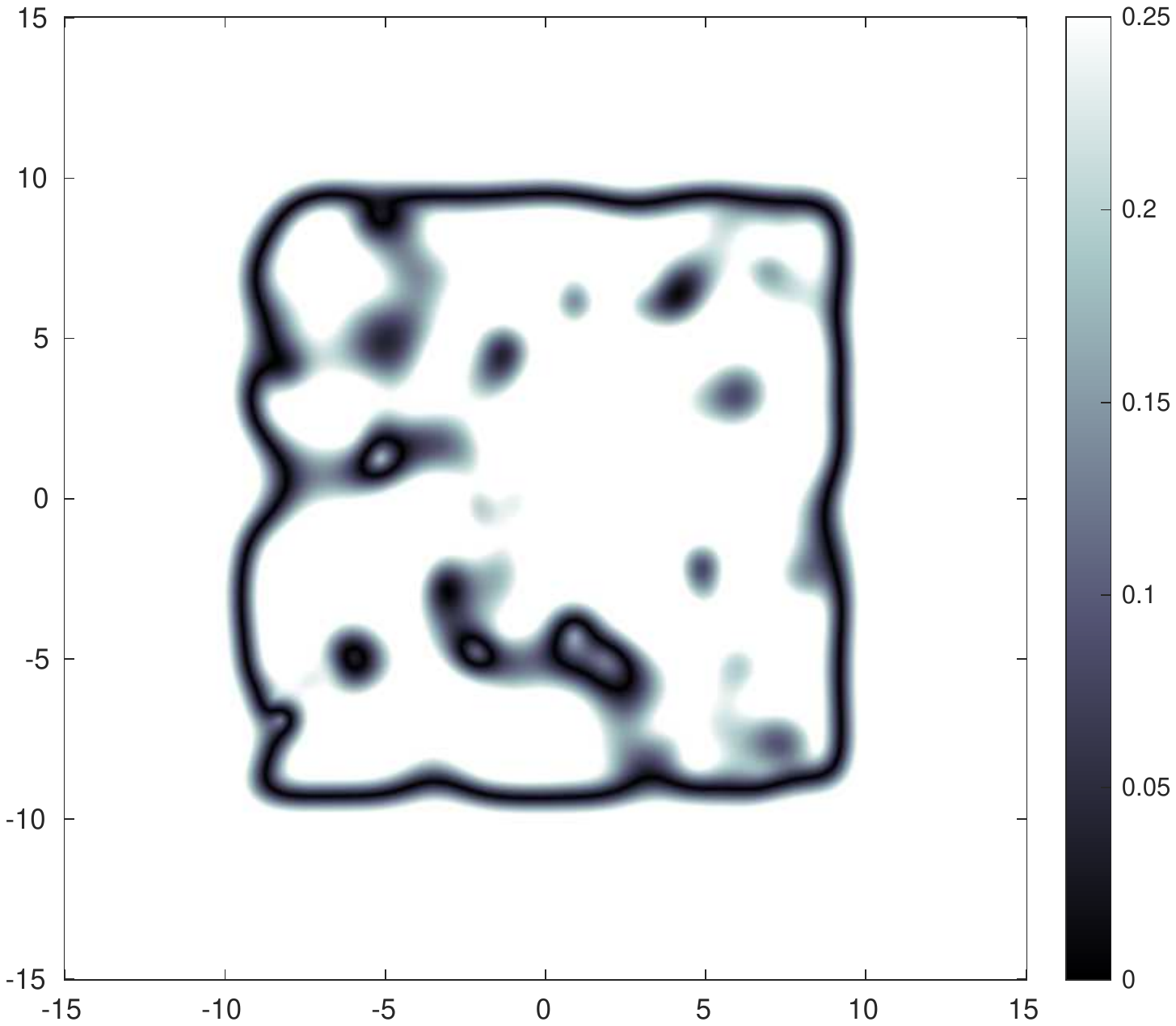}\vspace*{-0.3cm}
\\
{\footnotesize{}\makebox[10cm][l]{\raisebox{-0.08cm}[0.0cm][0.1cm]{$\kappa=0.2$, $E=0$\hspace*{4.4cm}$\kappa=0.1$, $E=0$}}}\\
\includegraphics[viewport=65bp 25bp 490bp 400bp,clip,scale=0.38]{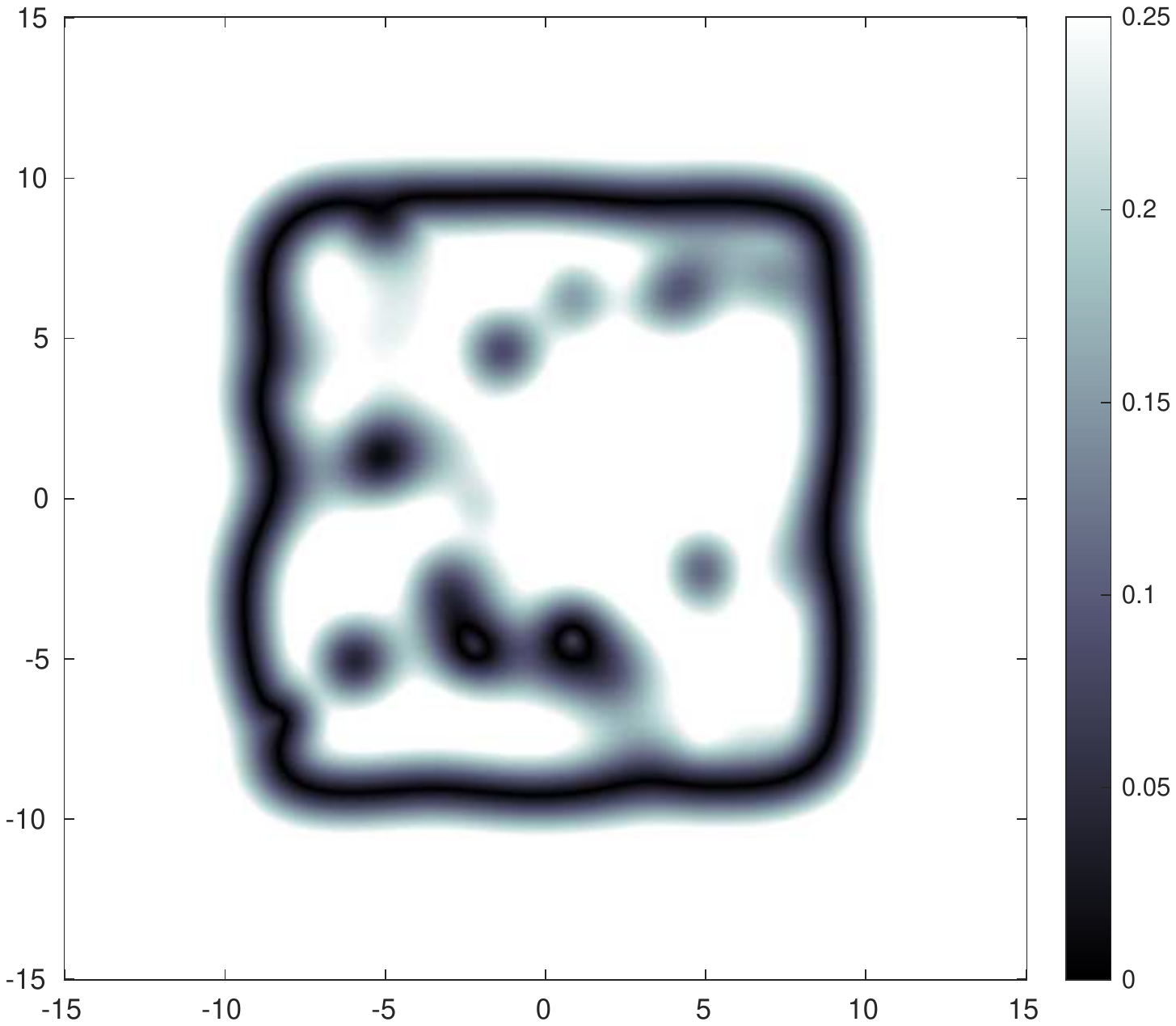}\quad{}\includegraphics[viewport=65bp 25bp 490bp 400bp,clip,scale=0.38]{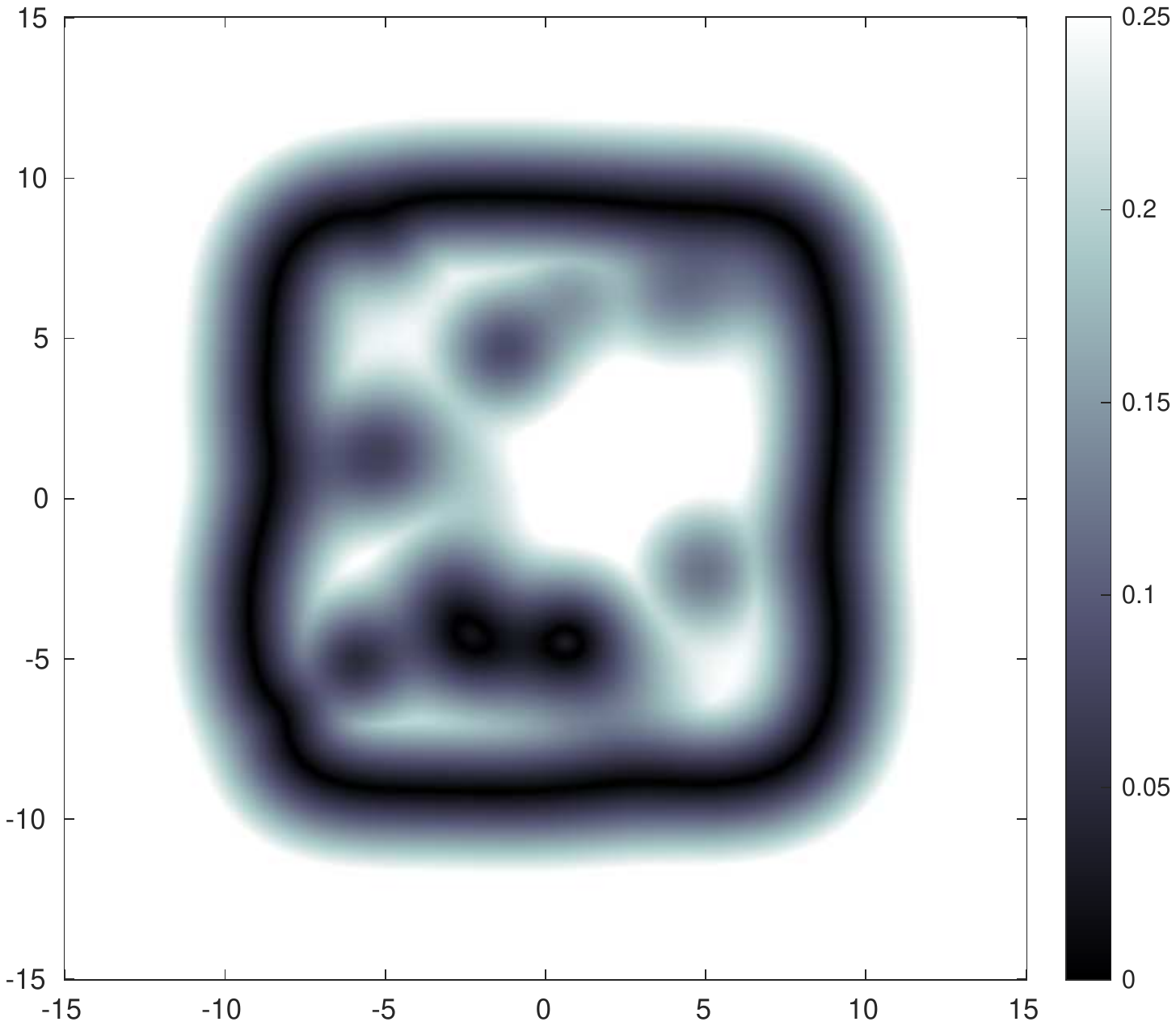}\vspace*{-0.3cm}
\\
{\footnotesize{}\makebox[10cm][l]{\raisebox{-0.08cm}[0.0cm][0.1cm]{$\kappa=0.05$, $E=0$\hspace*{4.4cm}$\kappa=0.0.02$, $E=0$}}}\\
\includegraphics[viewport=65bp 25bp 490bp 400bp,clip,scale=0.38]{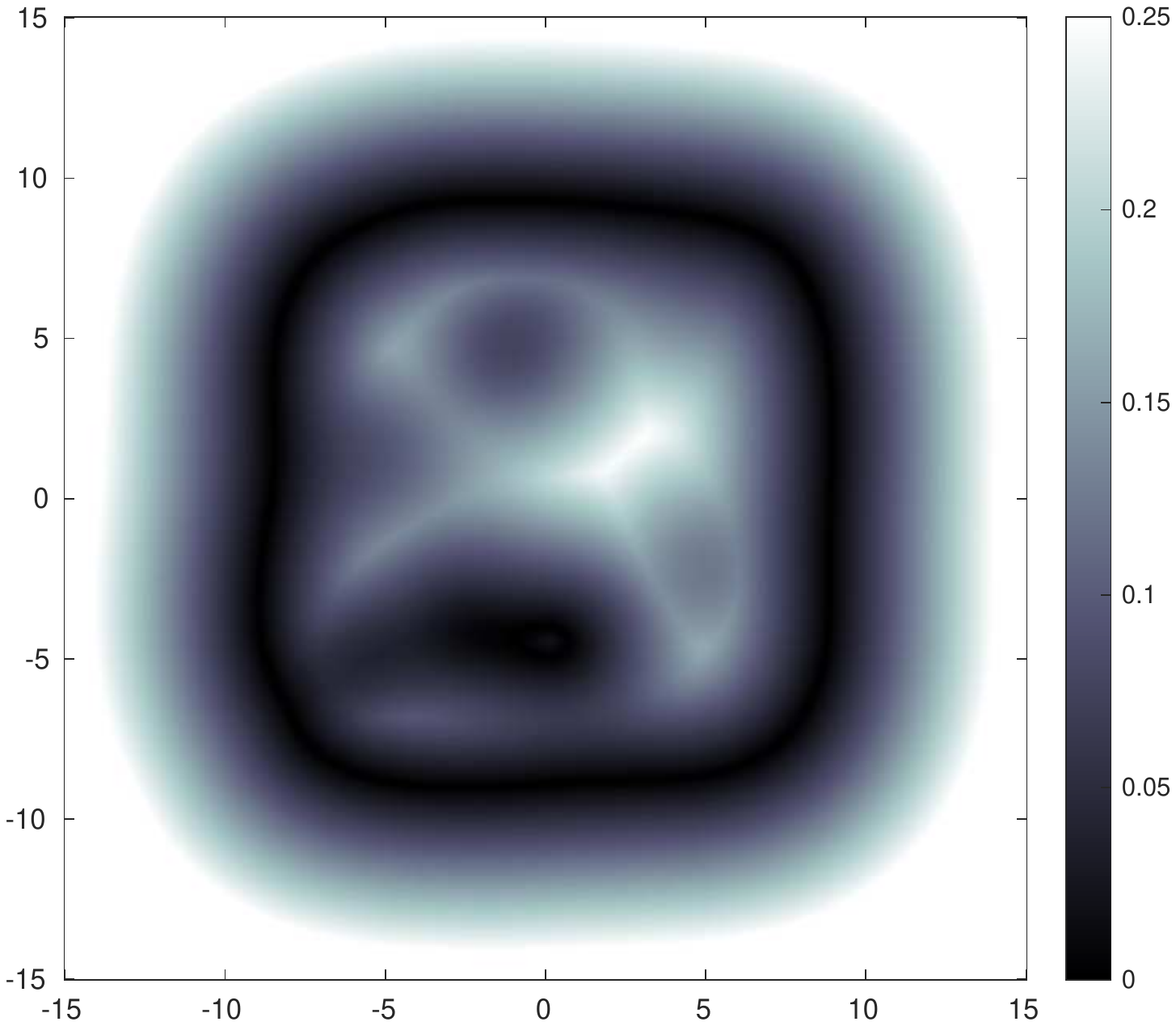}\quad{}\includegraphics[viewport=65bp 25bp 490bp 400bp,clip,scale=0.38]{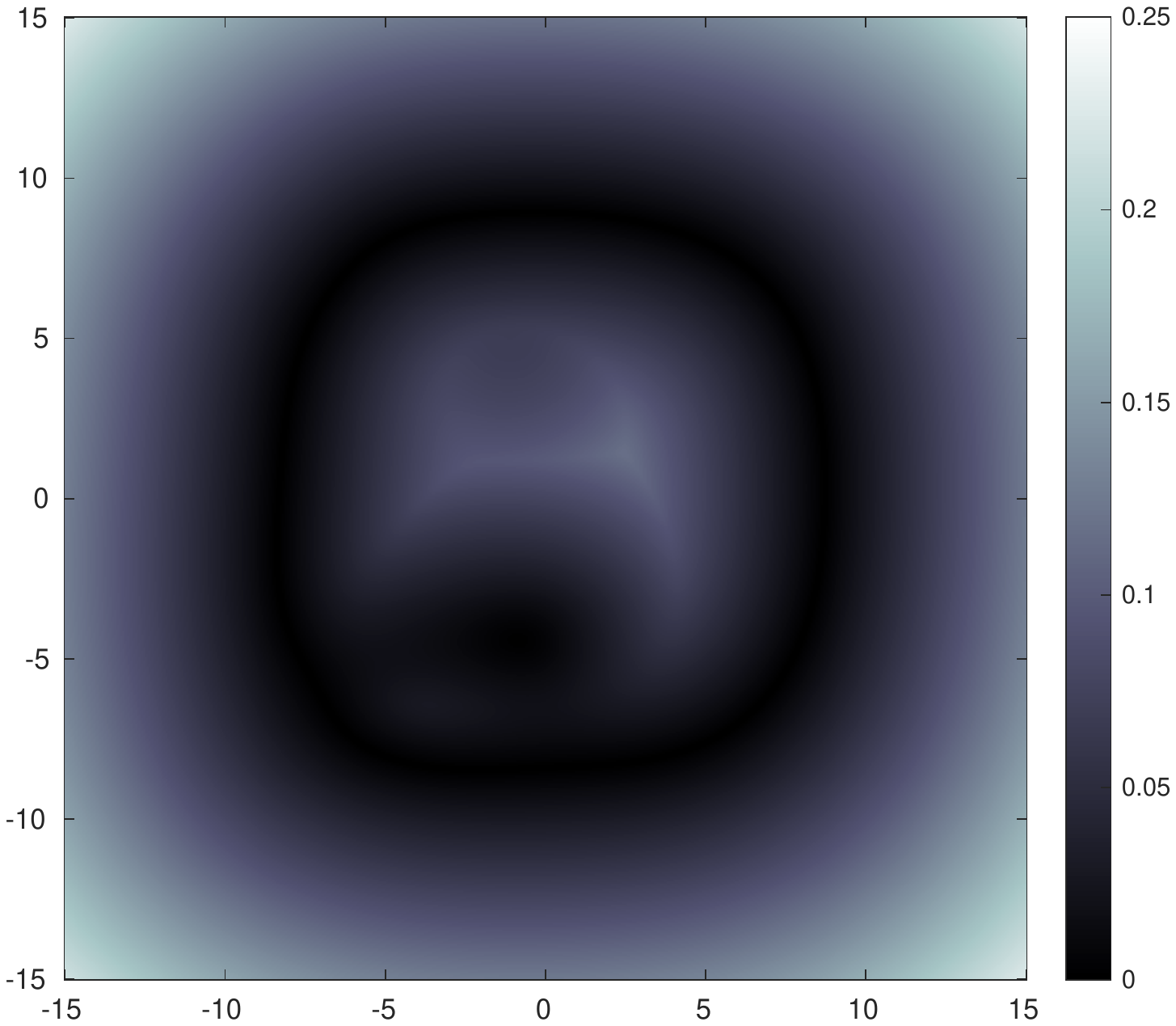}\caption{A slice of the pseudospectrum at $\lambda_{3}=E_{F}=0$. Strong disorder,
various $\kappa$. \label{fig:slice_zero_disorder}}
\end{figure}

\begin{figure}
\noindent {\footnotesize{}\makebox[10cm][l]{\raisebox{-0.08cm}[0.0cm][0.1cm]{$\kappa=5$, $E=0$\hspace*{4.4cm}$\kappa=2$, $E=0$}}}\\
\includegraphics[viewport=65bp 25bp 490bp 400bp,clip,scale=0.38]{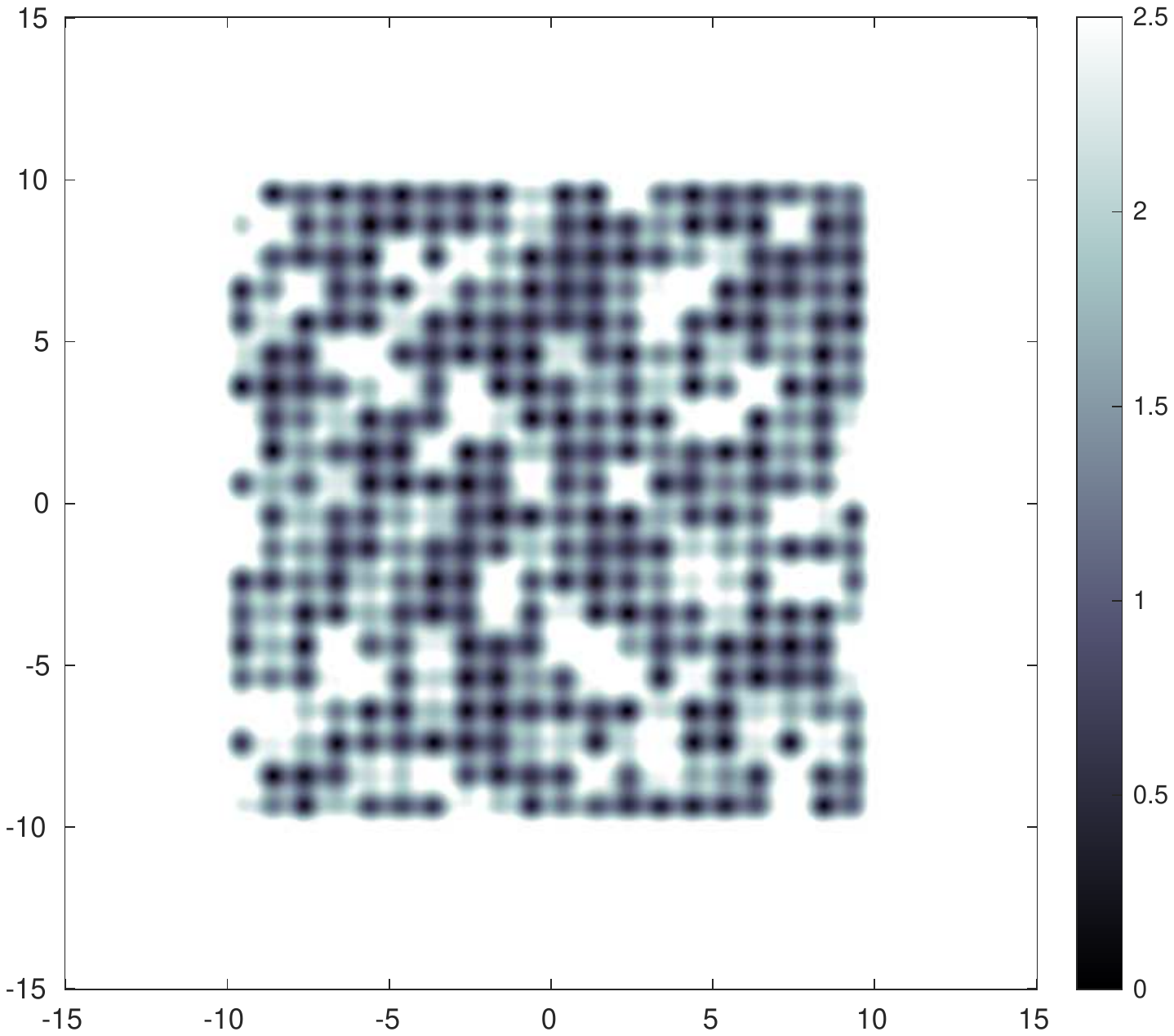}\quad{}\includegraphics[viewport=65bp 25bp 490bp 400bp,clip,scale=0.38]{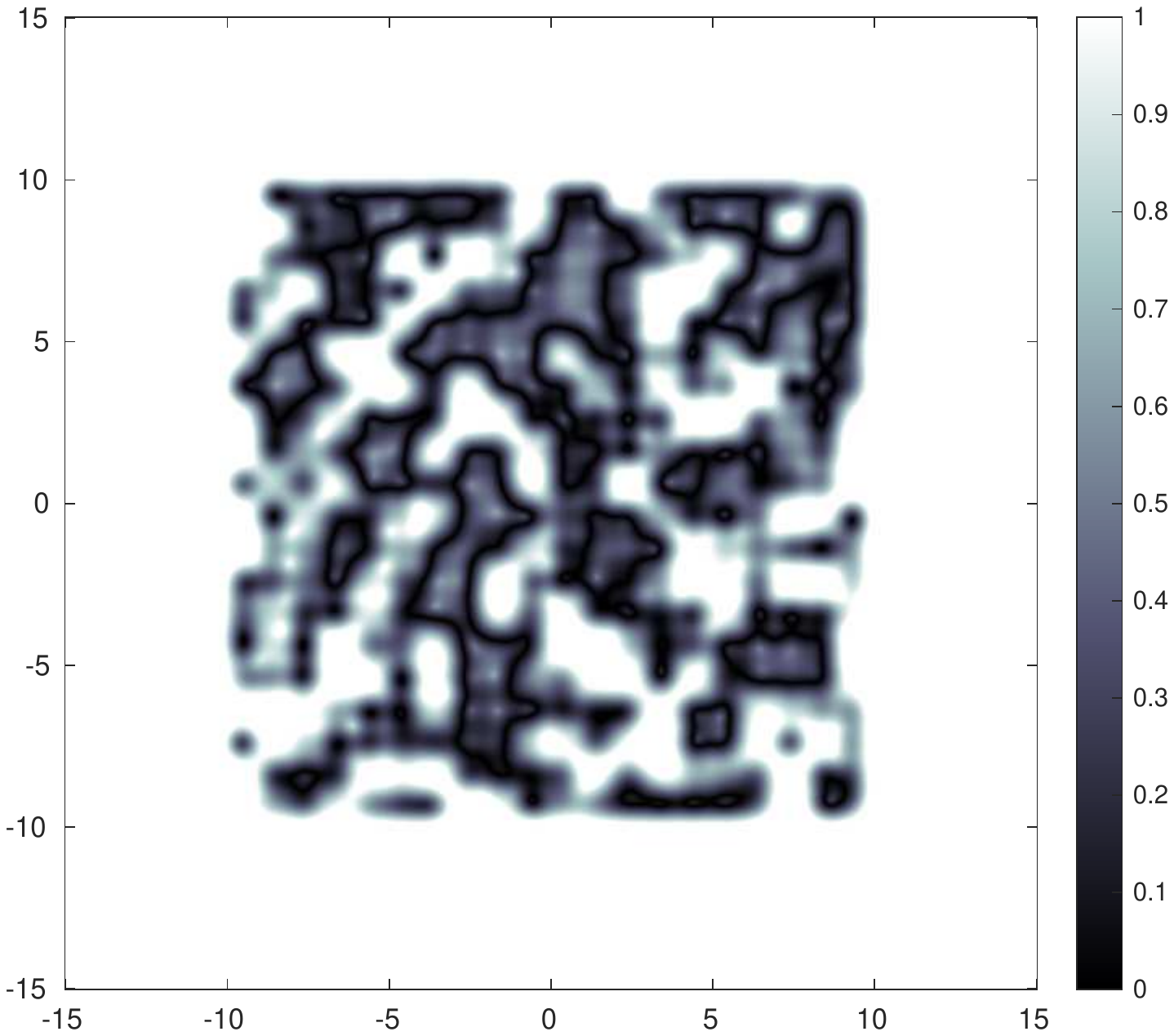}\vspace*{-0.3cm}
\\
{\footnotesize{}\makebox[10cm][l]{\raisebox{-0.08cm}[0.0cm][0.1cm]{$\kappa=1$, $E=0$\hspace*{4.4cm}$\kappa=0.5$, $E=0$}}}\\
\includegraphics[viewport=65bp 25bp 490bp 400bp,clip,scale=0.38]{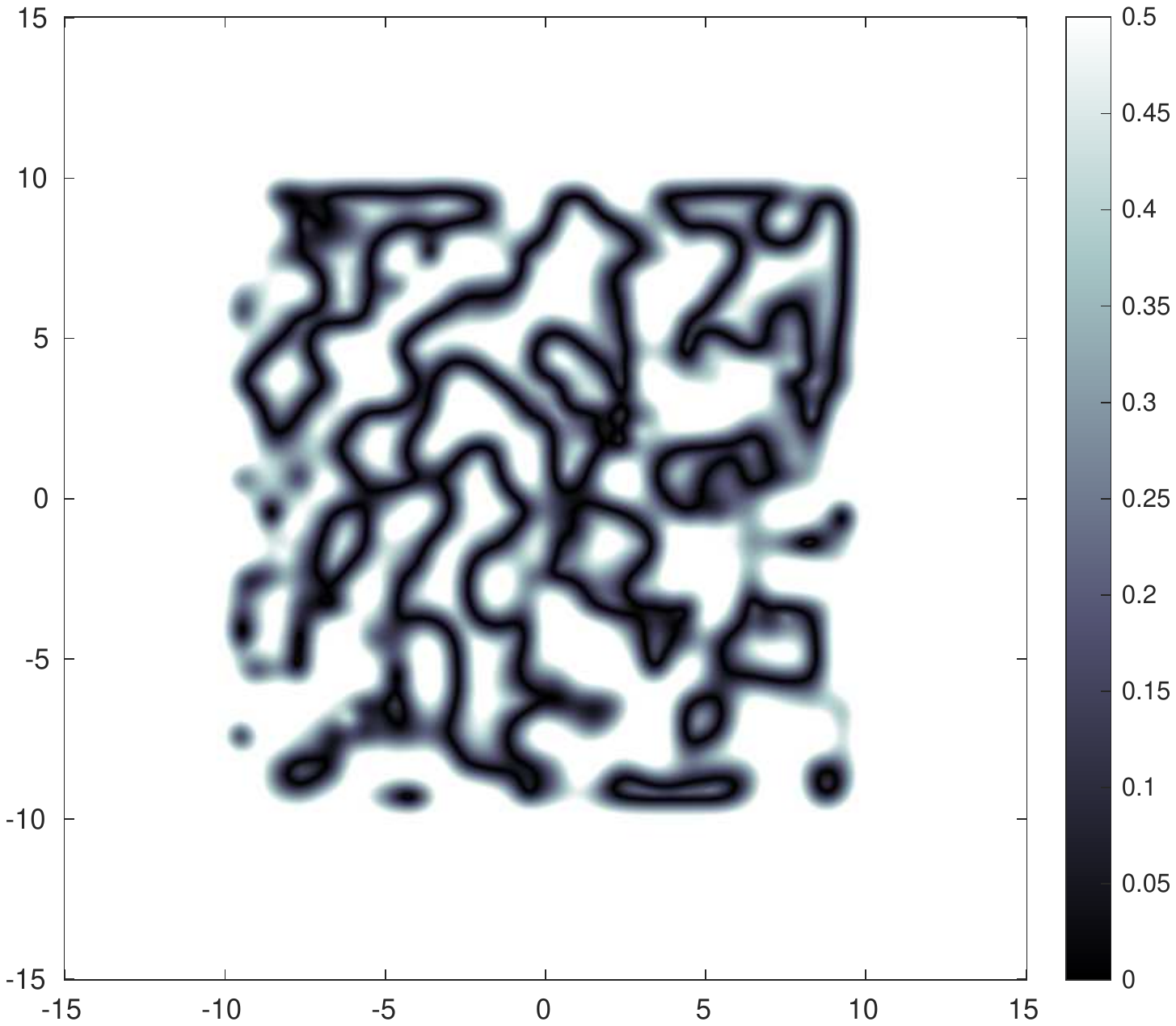}\quad{}\includegraphics[viewport=65bp 25bp 490bp 400bp,clip,scale=0.38]{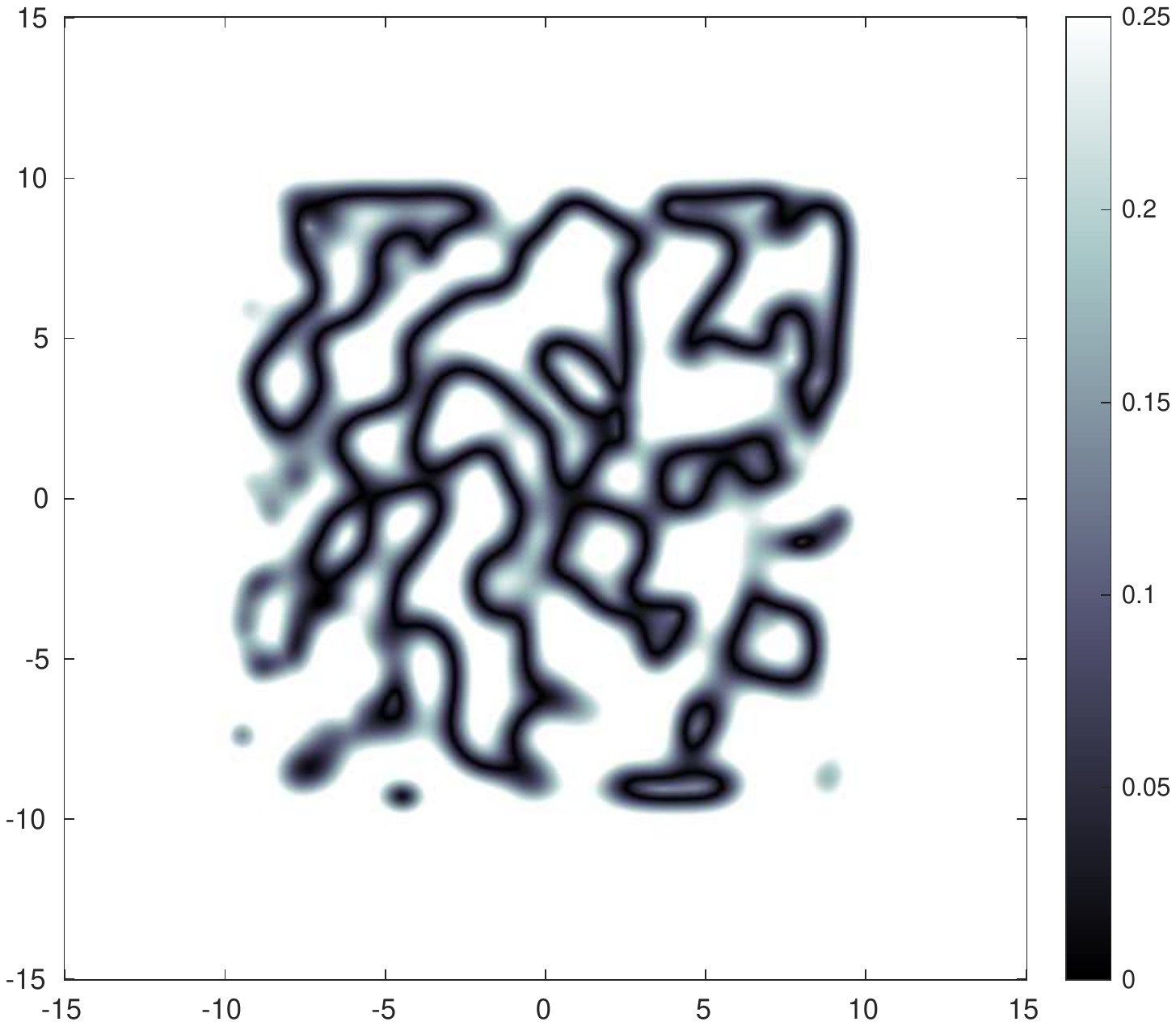}\vspace*{-0.3cm}
\\
{\footnotesize{}\makebox[10cm][l]{\raisebox{-0.08cm}[0.0cm][0.1cm]{$\kappa=0.2$, $E=0$\hspace*{4.4cm}$\kappa=0.1$, $E=0$}}}\\
\includegraphics[viewport=65bp 25bp 490bp 400bp,clip,scale=0.38]{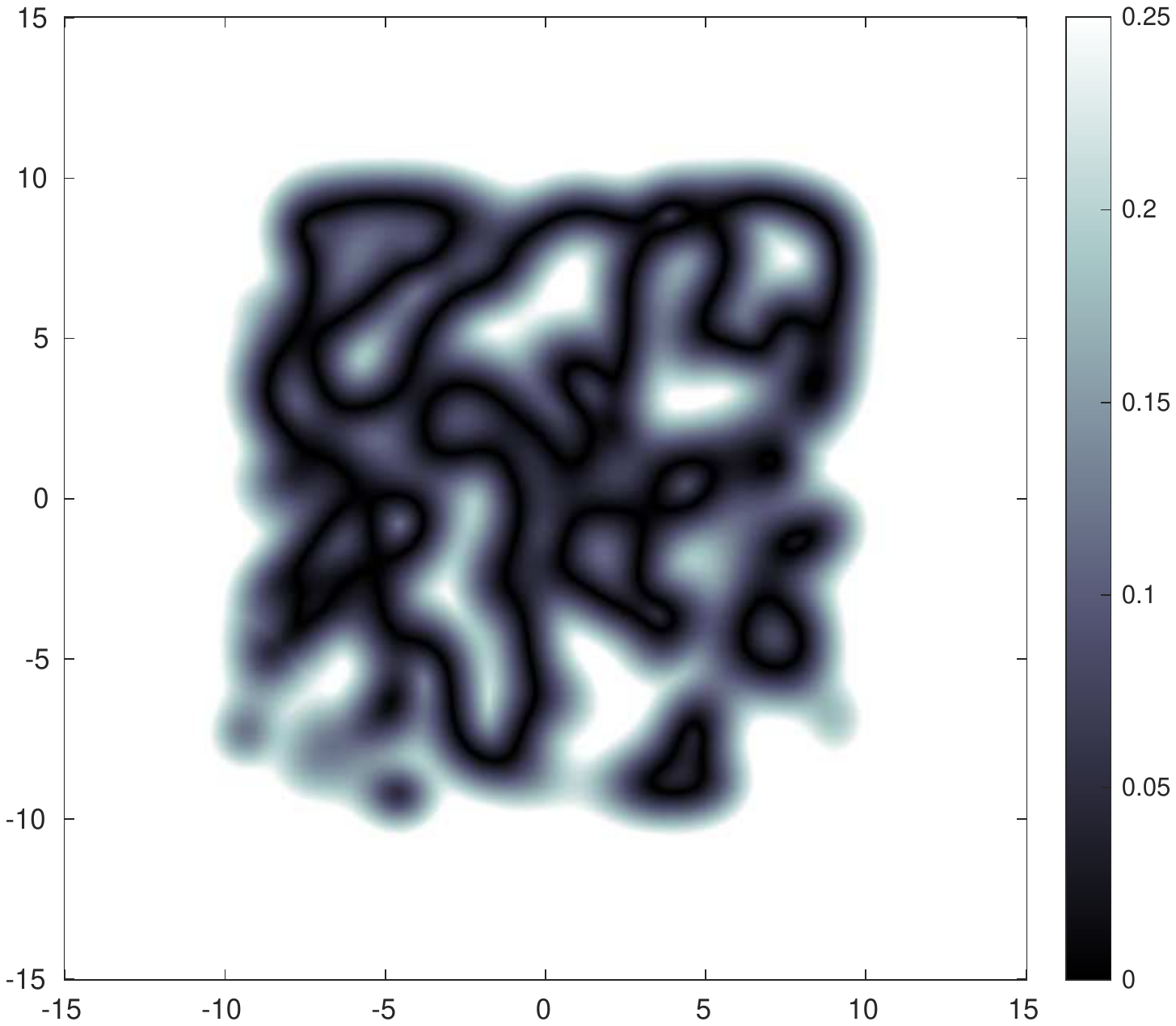}\quad{}\includegraphics[viewport=65bp 25bp 490bp 400bp,clip,scale=0.38]{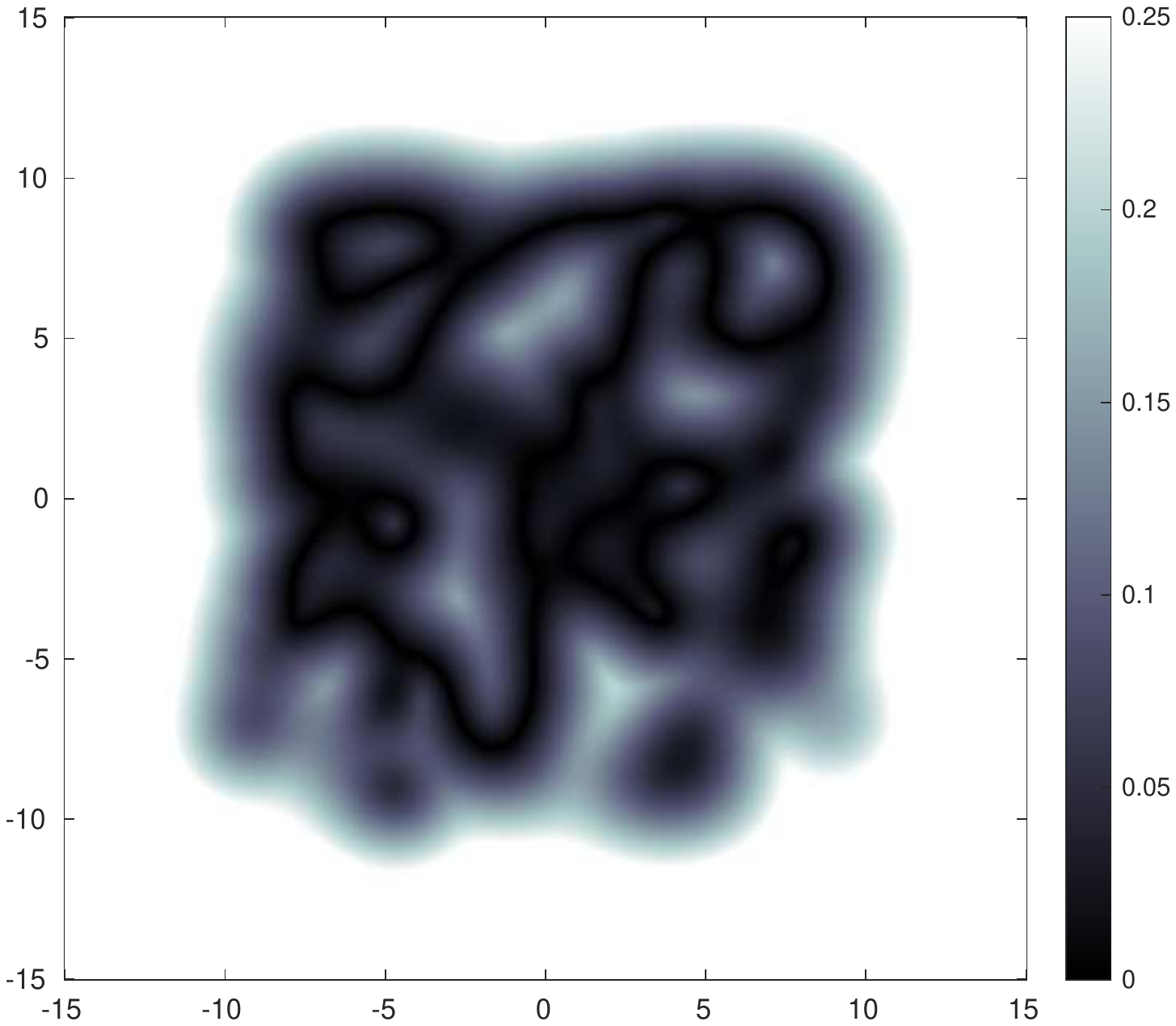}\vspace*{-0.3cm}
\\
{\footnotesize{}\makebox[10cm][l]{\raisebox{-0.08cm}[0.0cm][0.1cm]{$\kappa=0.05$, $E=0$\hspace*{4.4cm}$\kappa=0.0.02$, $E=0$}}}\\
\includegraphics[viewport=65bp 25bp 490bp 400bp,clip,scale=0.38]{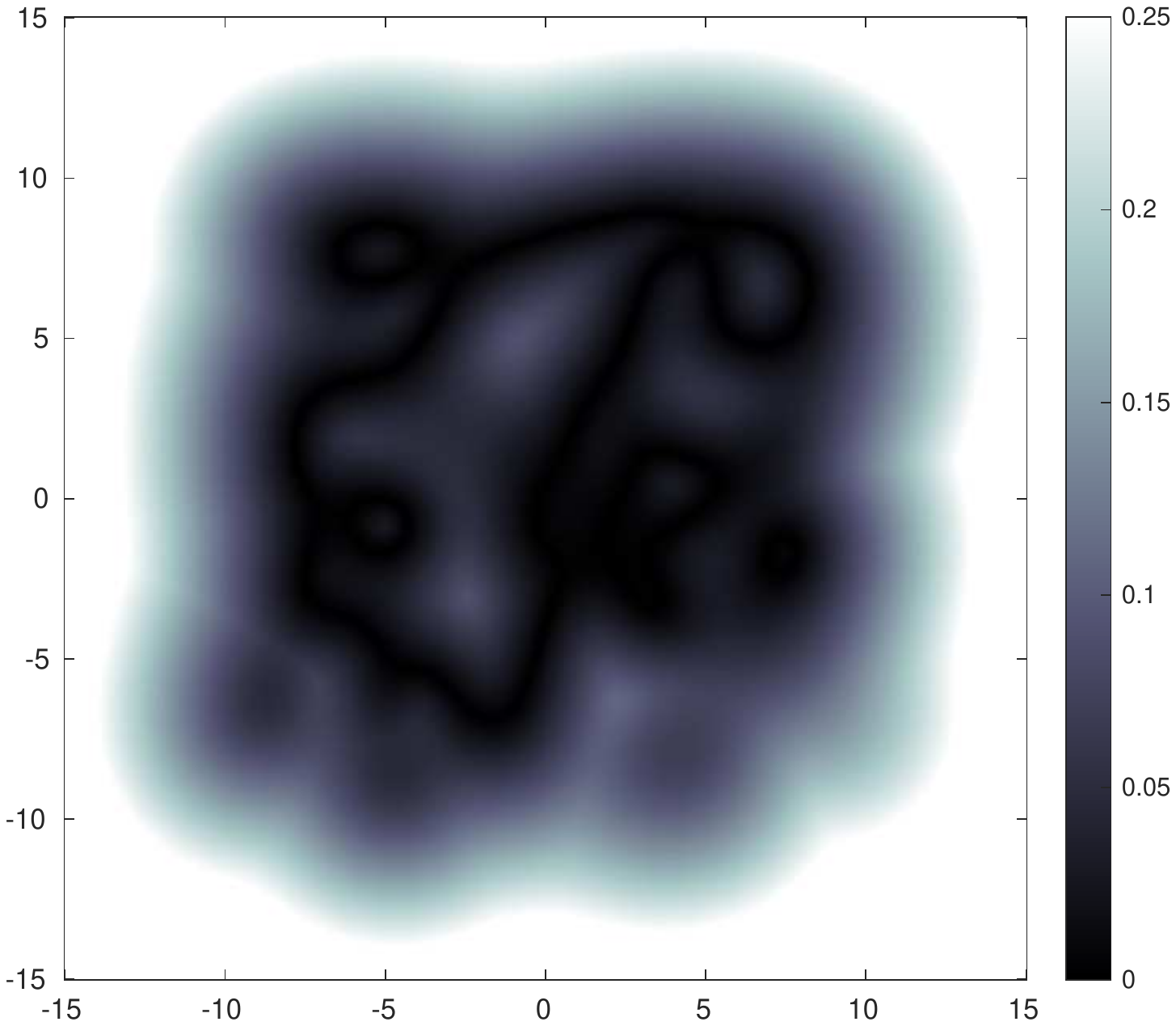}\quad{}\includegraphics[viewport=65bp 25bp 490bp 400bp,clip,scale=0.38]{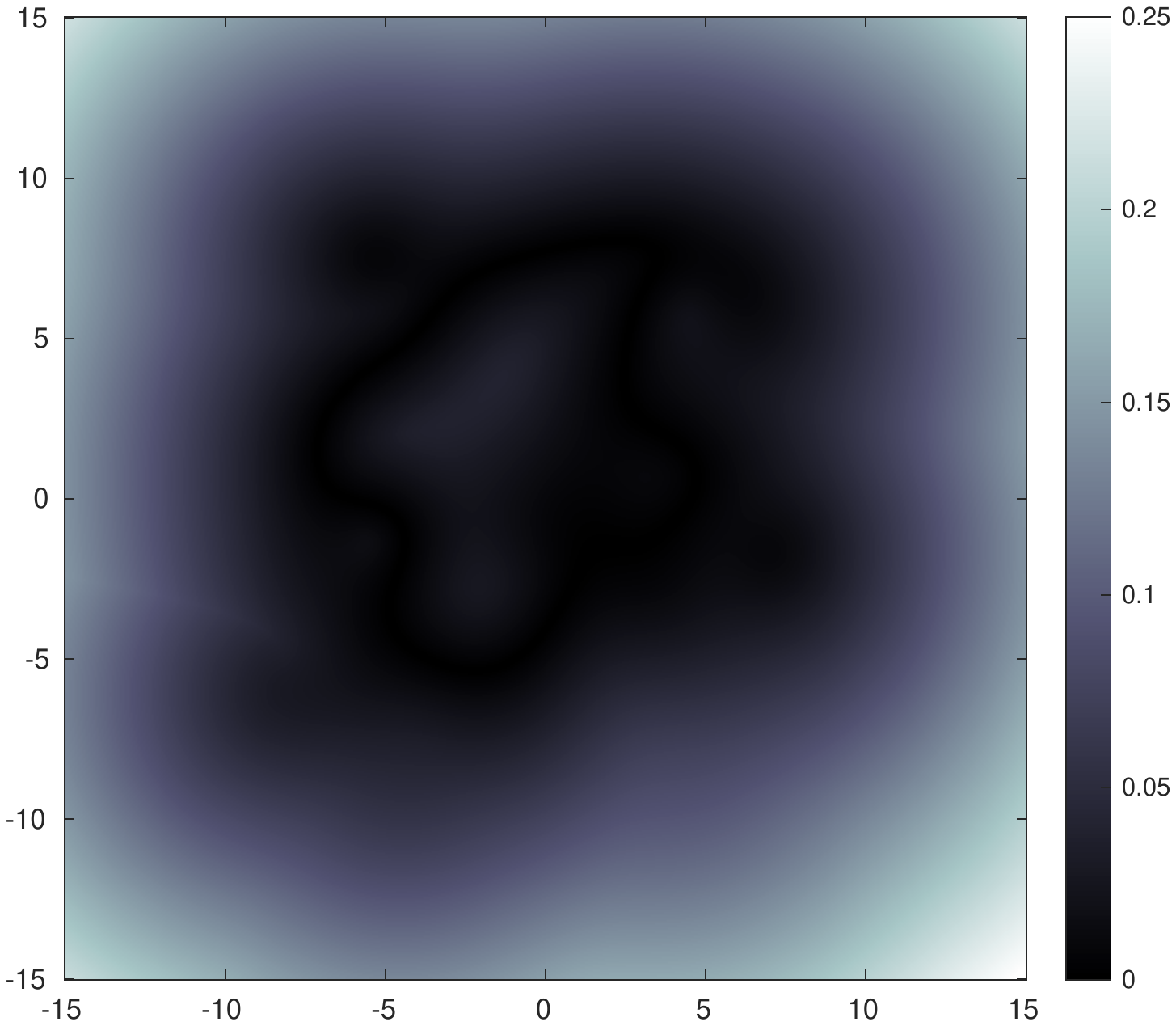}\caption{A slice of the pseudospectrum at $\lambda_{3}=E_{F}=-2.4$. Strong
disorder, various $\kappa$. \label{fig:slice_2.4_disorder}}
\end{figure}

Certainly the resulting algorithm seems fast. Figure~\ref{fig:Timing_localizer}
shows how the time to compute a single localizer index grows at a
smaller order than the growth we saw for the Bott index. It is not
so hard to compute a single localizer index for a system with $L\approx1000$
\cite{loring2018bulk}. However, this is not a thorough analysis if
we are looking at a disorder averaged index study. Open boundaries
and periodic boundary conditions will have different effects for the
same system size. Also, the localizer index is, at is heart, a local
index. To see the effect of growing system size, we ought to be working
with a global index. It is possible to make the localizer index behave
globally, but this conversion is not trivial. 

\begin{figure}[t]
\includegraphics[viewport=65bp 25bp 490bp 400bp,clip,scale=0.4]{slice_K1E0_ps}\quad{}\includegraphics[viewport=65bp 25bp 490bp 400bp,clip,scale=0.4]{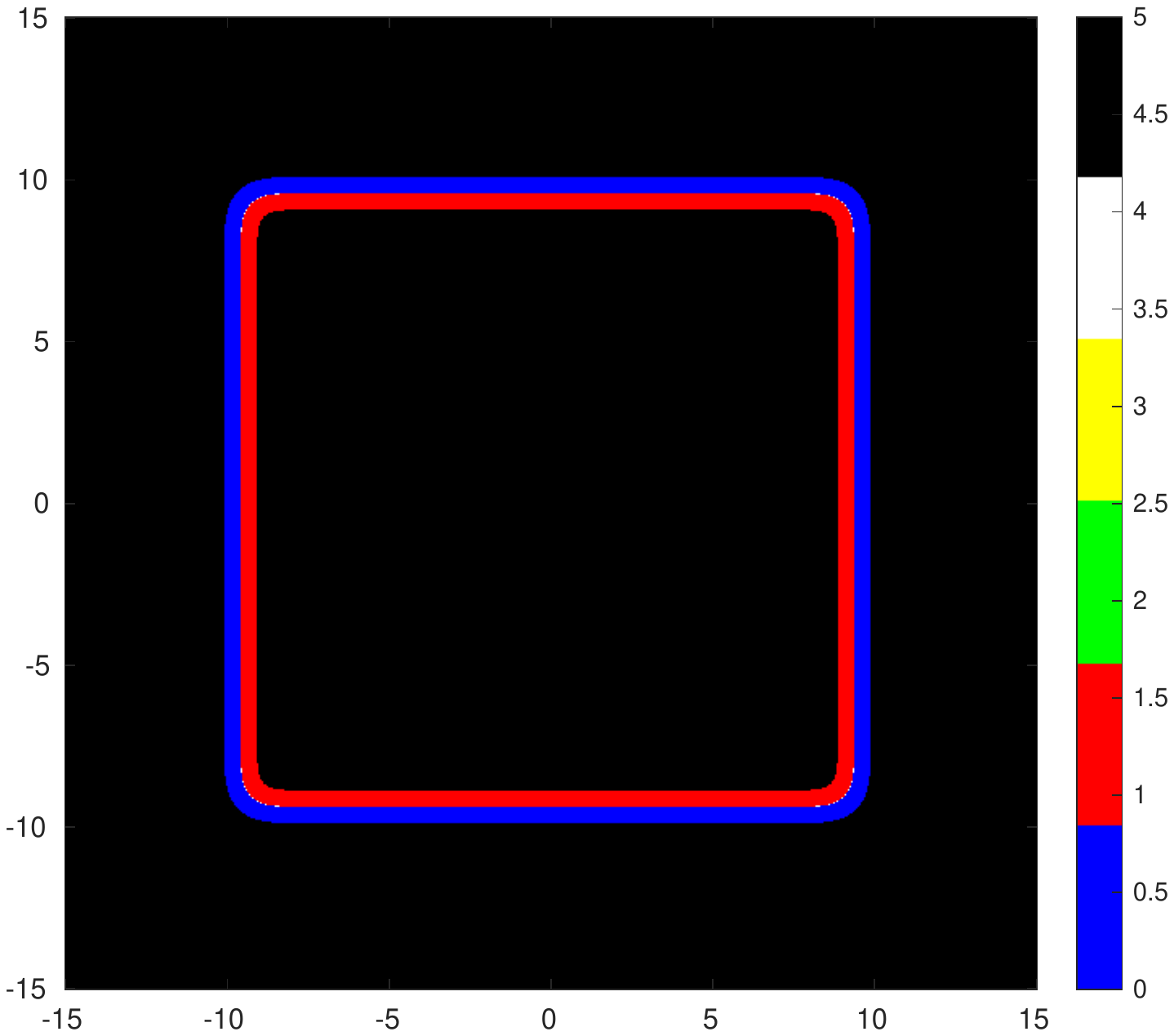}\vspace*{0.3cm}
\\
\includegraphics[viewport=65bp 25bp 490bp 400bp,clip,scale=0.4]{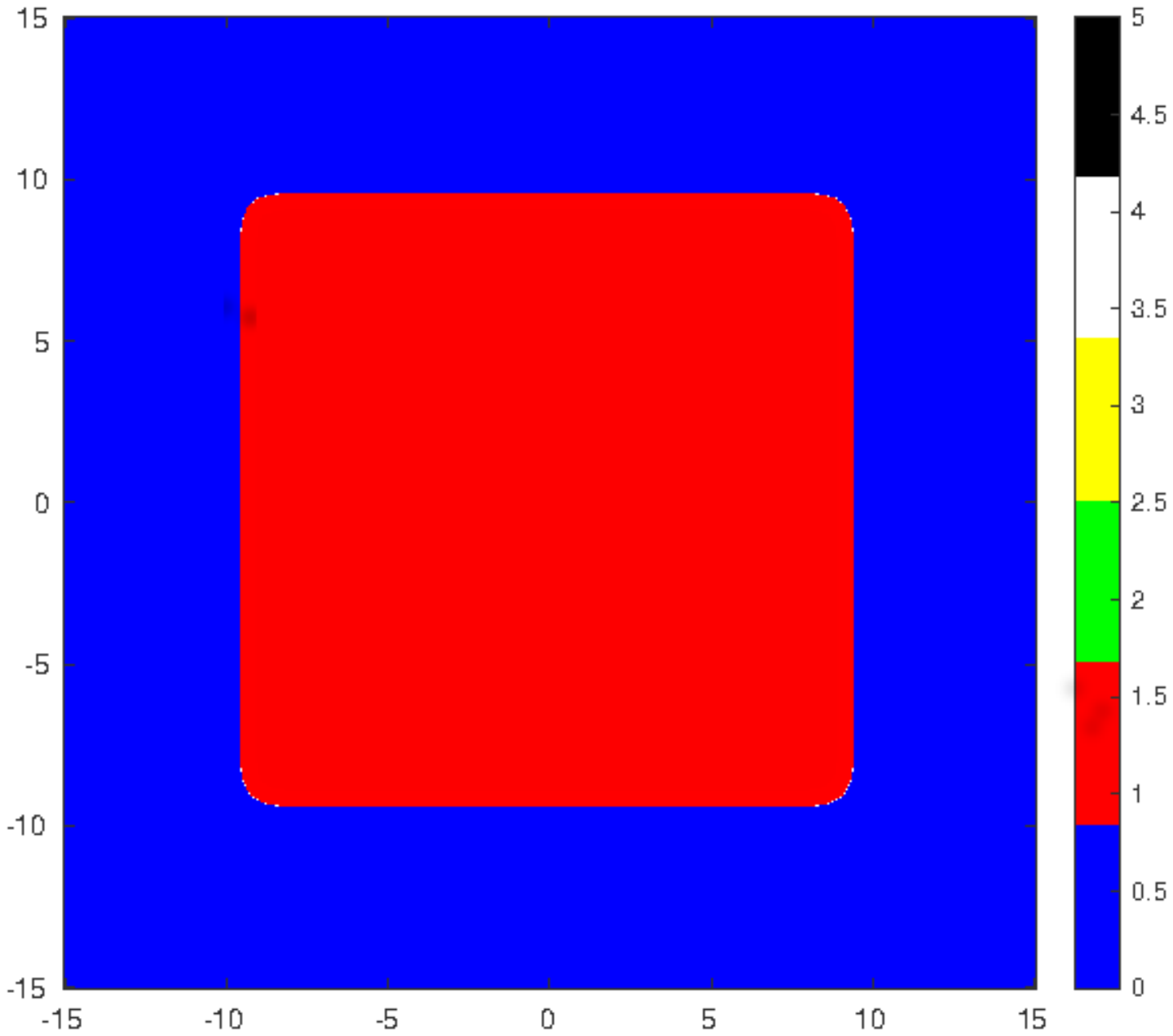}\quad{}\includegraphics[viewport=65bp 25bp 490bp 400bp,clip,scale=0.4]{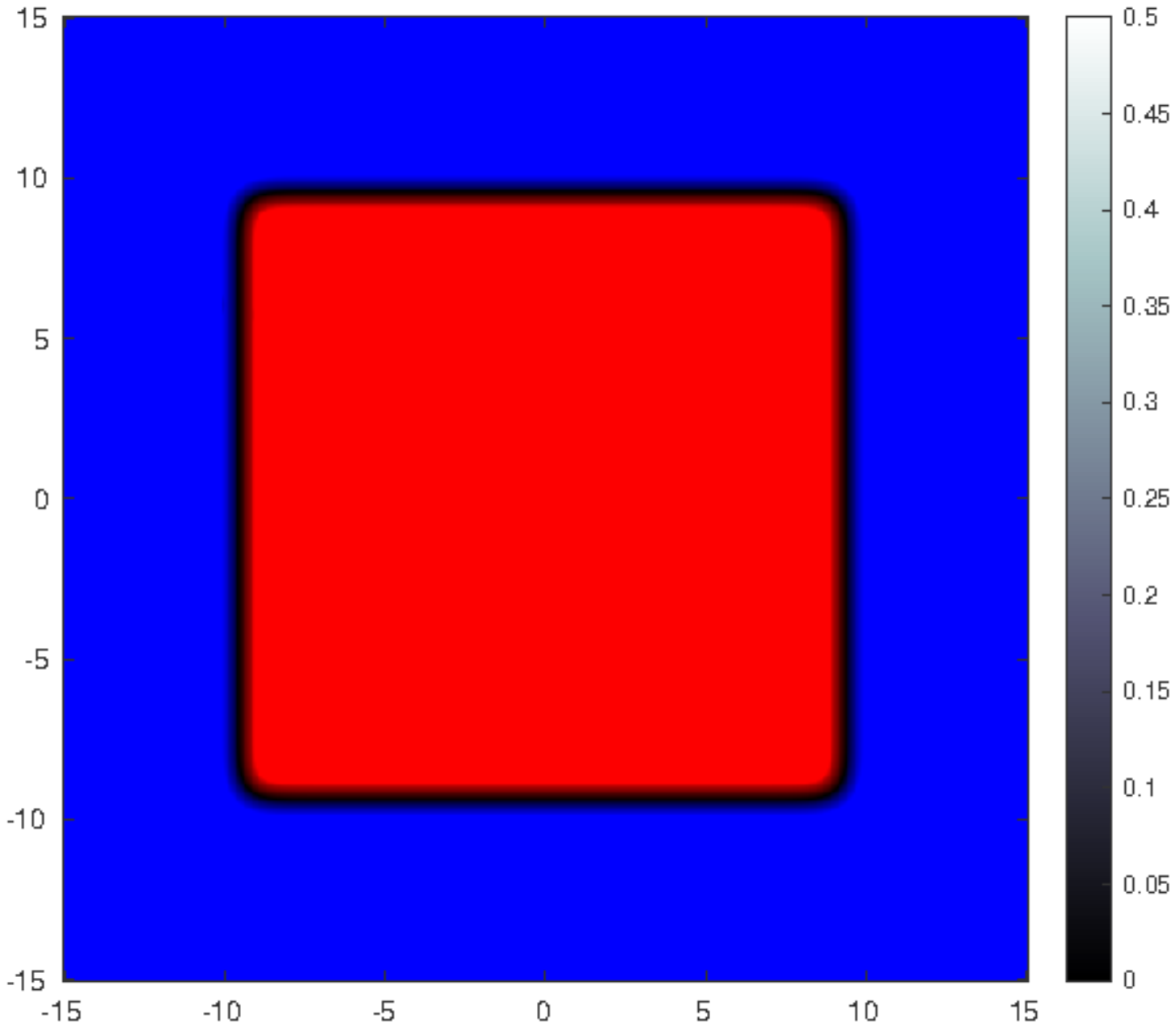}

\caption{Pseudospectrum. Computed index. Extended by using image editing software.
Index overlayed (darken only mode) on pseudospectrum. Blue is index
$0$. Red is index $1$. Total black is where the Clifford spectrum
and points where localizer gap is too small for the index to make
even a little sense. \label{fig:Painting_on_index}}
\end{figure}

Unless we are dealing with a defect (like the hole in the system used
in Figure~\ref{fig:Pseudospectrum_qc-hole}) we are going to set
$(\lambda_{1},\lambda_{2})$ to $(0,0)$, or whatever are the coordinates
of the center of the sample according to $X$ and $Y$. If we have
a disordered system and are sweeping in energy, then we may set $\lambda_{3}=E_{F}$
to a value where the localizer is, according to the computer, singular.
Technically, we should not calculate the index there, but it is often
simpler to treat $0$ as positive when counting eigenvalues. This
happens so infrequently as to be a negligible effect. (This is a guess.
In fact, the author knows of no study on the accuracy of calculating
the the signature of large, sparse Hermitian matrices. Most related
work in numerical linear algebra is on estimating the number of eigenvalues
in a region \cite{lin2016approximating}, which is of no use in the
context of numerical $K$-theory.)

In theory, we can avoid computing the pseudospectrum for a study of
disorder averaged index. However, we need a way to set $\kappa$,
and looking at the pseudospectrum is one way to figure values for
$\kappa$. The theory of how to set $\kappa$ is the subject of continuing
research \cite{L_S-B_finite_vol}.

Some may complain that if we need to tune the localizer index to use
it, then it is not useful. However, many measurement instruments need
tuning. In particular, when using a scanning tunneling microscope
(STM) to do spectroscopy, one must determine the sample to tip gap.
Let us focus on $E_{F}=0$ and ``probing'' at the center of the sample,
so $\boldsymbol{\lambda}=\boldsymbol{0}.$ For simplicity, let us
assume there is no lattice point at $(0,0)$. If $\kappa$ is very
small the localizer index is zero because then
\[
L_{\boldsymbol{0}}(\kappa X,\kappa Y,H)=\left[\begin{array}{cc}
H & \kappa X-i\kappa Y\\
\kappa X+i\kappa Y & -H
\end{array}\right]\approx\left[\begin{array}{cc}
H & 0\\
0 & -H
\end{array}\right]
\]
and the matrix on the right has spectrum that is symmetric across
$0$. Also the localizer index is zero if $\kappa$ is large because
then 
\[
L_{\boldsymbol{0}}(\kappa X,\kappa Y,H)=\left[\begin{array}{cc}
H & \kappa X-i\kappa Y\\
\kappa X+i\kappa Y & -H
\end{array}\right]\approx\kappa\left[\begin{array}{cc}
0 & X-iY\\
X+iY & 0
\end{array}\right]
\]
and, since $X+iY$ is normal we again find spectrum that is symmetric
across $0$.

\begin{figure}[tp]
{\footnotesize{}\makebox[10cm][l]{\raisebox{-0.08cm}[0.0cm][0.1cm]{$\kappa=1$, $E=0$ disorder $0$\hspace*{2.3cm}$\kappa=1$, $E=-2.4$ disorder $0$}}}\\
\includegraphics[viewport=65bp 25bp 490bp 400bp,clip,scale=0.4]{slice_K1E0_both}\quad{}\includegraphics[viewport=65bp 25bp 490bp 400bp,clip,scale=0.4]{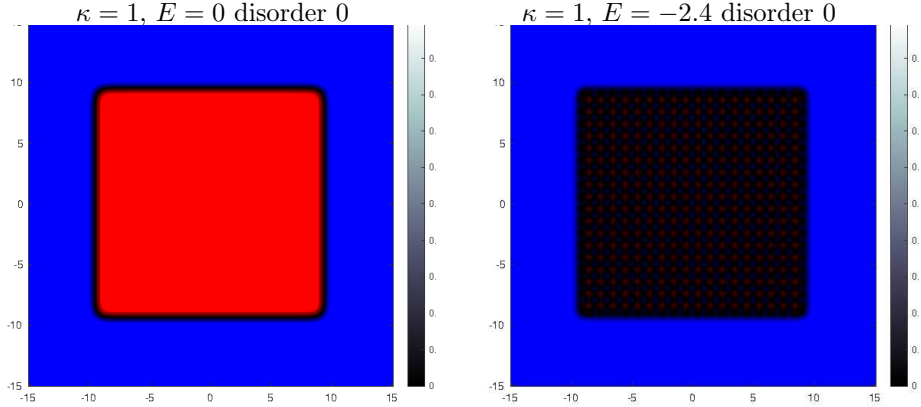}

\caption{Index as color overlay on pseudospectum, no disorder, at $\lambda_{3}=E_{F}=0$
and $\lambda_{3}=E_{F}=-2.4$. \label{fig:Index_clean_system} }
\end{figure}
Finding the intermediate values of $\kappa$ that work can be done
in several ways. One way is to look at the joint approximate eigenvectors
\cite[Lemma 1.2]{LoringPseudospectra} for $X$, $Y$ and $H$ that
can be extracted from the localizer, comparing the deviation of that
state in position to system size and the deviation of that state in
energy compared to the predicted bulk gap. Here we take a more visual
approach.

First we look at a slice of the pseudospectrum at two energy levels,
across a small sample, shown in Figures~\ref{fig:slice_zero_clean}
and \ref{fig:slice_2.4_clean}. We see that for large $\kappa$, the
image looks somewhat like STM microscopy imagery, just showing us
where the lattice points are. That is, $H$ is almost completely ignored.
For small $\kappa$, it is the position information that is getting
ignored. What we are looking for is a clear distinction between edge
effects at $E_{F}=0$ and bulk effects at $E_{F}=-2.4$. This is all
done with the Hamiltonian used in Section~\ref{sec:Extending-old_study},
and with $L=20$. Looking at Figures~\ref{fig:slice_zero_clean}
and \ref{fig:slice_2.4_clean} we might think $0.05\leq\kappa\leq1$
can work.

We also want a stability against disorder. With the rather strong
disorder as used in Section~\ref{sec:Extending-old_study}, and just
one sample, the same slices are shown in Figures~\ref{fig:slice_zero_disorder}
and \ref{fig:slice_2.4_disorder}. At $E_{F}=0$, as shown in Figure~\ref{fig:slice_zero_disorder},
we see still a clear boundary and lots of area inside the boundary
which it is safe to assume has index $0$ where defined, at least
for $0.1\leq\kappa\leq1$. The data in Figure~\ref{fig:slice_2.4_disorder}
is harder to interpret without the $K$-theory data. What we expect,
based on the average Bott index study as in Figure~\ref{fig:Bott-index-averaged},
is that the slice at $E_{F}=-2.4$ with this level of disorder should
have, approximately by area, equal parts index $0$ and index $1$.

\begin{figure}[tp]
{\footnotesize{}\makebox[10cm][l]{\raisebox{-0.08cm}[0.0cm][0.1cm]{$\kappa=1$, $E=0$ disorder $8$\hspace*{2.3cm}$\kappa=1$, $E=-2.4$ disorder $8$}}}\\
\includegraphics[viewport=65bp 25bp 490bp 400bp,clip,scale=0.4]{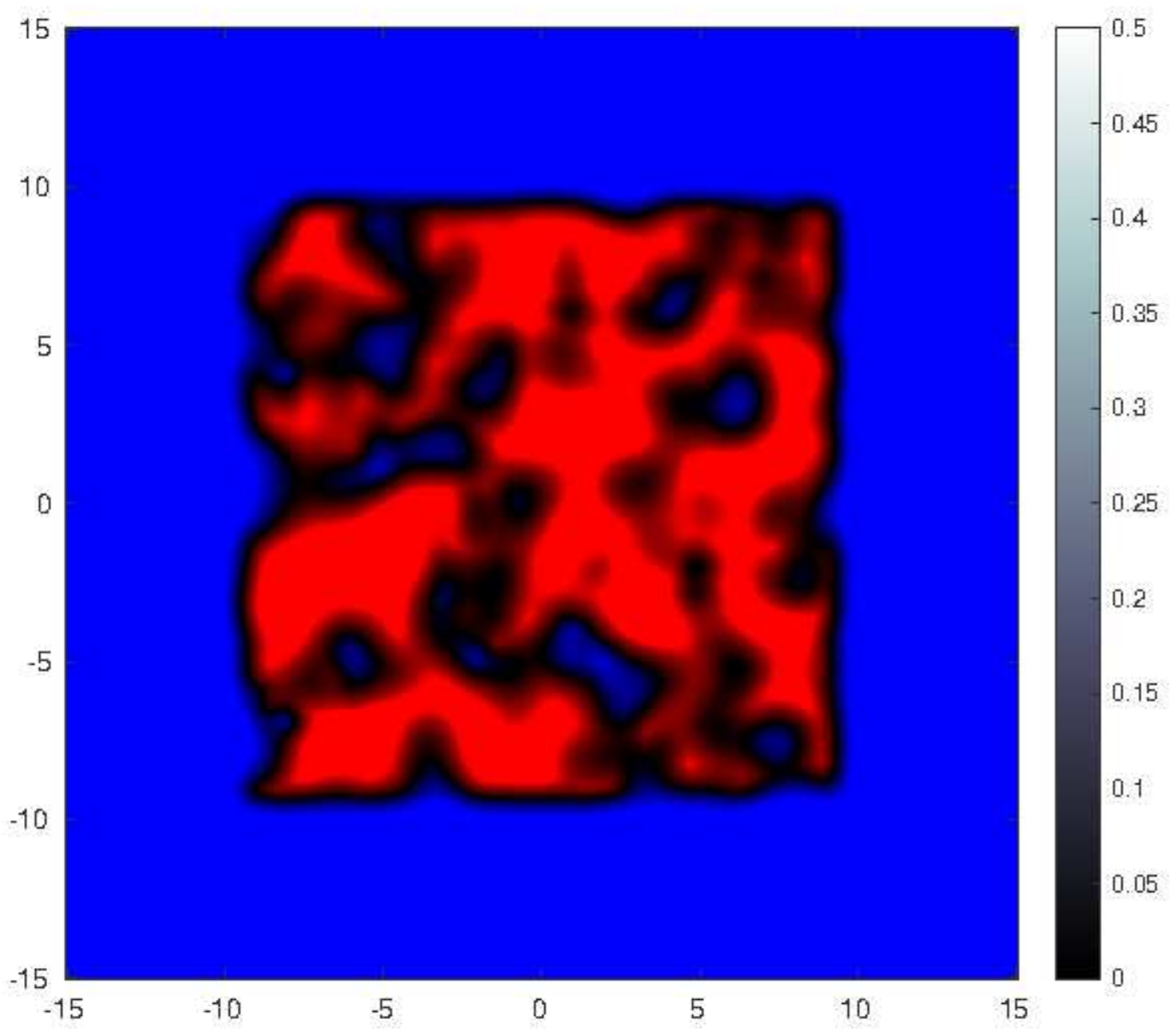}\quad{}\includegraphics[viewport=65bp 25bp 490bp 400bp,clip,scale=0.4]{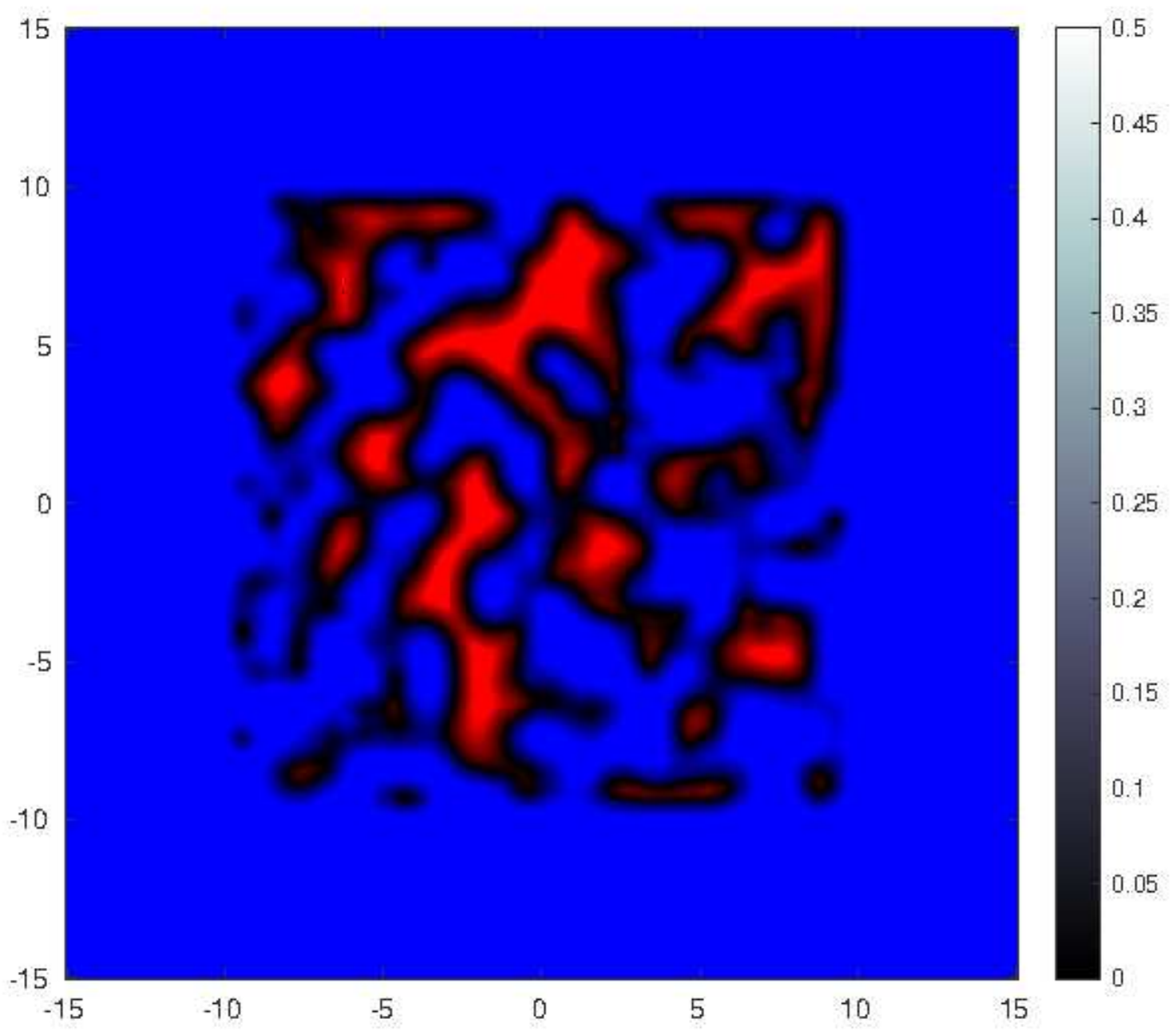}

\caption{Index as color overlay on pseudospectum, with disorder, at $\lambda_{3}=E_{F}=0$
and $\lambda_{3}=E_{F}=-2.4$. \label{fig:Index_disorder} }
\end{figure}

Computing multivariate pseudospectrum has not caught the attention
of numerical analysts. The method the author has developed is rather
crude. It does have one optimization, which is that when a large gap
(meaning
\[
\mathrm{gap}_{\lambda}(M_{1},M_{2},M_{3})=\left\Vert \left(L_{\boldsymbol{\lambda}}(\kappa X,\kappa Y,H)\right)^{-1}\right\Vert ^{-1}
\]
or the smallest absolute value of an eigenvalue of the localizer)
is found, there is no need to compute the gap a near locations. This
is based on \cite[Lemma 7.2]{LoringPseudospectra}. In regions where
the gap stays enough above zero (say $10^{-8})$ where we can trust
the numerical calculations, the index cannot change. As such, well-optomized
software would first compute the pseudospectrum, ignoring regions
where the gap is large, then compute the contiguous gapped regions,
then compute the index at one point in each region. For now, we don't
have such an optomization, and compute the index everywhere where
the gap is neither too small to be meaningful or too large to be necessary.
On hopes an applied mathematician will find this of interest in the
near future.

The index data computed is as shown in the top-left panel of Figure~\ref{fig:Painting_on_index}.
The author shamelessly used image processing software to extend the
index data to contiguous regions and then used a darking-only overlay
of the index color data over the pseudospectrum. Regions that are
too dark to distinguish hue are where the local index is nearly meaningless.
It is differing index in locations with a decent gap that are of interest,
following the logic of \cite[Lemma 7.5]{LoringPseudospectra}. This
seems like a good way to visualize the index and pseudospectrum together.

Recall that the pseudospectrum is a grey scale map on three space,
and it is regions of 3-space we should be coloring. Producing and
displaying such information is difficult to do with current algorithms.
An attempt to display the full Clifford pseudospectrum was made in
Figures~\ref{fig:Pseudospectrum_qc} and \ref{fig:Pseudospectrum_qc-hole},
and one sees the large interior space, in the shape of a distorted
cube in Figure~\ref{fig:Pseudospectrum_qc} and a distorted solid
torus in Figure~\ref{fig:Pseudospectrum_qc-hole}, where the index
is nonzero. There seem to be smaller regions that most likely have
nonzero index at higher and lower energies, but marking this by index
will be difficult to even display. 

For now, we are just trying explain what are good values of $\kappa$
for systems here of $L=20$. Figure~\ref{fig:Index_clean_system}
shows the index data with the pseudospectral data for $\kappa=1$
and a clean system. Figure~\ref{fig:Index_disorder} shows the same
but now with disorder, showing ``in the bulk'' mainly index $1$ at
$E_{F}=0$ and a blend of index $0$ and index $1$ at $E_{F}=-2.4$.

The fact that the left panel of Figure~\ref{fig:Index_disorder}
shows some index $0$ well in from the edge indicates a difficulty
we have in dealing with open boundaries. Changing $\kappa$ does not
make this go away. What works is moving to a larger system, say $L=60.$
This is above where we can easily make pictures. What was advocated
in \cite{LoringPseudospectra} was to keep $\kappa$ constant as $L$
increases to create a local invariant, and to set $\kappa=C/L$ for
some fixed $C$ to create a global index. Setting $\kappa=C/L$ may
well work in a gapped system. It seems that when dealing with a mobility
gap the situation can be more subtle.

\section{Localizer index goes global \label{sec:Localizer-as-global}}

\begin{figure}
\includegraphics[viewport=22bp 15bp 410bp 285bp,clip,scale=0.6]{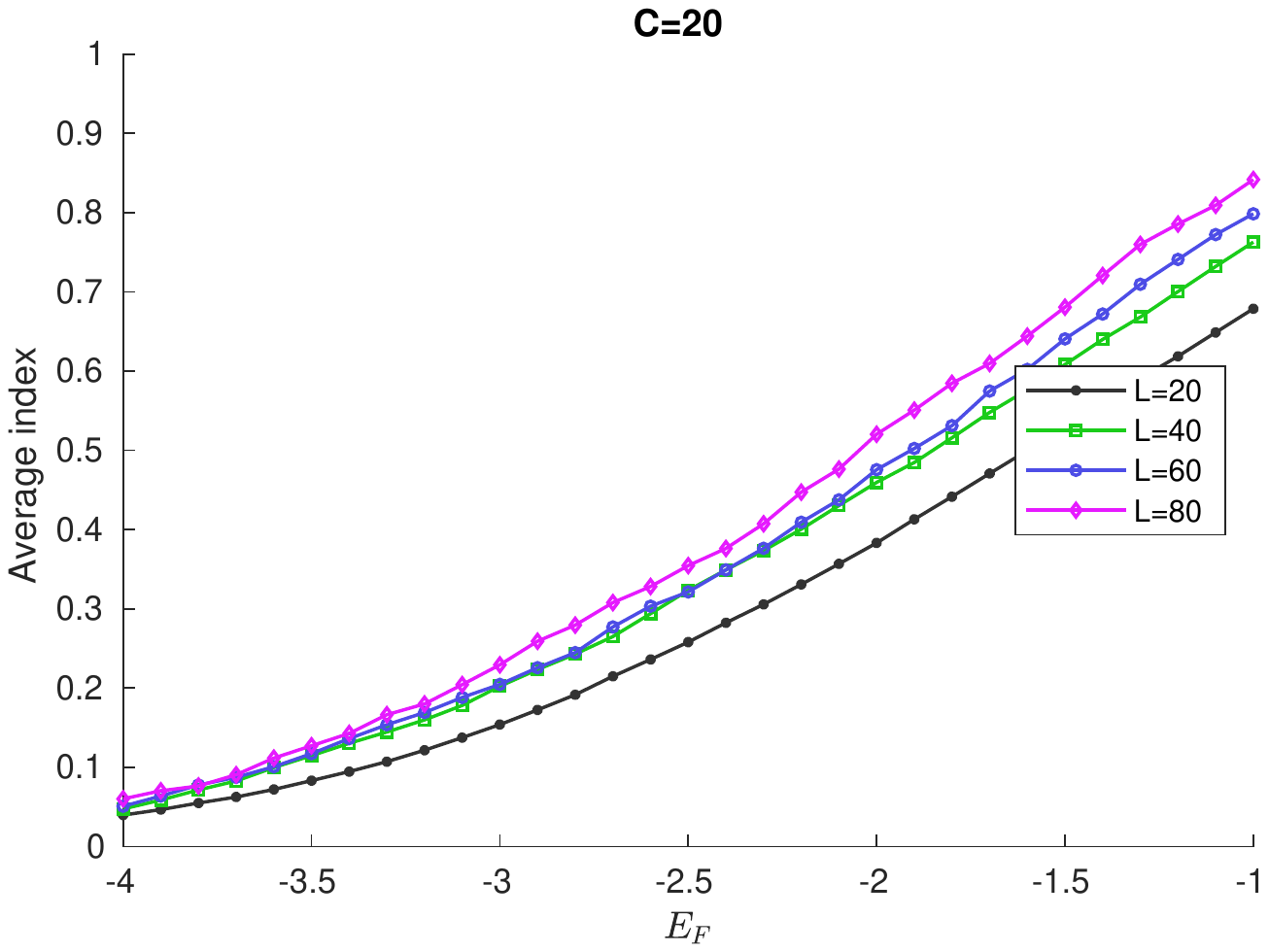}\includegraphics[viewport=22bp 15bp 410bp 285bp,clip,scale=0.6]{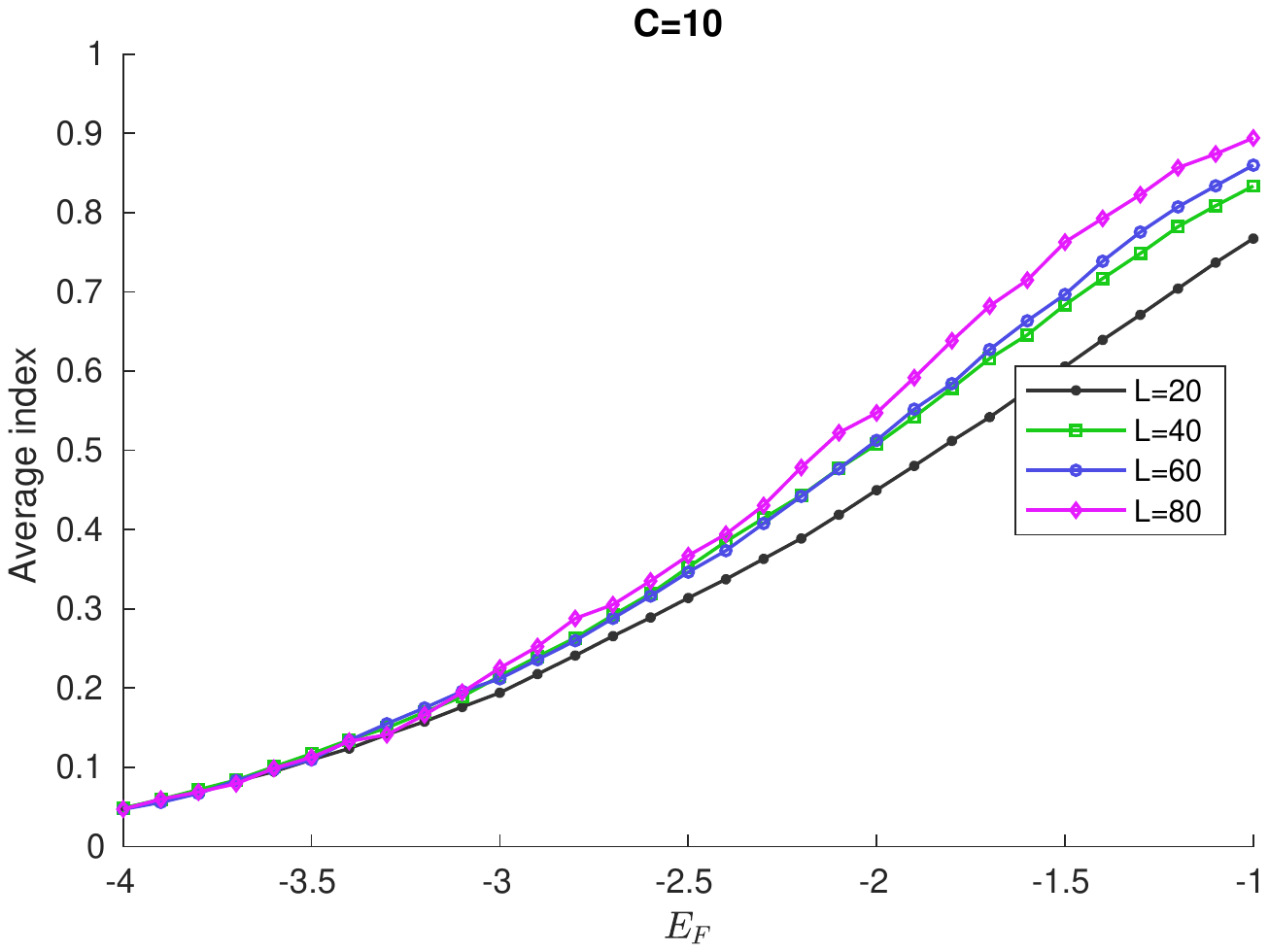}\\
\includegraphics[viewport=22bp 15bp 410bp 285bp,clip,scale=0.6]{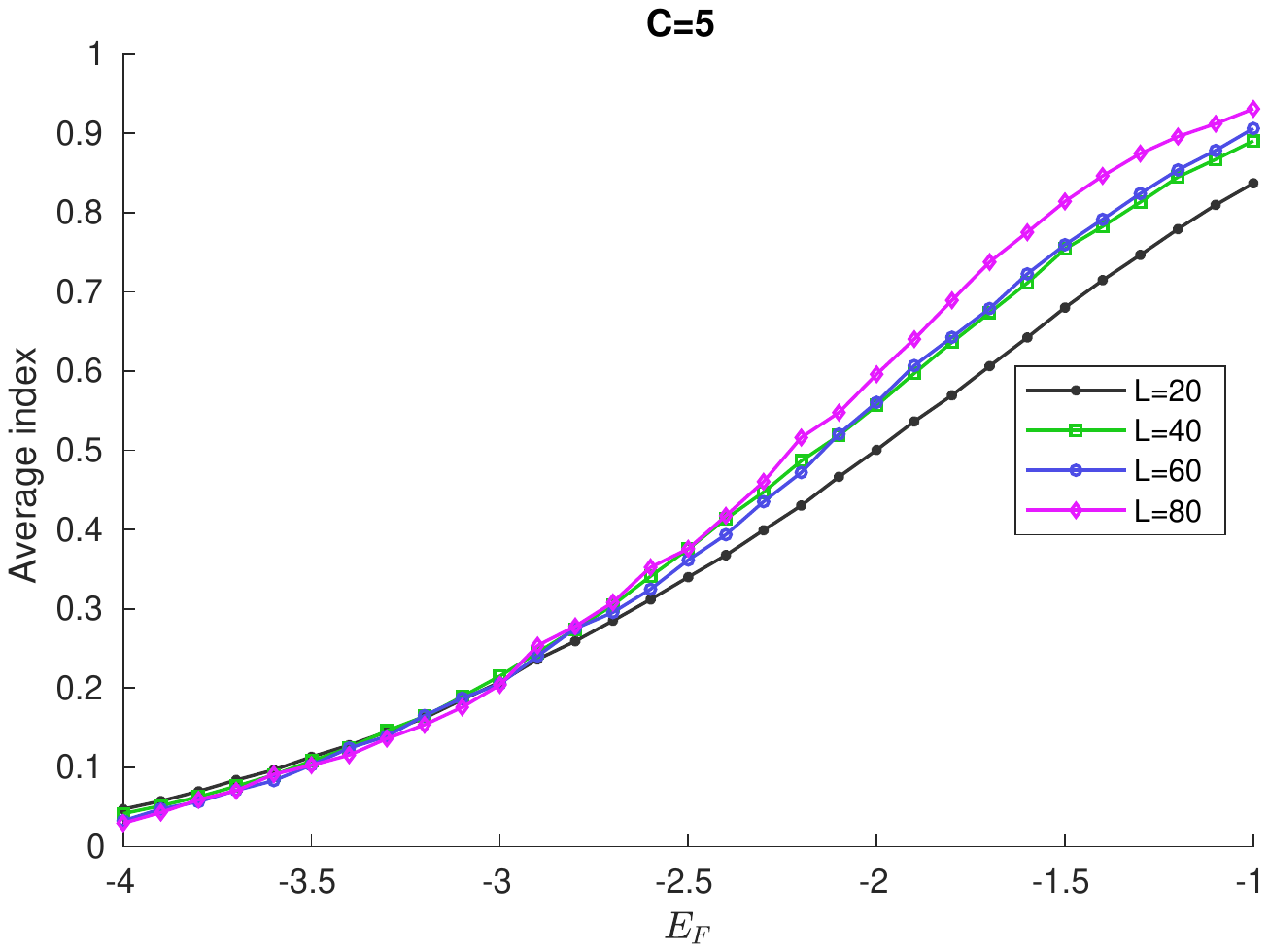}\includegraphics[viewport=22bp 15bp 410bp 285bp,clip,scale=0.6]{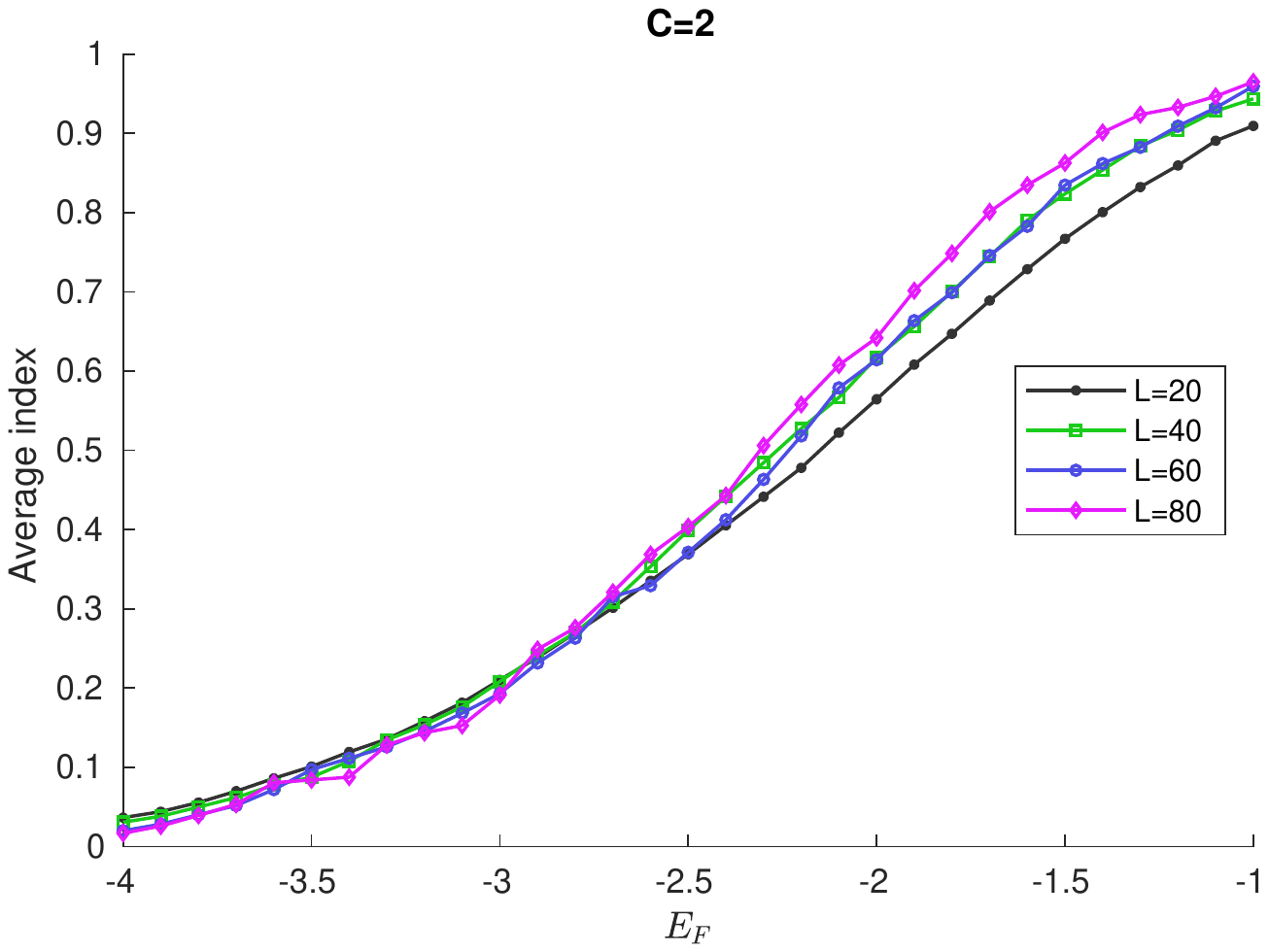}\\
\includegraphics[viewport=22bp 15bp 410bp 285bp,clip,scale=0.6]{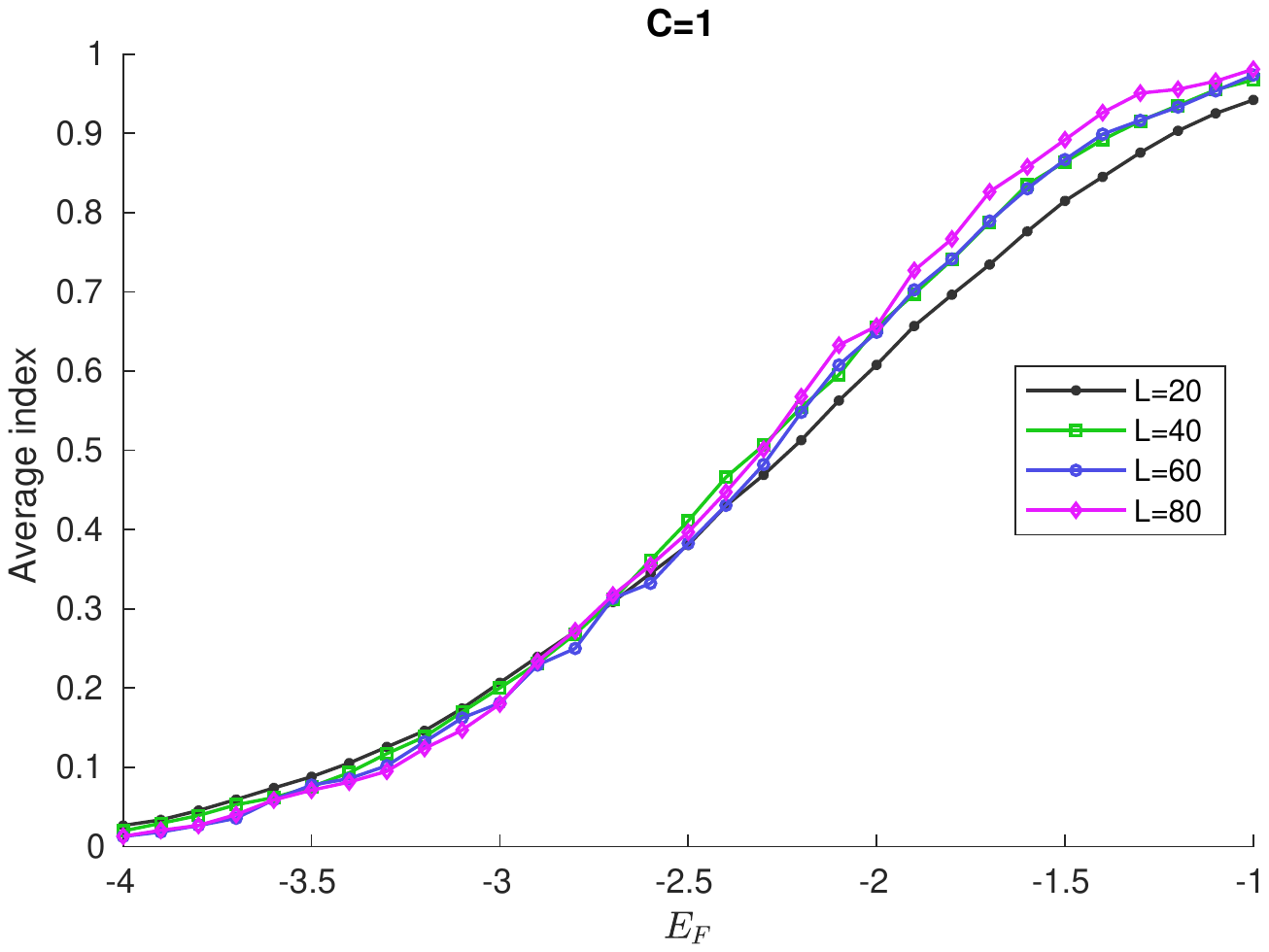}\includegraphics[viewport=22bp 15bp 410bp 285bp,clip,scale=0.6]{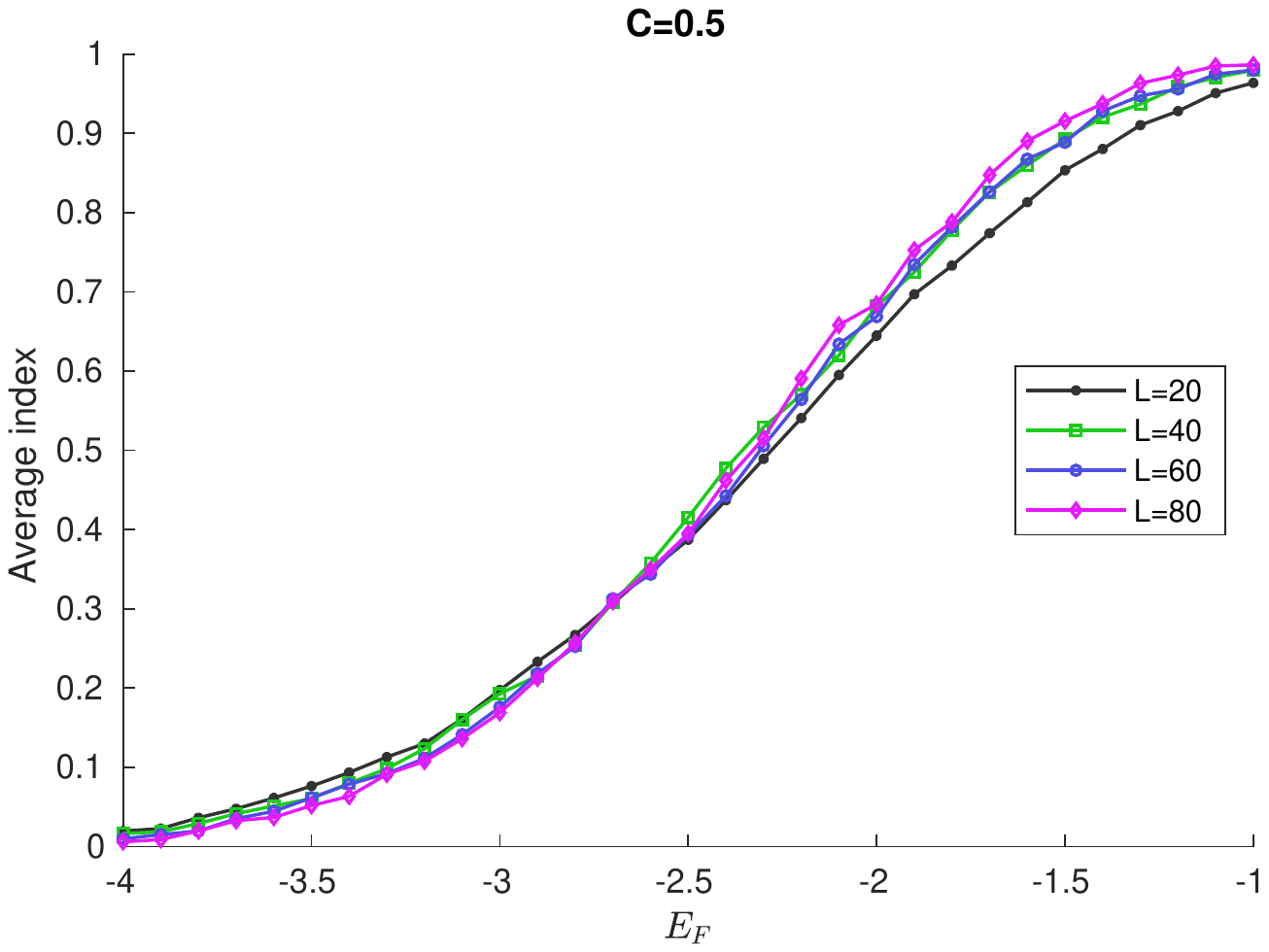}\\
\includegraphics[viewport=22bp 15bp 410bp 285bp,clip,scale=0.6]{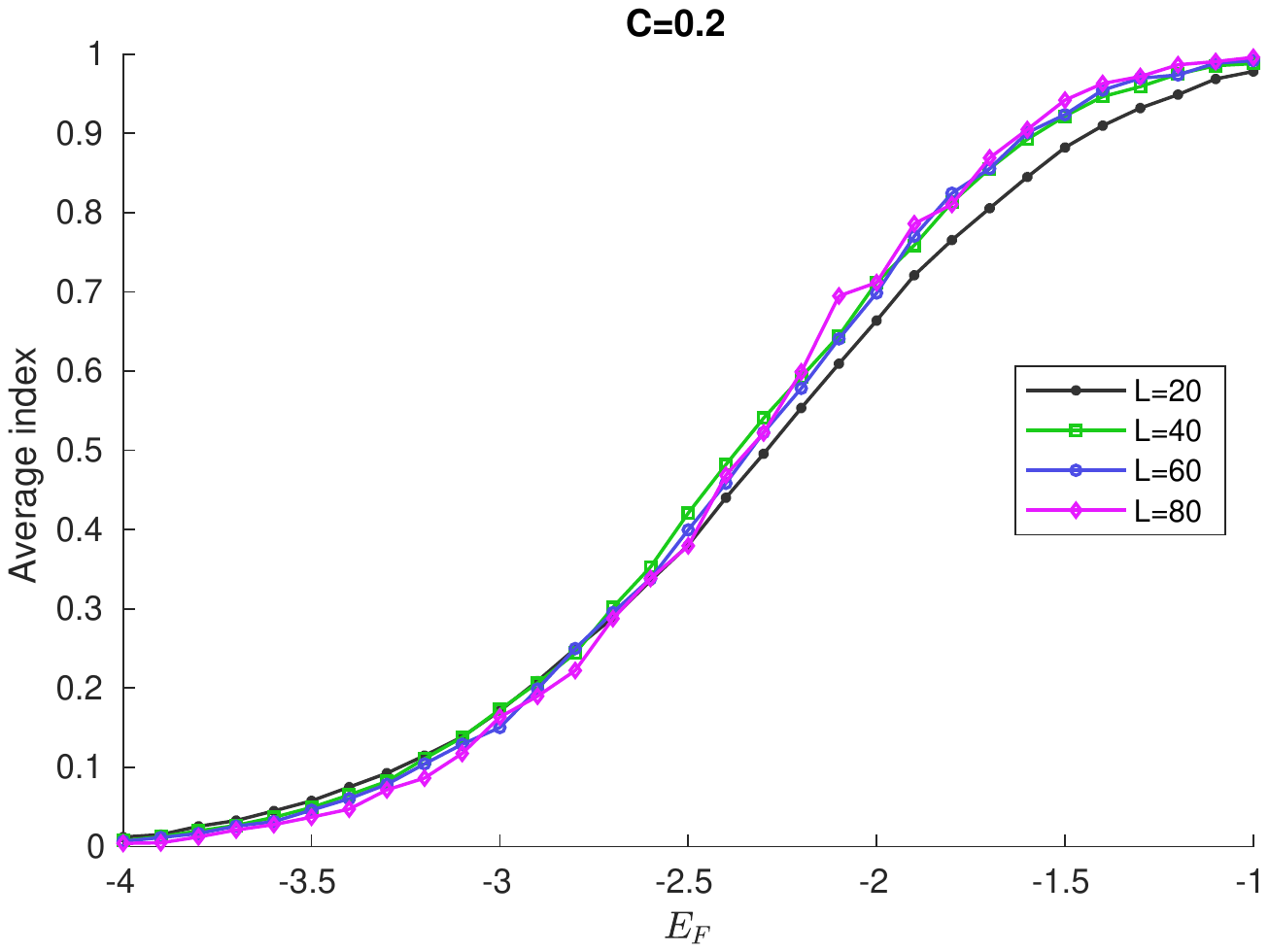}\includegraphics[viewport=22bp 15bp 410bp 285bp,clip,scale=0.6]{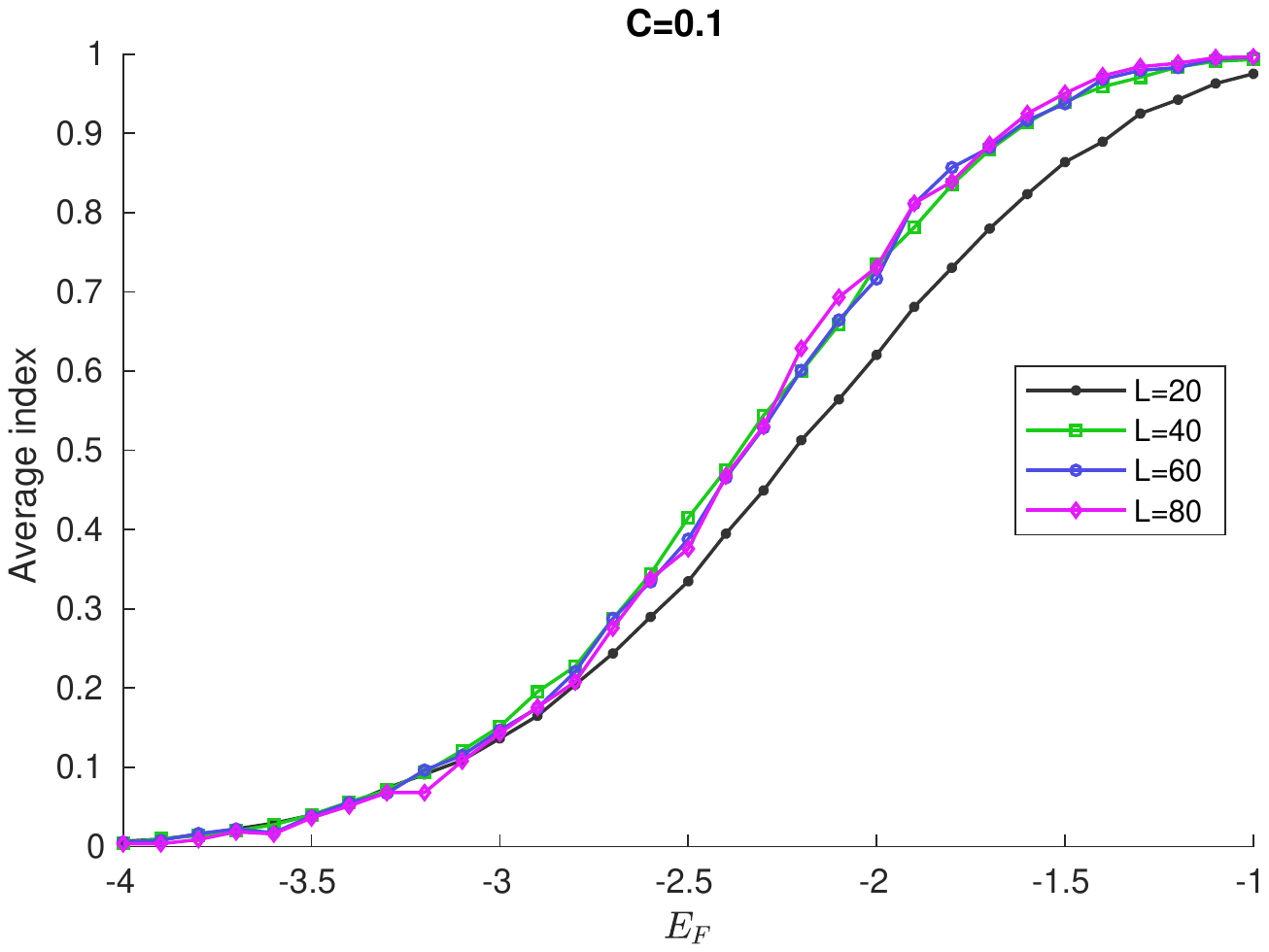}

\caption{Disorder averaged localizer index, with $\kappa=C/L$. \label{fig:kappa_C/L}}
\end{figure}

\begin{figure}
\noindent \begin{raggedright}
\includegraphics[viewport=22bp 15bp 410bp 285bp,clip,scale=0.6]{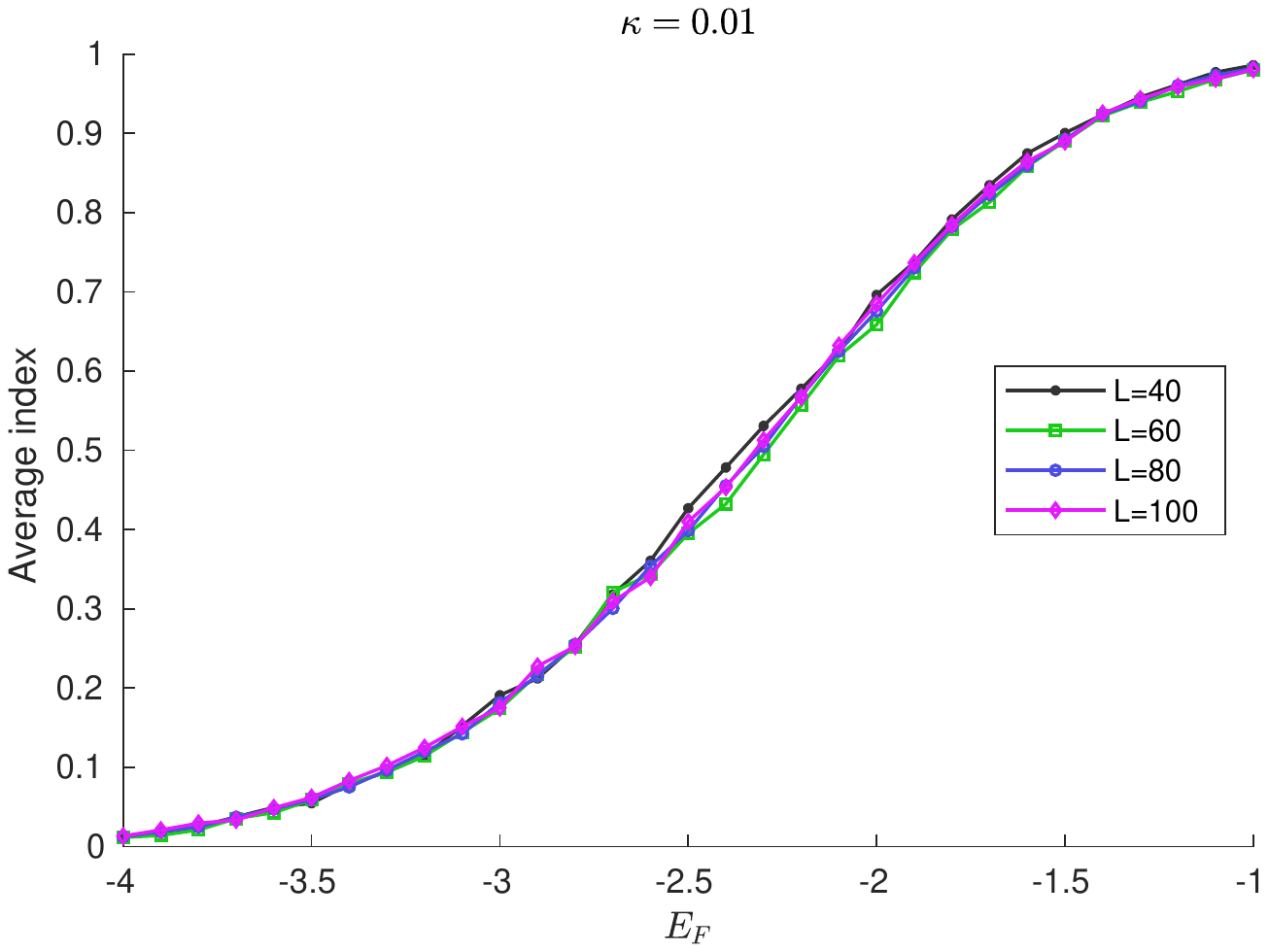}\includegraphics[viewport=22bp 15bp 410bp 285bp,clip,scale=0.6]{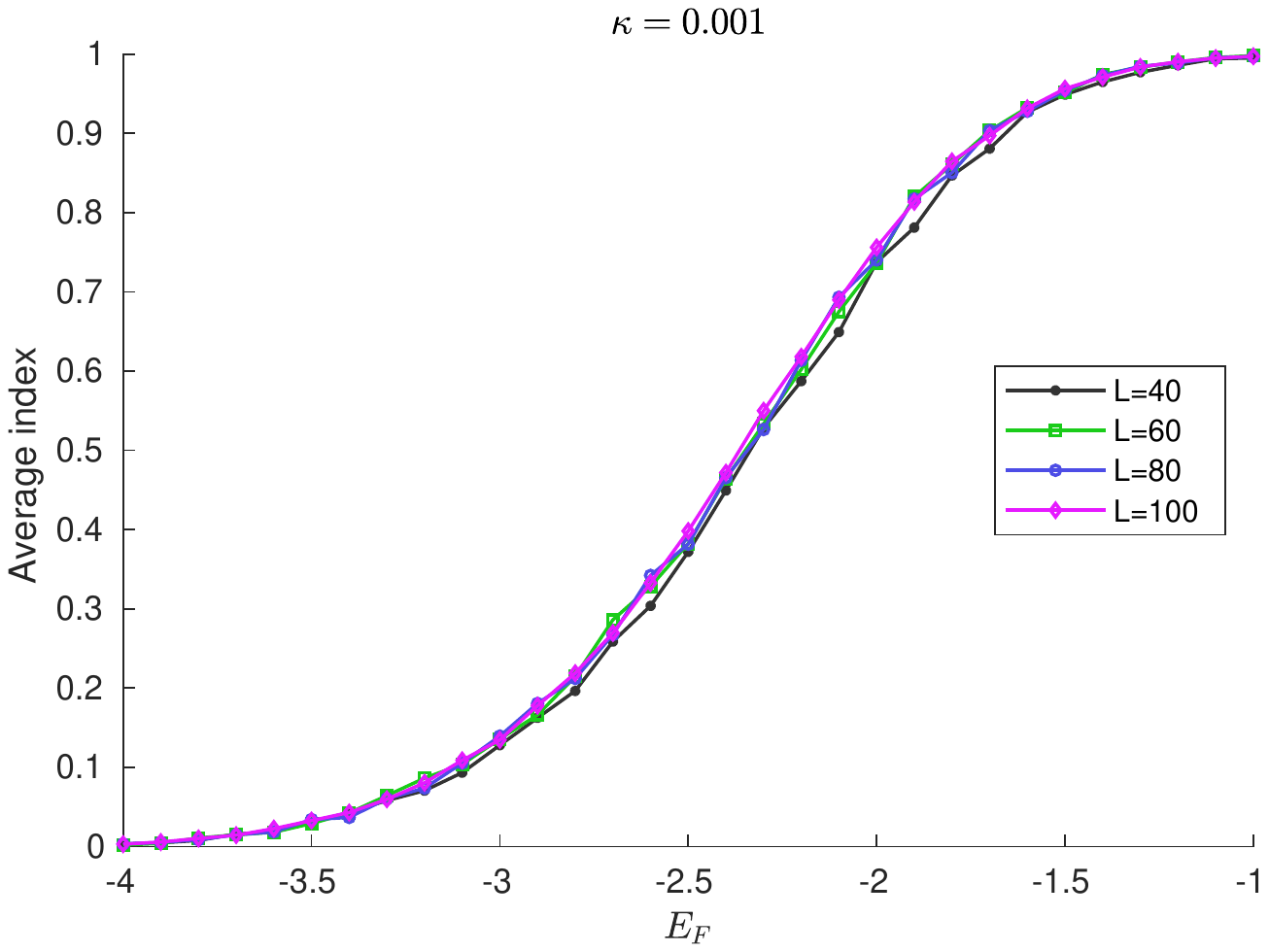}\\
\includegraphics[viewport=22bp 15bp 410bp 285bp,clip,scale=0.6]{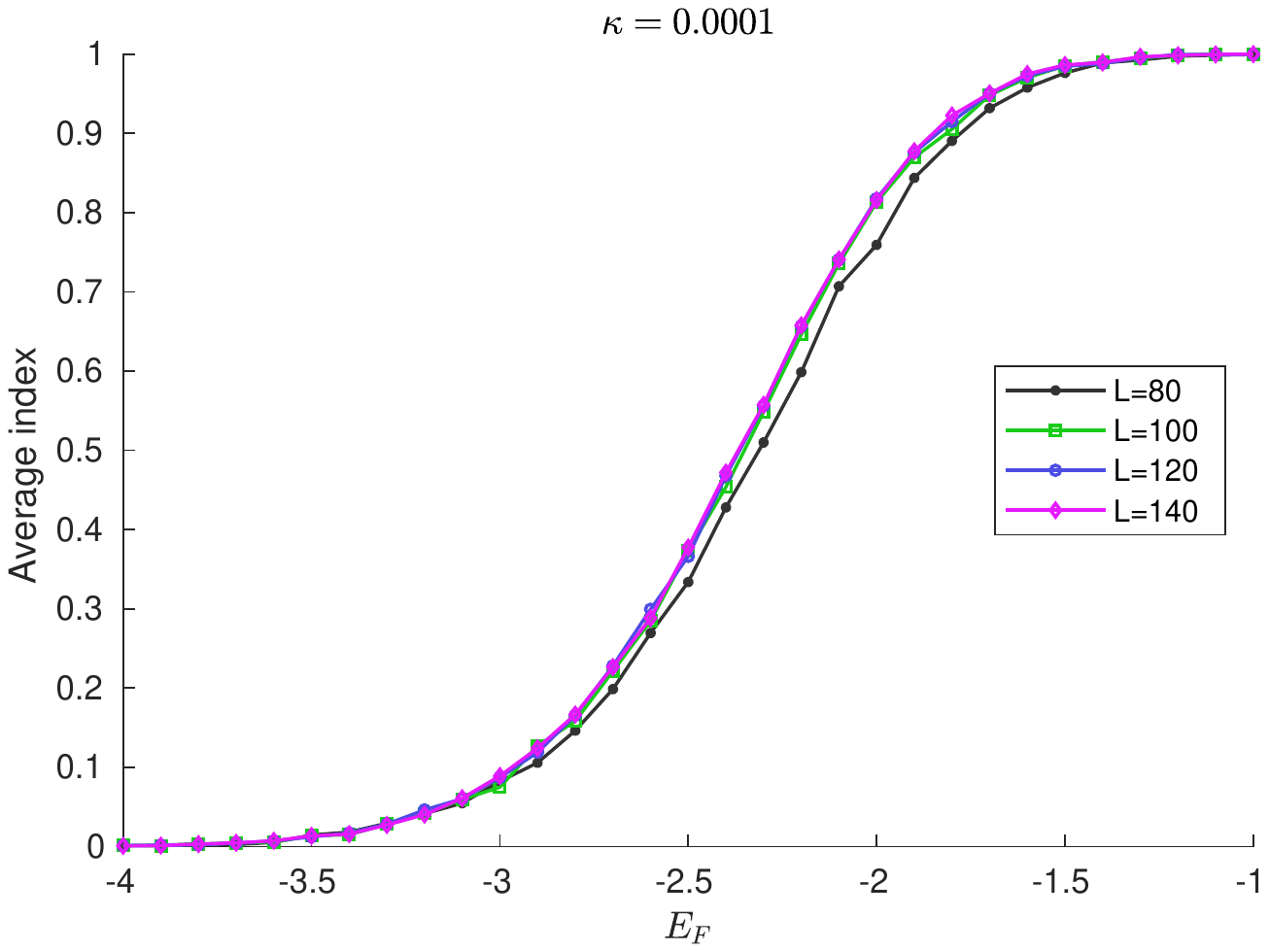}\includegraphics[viewport=22bp 15bp 410bp 285bp,clip,scale=0.6]{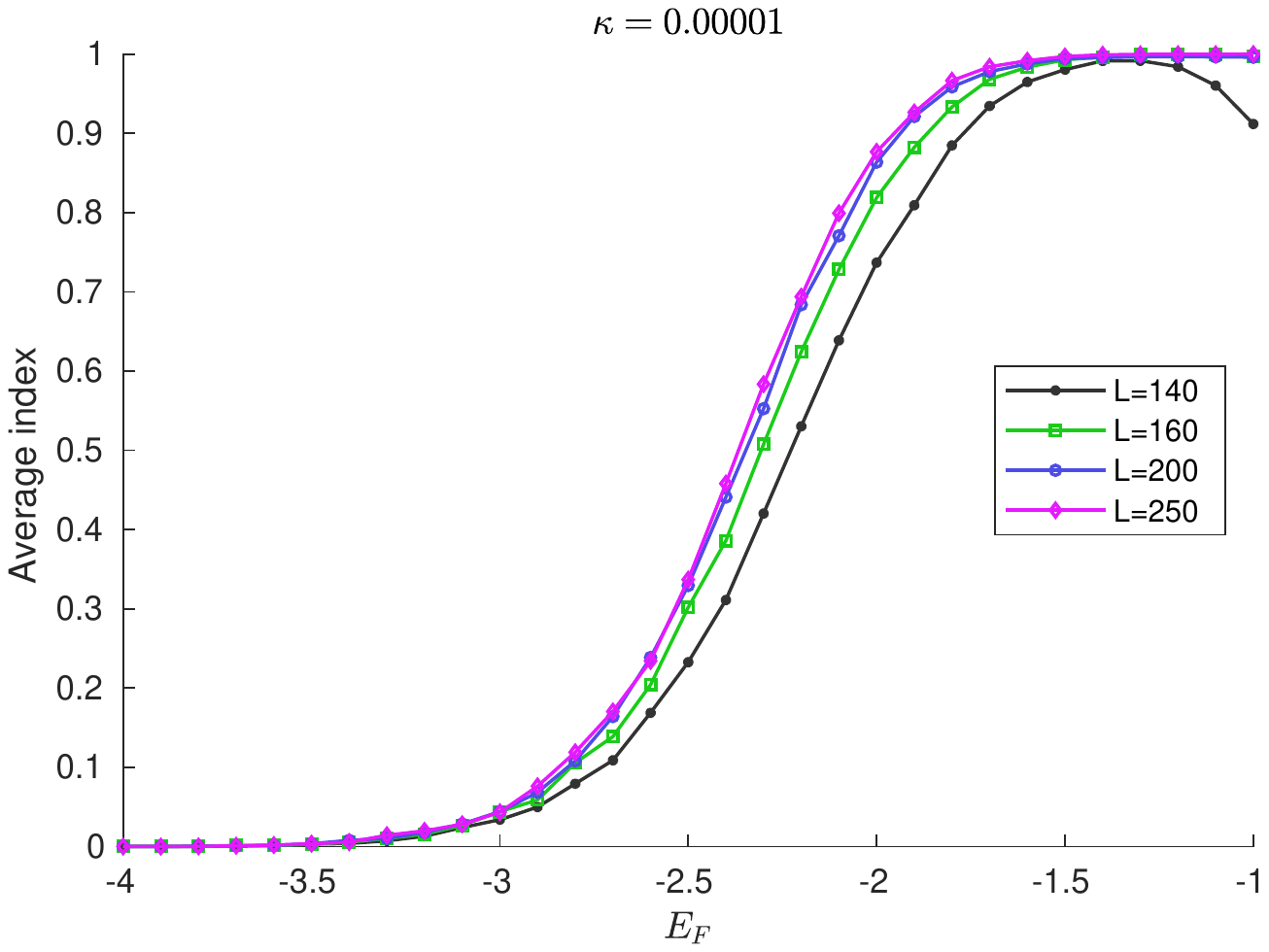}\\
\includegraphics[viewport=22bp 15bp 410bp 285bp,clip,scale=0.6]{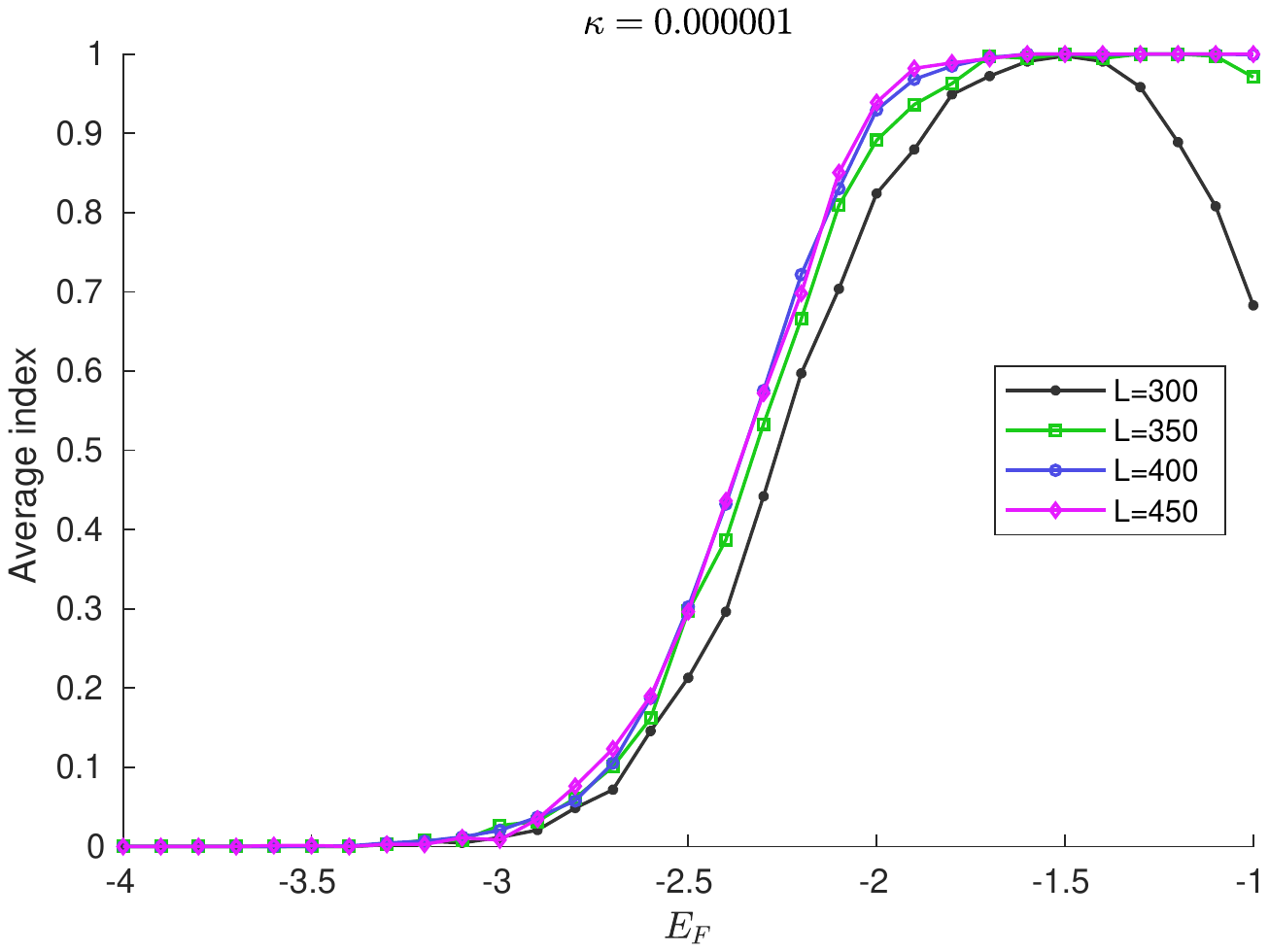}
\par\end{raggedright}
\caption{Various $\kappa$ values, searching for appropriate value for $L$.
\label{fig:Fix_kappa_increase_L}}
\end{figure}

We start with doing the disorder average study with $L=20,40,60,80$
using the localizer index, setting $\kappa=C/L$. We have an idea,
based on setting $\kappa$ for $L=20$, that we want $0.1\leq\frac{C}{20}\leq1$
and so $2\leq C\leq20$. Unfortunately, various values in and around
$C$ fail to lead to convincing results, as shown in Figure~\ref{fig:kappa_C/L}.
None of these look at all like Figure~\ref{fig:Bott-index-averaged}. 

Looking at the plots in Figure~\ref{fig:Fix_kappa_increase_L}, and
many others, it seemed the steeper plots correspond to smaller $\kappa$,
but not for all $L$. So perhaps it is $\kappa\rightarrow0$ here
that plays the role of $L\rightarrow\infty$ when working with the
Bott index. Various calculations, for example in \cite{loring2018bulk,L_S-B_finite_vol},
tell us that the localizer gap often converges when $L\rightarrow\infty$.
See Lemma~\ref{lem:error_Delta_H}.

Figure~\ref{fig:Fix_kappa_increase_L} shows what happens for various
values of $\kappa$, computing the disorder averaged localizer index
at a range of Fermi levels, as $L$ increase somewhat. We emphasize
that we stick with $\lambda_{1}=\lambda_{2}=0$ unless we have a good
reason, like a hole in our system, so that is the one choice we make
in this section and the last. These are too computationally intensive
to extend much further, but we have soft evidence that these plots
have essentially converged at the higher values for $L$ shown in
each instance. 

\begin{figure}
\includegraphics[viewport=20bp 0bp 410bp 275bp,clip,scale=0.6]{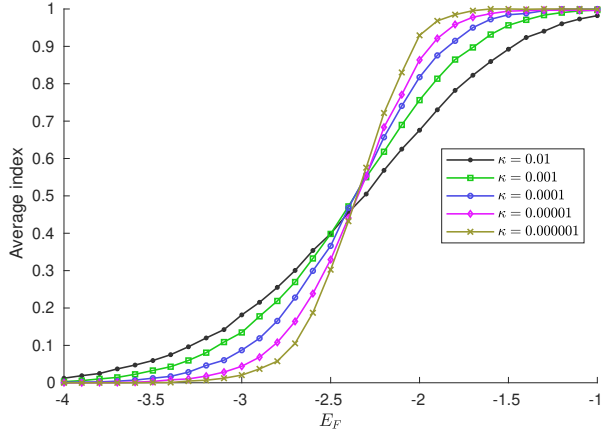}

\caption{Localizer index averaged over disorder as a function of the Fermi
level: $\kappa=0.01$ (16,955 samples with L=80); $\kappa=0.001$
(14,878 samples with L=100); $\kappa=0.0001$ (5,598 samples with
L=120); $\kappa=0.00001$ (8,031 samples with L=200); $\kappa=0.000001$
(870 samples with L=400). \label{fig:kappa-index-averaged}}
\end{figure}

There is some theory when one has a bulk gap of a known size, as in
\cite{LoringSchuBa_even_AI}. However, we have just a mobility gap.
Presumably, one could work in an appropriate $C^{*}$-algebra, find
a gapped Hamiltonian that lives in that $C^{*}$-algebra and can extend
the methods of \cite{LoringSchuBa_even_AI} to that setting. For now,
we are happy to gather numerical evidence.

In Figure~\ref{fig:kappa-index-averaged} shows the plots for various
decreasing $\kappa$. Notice they are decreasing by powers of ten.
The plots in Figure~\ref{fig:Fix_kappa_increase_L} were used to
deduce a reasonable values of $L$ to go with each value of $\kappa$.
It is perhaps the case that open boundaries really interact with disorder
and this is what forced us into such large systems to see plots to
create Figure~\ref{fig:kappa-index-averaged}.

\section{Local nature of the localizer \label{sec:Local-nature-localizer}}

Better algorithms are needed to make the spectral localizer effective
for three-dimensional systems, and indeed would help with two-dimensional
systems. Here is one possible improvement.

Suppose one is working on a two-dimensional system with $L\geq300$
for a specific $\boldsymbol{\lambda}$. It should be possible to compute
the gap and index of a further truncation of the system to smaller
size and use those results if that gap is large enough, and only if
this is small recomputing the gap and index on the full system. This
should be possible due to the local nature of the localizer, where
drastic alterations of the Hamiltonian away from the $(\lambda_{1},\lambda_{2})$
have little effect, as we will soon see. Also, it is the smaller values
of the localizer gap that need to be computed precisely.

A glance of Figures~\ref{fig:Pseudospectrum_qc} and \ref{fig:Pseudospectrum_qc-hole}
shows that, at least in one instance, something drastic like cutting
out a hole in the middle of a sample has little effect on the pseudospecturm
at the edge. An analytic expression of this fact is the following
lemma. This was proving jointly with Schulz-Baldes years ago, but
in this form it never found a place in our papers. Many similar, and
more technical results, are in our papers \cite{L_S-B_finite_vol,loringSchuba2019spectral,LoringSchuBa_even_AI}.

For simplicity, we state this regarding computing the spectral gap
for the localizer when $\boldsymbol{\lambda}=\boldsymbol{0}$, and
only for three matrices (so two physical dimensions). So let $L=L_{\boldsymbol{0}}$.
Notice also the lemma is stated in terms of changes to the square
of the localizer gap.
\begin{lem}
\label{lem:error_Delta_H} Suppose $X,$ $Y$ and $H$ are Hermitian
matrices, with $X$ and $Y$ commuting, and set 
\[
\gamma=\left\Vert L(X,Y,H)^{-1}\right\Vert ^{-2}.
\]
Let $Z=X+iY$. If $H_{0}$ is Hermitian with
\begin{equation}
\left\Vert |Z|^{-1}\left(HH_{0}+H_{0}H+H_{0}^{2}\right)|Z|^{-1}\right\Vert \leq C\label{eq:bound_on_H0}
\end{equation}
and
\[
\left\Vert \left[Z,H\right]\right\Vert \leq D
\]
and 
\begin{equation}
\left\Vert |Z|^{-1}\left[Z,H_{0}\right]|Z|^{-1}\right\Vert \leq E\label{eq:bound_on_commutator_H0}
\end{equation}
then
\[
|\gamma-\gamma_{0}|\leq\left(C+E\right)\gamma+\left(C+E\right)D
\]
where
\[
\gamma_{0}=\left\Vert L(X,Y,H+H_{0})^{-1}\right\Vert ^{-2}.
\]
If 
\[
\gamma>\frac{(C+E)D}{1-(C+E)}
\]
then the index at zero for $(X,Y,H+H_{0})$ will be the same as for
$(X,Y,H)$.
\end{lem}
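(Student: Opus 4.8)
The plan is to push the whole statement through the \emph{squared} localizer, where the hypothesis $[X,Y]=0$ forces a transparent $2\times2$ block form. Write $L=L(X,Y,H)$ and $\widetilde L=L(X,Y,H+H_0)$; both are Hermitian, and since the inverse norm of a nonsingular Hermitian matrix is its least singular value, we have $\gamma=\min\mathrm{spec}(L^2)$ and $\gamma_0=\min\mathrm{spec}(\widetilde L^2)$. Using $Z^{\dagger}Z=ZZ^{\dagger}=|Z|^2$ (valid exactly because $X$ and $Y$ commute) one computes
\[
L^2=\left[\begin{array}{cc} H^2+|Z|^2 & [Z,H]^{\dagger}\\ {[Z,H]} & H^2+|Z|^2\end{array}\right],
\]
together with the analogue for $\widetilde L^2$, so that
\[
\Delta:=\widetilde L^2-L^2=\left[\begin{array}{cc} HH_0+H_0H+H_0^2 & [Z,H_0]^{\dagger}\\ {[Z,H_0]} & HH_0+H_0H+H_0^2\end{array}\right].
\]
This $\Delta$ is Hermitian, and its blocks are precisely the operators in \eqref{eq:bound_on_H0} and \eqref{eq:bound_on_commutator_H0} before they are flanked by $|Z|^{-1}$; note the hypotheses presuppose $Z$ nonsingular.

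Next I would compare $\Delta$ against $L^2$ by conjugating with $P:=|Z|^{-1}\oplus|Z|^{-1}$. Then $P\Delta P$ is Hermitian with diagonal blocks of norm $\le C$ and off-diagonal blocks of norm $\le E$, so $\|P\Delta P\|\le C+E$. The key estimate is that $P^{-1}$ is dominated by $L^2$ on quadratic forms: since $H^2\oplus H^2\ge0$ and $\|[Z,H]\|\le D$,
\[
|Z|^2\oplus|Z|^2=L^2-(H^2\oplus H^2)-\left[\begin{array}{cc}0 & [Z,H]^{\dagger}\\ {[Z,H]} & 0\end{array}\right]\le L^2+D,
\]
whence $\|P^{-1}v\|^2=\langle v,(|Z|^2\oplus|Z|^2)v\rangle\le\langle v,L^2v\rangle+D$ for every unit vector $v$. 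Writing $\langle v,\Delta v\rangle=\langle P^{-1}v,(P\Delta P)P^{-1}v\rangle$ and combining gives $|\langle v,\Delta v\rangle|\le(C+E)(\langle v,L^2v\rangle+D)$ for all unit $v$.

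The inequality $|\gamma-\gamma_0|\le(C+E)\gamma+(C+E)D$ then falls out of min--max applied to $\widetilde L^2=L^2+\Delta$: a normalized $\gamma_0$-eigenvector $v$ of $\widetilde L^2$ gives $\gamma_0\ge(1-(C+E))\langle v,L^2v\rangle-(C+E)D\ge(1-(C+E))\gamma-(C+E)D$ when $C+E\le1$ (and the bound is immediate from $\gamma_0\ge0$ otherwise), while a $\gamma$-eigenvector of $L^2$ fed into the same identity gives $\gamma_0\le\gamma+(C+E)(\gamma+D)$. For the index claim, the hypothesis $\gamma>\frac{(C+E)D}{1-(C+E)}$ already forces $C+E<1$ and, by the first bound, $\gamma_0\ge(1-(C+E))\gamma-(C+E)D>0$, so $\widetilde L$ is nonsingular. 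To upgrade nonsingular endpoints to equal signatures (hence equal indices, the index being $\tfrac12\mathrm{Sig}$), I would run the straight-line homotopy $s\mapsto L(X,Y,H+sH_0)$, $s\in[0,1]$, a continuous path of Hermitian matrices from $L$ to $\widetilde L$; by Sylvester's law of inertia its signature is constant as long as the path stays nonsingular, and re-applying the estimate with $H_0$ replaced by $sH_0$ gives $\|L(X,Y,H+sH_0)^{-1}\|^{-2}\ge(1-(C+E))\gamma-(C+E)D>0$ throughout.

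The main obstacle will be justifying that $sH_0$ again satisfies \eqref{eq:bound_on_H0} with the \emph{same} constant $C$ for all $s\in[0,1]$. The commutator hypothesis scales harmlessly, $\||Z|^{-1}[Z,sH_0]|Z|^{-1}\|=s\,\||Z|^{-1}[Z,H_0]|Z|^{-1}\|\le E$, and the operator \emph{upper} bound is free from
\[
H\,sH_0+sH_0\,H+s^2H_0^2=s\bigl((H+H_0)^2-H^2\bigr)-s(1-s)H_0^2\le s\bigl((H+H_0)^2-H^2\bigr);
\]
the work is in the matching \emph{lower} bound, and this is exactly why \eqref{eq:bound_on_H0} is written with the combination $HH_0+H_0H+H_0^2=(H+H_0)^2-H^2$ rather than the raw $H_0$. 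If arguing cleanly along the path proves awkward, the fallback is to rerun the quadratic-form estimate directly for the perturbation $sH_0$, since only one-sided bounds on $\langle v,\Delta_s v\rangle$ actually enter the min--max step.
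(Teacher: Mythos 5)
Your treatment of the quantitative estimate $|\gamma-\gamma_0|\le(C+E)\gamma+(C+E)D$ is correct and is, in substance, the paper's own argument: square the localizer, isolate the perturbation $\Delta=\widetilde L^2-L^2$, use the hypotheses to show $\Delta$ is dominated in quadratic form by $(C+E)\bigl(|Z|^2\oplus|Z|^2\bigr)$, and then use $|Z|^2\oplus|Z|^2\le L^2+D$ (from $H^2\ge0$ and $\|[Z,H]\|\le D$) to transfer the bound to $L^2$. You package this more cleanly via conjugation by $P=|Z|^{-1}\oplus|Z|^{-1}$ and then apply min--max; the paper instead writes the same comparison as two-sided operator inequalities $\;(1-C-E)L^2-(C+E)D\le\widetilde L^2\le(1+C+E)L^2+(C+E)D\;$ (the displayed chain in the source has some arithmetic slips and a typo $(1+C)$ for $(1+C+E)$, and it never explicitly draws the $\gamma$ consequence, but the mechanism is identical to yours). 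Also note the paper's displayed sandwich for the off-diagonal block should read $\pm E\,|Z|^2$ rather than $\pm E\,|Z|$; your version has this right.

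For the signature/index conclusion, the obstacle you flag is real, and you are right to flag it: the straight-line path $s\mapsto L(X,Y,H+sH_0)$ is the same as $(1-s)L+s\widetilde L$, so it is the only natural homotopy, but hypothesis \eqref{eq:bound_on_H0} does \emph{not} obviously scale. Indeed writing $(H+sH_0)^2-H^2=s\bigl((H+H_0)^2-H^2\bigr)-s(1-s)H_0^2$, the lower operator bound acquires an uncontrolled term $-s(1-s)\,|Z|^{-1}H_0^2|Z|^{-1}$, and there is no bound on $\||Z|^{-1}H_0^2|Z|^{-1}\|$ implied by $C$, $D$, $E$ (e.g.\ $H_0=-2H$ gives $C=0$ but $H_0^2=4H^2$). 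Your proposed fallback of ``rerunning the quadratic-form estimate for $sH_0$'' runs into exactly the same missing term, since the one-sided bound you need on the diagonal block of $P\Delta_sP$ is the lower one. So as written there is a genuine gap in the index claim. To be fair, the paper's own proof stops after deriving the two inequalities on $\widetilde L^2$ and says nothing at all about why the signature cannot jump between $L$ and $\widetilde L$, so the homotopy step is silently omitted there; you went further than the source and in doing so exposed that the stated hypotheses do not straightforwardly propagate along the path. Closing this either needs an additional hypothesis controlling $|Z|^{-1}H_0^2|Z|^{-1}$ (or $|Z|^{-1}(HH_0+H_0H)|Z|^{-1}$ separately), or a different argument for signature stability that does not pass through $L(X,Y,H+sH_0)$.
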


\begin{proof}
Notice $Z$ is normal, and compute: 
\begin{align*}
L(X,Y,H) & =\left[\begin{array}{cc}
H & Z^{\dagger}\\
Z & -H
\end{array}\right],
\end{align*}
\begin{align*}
L(X,Y,H)^{2} & =\left[\begin{array}{cc}
H^{2}+|Z|^{2} & -\left[H,Z\right]^{\dagger}\\
-\left[H,Z\right] & H^{2}+|Z|^{2}
\end{array}\right],
\end{align*}
\begin{align*}
L(X,Y,H+H_{0})^{2} & =L(X,Y,H)^{2}+\left[\begin{array}{cc}
HH_{0}+H_{0}H+H_{0}^{2} & -\left[H_{0},Z\right]^{\dagger}\\
-\left[H_{0},Z\right] & HH_{0}+H_{0}H+H_{0}^{2}
\end{array}\right].
\end{align*}
We deduce from (\ref{eq:bound_on_H0}) that 
\[
-C|Z|^{2}\leq HH_{0}+H_{0}H+H_{0}^{2}\leq C|Z|^{2}.
\]
From (\ref{eq:bound_on_commutator_H0}) we deduce
\[
\left[\begin{array}{cc}
-E|Z| & 0\\
0 & -E|Z|
\end{array}\right]\leq\left[\begin{array}{cc}
0 & -\left[H_{0},Z\right]^{*}\\
-\left[H_{0},Z\right] & 0
\end{array}\right]\leq\left[\begin{array}{cc}
E|Z| & 0\\
0 & E|Z|
\end{array}\right].
\]
We find 

\begin{align*}
 & L(X,Y,H+H_{0})^{2}\\
 & =\left[\begin{array}{cc}
H^{2}+|Z|^{2} & -\left[H,Z\right]^{*}\\
-\left[H,Z\right] & H^{2}+|Z|^{2}
\end{array}\right]+\left[\begin{array}{cc}
HH_{0}+H_{0}H+H_{0}^{2} & -\left[H_{0},Z\right]^{*}\\
-\left[H_{0},Z\right] & HH_{0}+H_{0}H+H_{0}^{2}
\end{array}\right]\\
 & \geq\left[\begin{array}{cc}
(1-C-E)H^{2}+|Z|^{2} & -\left[H,Z\right]^{*}\\
-\left[H,Z\right] & (1-C-E)H^{2}+|Z|^{2}
\end{array}\right]+\left[\begin{array}{cc}
\left(-C-E\right)|Z|^{2} & 0\\
0 & \left(-C-E\right)|Z|^{2}
\end{array}\right]\\
 & =(1-C-E)B(X,Y,H)^{2}+\left[\begin{array}{cc}
0 & \left(C+E\right)\left[H,Z\right]^{*}\\
\left(C+E\right)\left[H,Z\right] & 0
\end{array}\right]\\
 & \geq(1-C-E)B(X,Y,H)^{2}-\left(C+E\right)D
\end{align*}
and
\begin{align*}
 & L(X,Y,H+H_{0})^{2}\\
 & =\left[\begin{array}{cc}
H^{2}+|Z|^{2} & -\left[H,Z\right]^{*}\\
-\left[H,Z\right] & H^{2}+|Z|^{2}
\end{array}\right]+\left[\begin{array}{cc}
HH_{0}+H_{0}H+H_{0}^{2} & -\left[H_{0},Z\right]^{*}\\
-\left[H_{0},Z\right] & HH_{0}+H_{0}H+H_{0}^{2}
\end{array}\right]\\
 & \leq\left[\begin{array}{cc}
(1+C+E)H^{2}+|Z|^{2} & -\left[H,Z\right]^{*}\\
-\left[H,Z\right] & (1+C+E)H^{2}+|Z|^{2}
\end{array}\right]+\left[\begin{array}{cc}
\left(C+E\right)|Z|^{2} & 0\\
0 & \left(C+E\right)|Z|^{2}
\end{array}\right]\\
 & =(1+C+E)B(X,Y,H)^{2}+\left[\begin{array}{cc}
0 & \left(-C-E\right)\left[H,Z\right]^{*}\\
\left(-C-E\right)\left[H,Z\right] & 0
\end{array}\right]\\
 & \leq(1+C)B(X,Y,H)^{2}+\left(C+E\right)D.
\end{align*}
\end{proof}

\section*{Acknowledgments}

This material is based upon work supported by the National Science
Foundation under DMS 1700102. The author thanks Liang Du for pointing
out the differing use of the constants in the BHZ model. Most of the
computing was done on machines at the Center for Advance Research
Computing at the University of New Mexico. The author is most grateful
to Hermann Schulz-Baldes for allowing Lemma~\ref{lem:error_Delta_H}
to appear here.

\end{document}